\documentclass[11pt]{article}

\usepackage{hyperref}
\usepackage[margin=1in]{geometry}
\usepackage{float}
\usepackage{adjustbox}
\usepackage{booktabs}
\usepackage{xcolor}
\usepackage{amsmath}
\usepackage{amssymb}
\usepackage{caption}
\usepackage{amsthm}
\usepackage{subcaption}
\usepackage{tikz}
\usepackage{lscape}  
\usepackage{setspace}
\usepackage{authblk}

\newtheorem{assumption}{Assumption}[section]
\newtheorem{example}{Example}[section]
\newtheorem{proposition}{Proposition}[section]
\newtheorem{corollary}{Corollary}[section] \newtheorem{definition}{Definition}[section]
\newtheorem{remark}{Remark}[section]
\usepackage[
    backend=biber,
    natbib=true,
    style=apa,
    uniquename=false,
    url=false, 
    doi=true,
    eprint=false
    ]{biblatex}


\addbibresource{bib_ambit.bib}
\AtEveryBibitem{\clearfield{issn}}

\title{An ambit field framework for the full panel of day-ahead electricity prices}
\author[1,2]{
Thomas K. Kloster
}
 \affil[1]{Department of Economics and Business Economics, Aarhus University}
 \affil[2]{CoRE, Center for Research in Energy: Economics and Markets}
\begin{document}
\maketitle

\begin{abstract}
This paper considers the often overlooked fact that electricity spot prices in individual European generation zones evolve as a high dimensional panel structure. A general continuous time framework is developed by formulating the panel as an ambit field indexed by a cylinder surface, where the cross sectional dimension is represented by a circle. This requires a treatment of ambit fields on manifolds, but the departure from Euclidean space allows for embedding intrinsic dependence structures into the index set in a flexible and parameter-free way, where the daily delivery periods have a canonical mapping onto the circle. The model is a natural space-time extension of volatility modulated Lévy-driven Volterra processes, which have previously been studied in the context of energy markets, and the pricing of electricity derivatives turns out to be essentially as analytically tractable as in the null-spatial setting. The space-time framework extends the scope of possible derivatives to products written on individual delivery periods, where spreads between these constitute an interesting example. We establish useful formulas for the pricing of various derivatives along with a simulation scheme, and study the model dependence structure in detail.
\end{abstract}

\textbf{Keywords:} Electricity markets, ambit fields, random fields on manifolds, financial derivatives, Volterra processes
\clearpage
\section{Introduction}
The liberalised and deregulated European power markets are by now a cornerstone of the electricity infrastructure in Europe, and helps ensure competitive price determination based on demand and supply. The most basic constituent of the market is the day-ahead (or simply spot) market, which settles the prices in each delivery period and generation zone for the coming day. In addition to this spot market, there is also a large and liquid market for various derivatives such as futures, options, and power purchase agreements, which offer important tools for risk management (see \textcites{BenthMonograph}{AidMonograph}). A unique feature of the electricity market is that, although the spot prices are quoted in ordered intervals of the coming day, such as the price of electricity delivered in the hourly intervals 00-01, 01-02 etc., these prices are \emph{determined and revealed} simultaneously and do therefore not have a natural notion of causal direction between them. Since October 2025, the prices in many European zones are on a quarter-hour basis, such that each day has 96 unique prices that are all determined simultaneously. Electricity spot prices therefore evolve as a rather high dimensional panel structure, and not as a conventional time series. In the literature on electricity derivatives, this panel structure is often overlooked in favour of modelling the average spot price, which may be considered a proxy for the overall spot price of electricity and also servers as the underlying of conventional futures contracts and, by extension, derivatives written on these. The purpose of this paper is to extend the traditional univariate \emph{continuous time} modelling framework to the full multivariate panel setting in a consistent and parsimonious way, while keeping enough structure and tractability to estimate the model on data and price and hedge derivatives. A consistent modelling framework for the full panel of prices allows practitioners and market participants to treat prices and products down to the individual delivery periods, and thereby assess their risk exposure and manage their positions more accurately. This is important in the modern market for electricity, where the generation is increasingly driven by volatile renewable sources and where the prices in single delivery periods can easily deviate substantially from the daily average. In the econometric strand of the literature, the panel structure has been studied, but the focus has primarily been on the temporal properties of the price series. For example, \textcite{Huisman} find that individual delivery periods have substantially different characteristics such as differing mean-reversion rates, and \textcite{ErgemenHaldrup2016} find evidence of long-range dependence in the time series. Further motivation for studying the full panel is given in \textcite{Raviv}, who argue that the prices in individual delivery periods contain useful predictive information, even for the average daily price.

It is of course possible to extend existing univariate models for derivatives pricing to a multivariate setting, by simply modelling multiple univariate processes. This is the approach taken in \textcite{VeraartVeraart2014}, who study a multivariate version of the model proposed in \textcite{BARNDORFF-NIELSENOLEE.2013Mesp}, although electricity derivatives are not explicitly considered. This approach is, however, not well-suited for modelling arbitrary or changing partitions of delivery periods, which is relevant as these are not fixed throughout time or generation zone. Furthermore, this approach requires careful specification and restriction of the resulting models to facilitate estimation, and quickly introduces an abundance of parameters. On the other hand, over-parametrized machine learning models are frequently and successfully utilized for forecasting in these settings (see e.g. \textcites{Forecasting1}{Forecasting2} for an overview and comparison of such methods), but suffer from the usual problems of lack of interpretability and few risk management tools for market participants who operate on the market for electricity derivatives, as it is not clear how to price or hedge derivatives in such models. 

Before introducing the proposed modelling framework, we briefly review the three main approaches to the pricing and hedging of electricity derivatives in the literature. The first and direct approach is to specify a reduced form model for the average spot price process as in \textcites{LuciaSchwartz2002}{GemanRoncoroni2006}{BenthKallsen2007}{Meyer-BrandisTankov2008}{BenthBNS}{BARNDORFF-NIELSENOLEE.2013Mesp}{BenthCARMA} and then derive prices of derivatives from this. The second approach of \textcites{BenthKoekebakker2008}{Barndorff-NielsenOleE.2014MEFb}{BenthKruhner1}{BenthParaschiv2018}{BenthVargiolu}{CuchieroEnergy2024} works around this by modelling directly the futures price under a pricing measure. It therefore removes a modelling layer which simplifies computations for derivatives written on the futures, but decouples the futures modelling from the spot price formation process. The third approach is the so-called structural models of \textcite{CarmonaStructural2}, where the average price process is typically specified as a function of some underlying factors, such as the prices of generation variables. Detailed derivations and comparisons of the various approaches may be found in \textcites{BenthMonograph}{CarmonaStructural}{AidMonograph}.
Our approach can be seen as a generalization of the first approach, where a model is specified for the full panel of spot prices, and prices of derivatives derived from this, which allows for characterizing the market price of risk via an appropriate change of measure. In fact, we find that the futures pricing formulas in our tempo-spatial setting are natural extensions of the formulas obtained in the univariate model of \textcite{BARNDORFF-NIELSENOLEE.2013Mesp}, with the added advantage that we may also derive prices of products written on individual delivery periods.

To summarize the preceding discussion, one of the main challenges in modelling the full panel of spot prices is to specify a consistent dynamic model with relatively few parameters that accommodates sufficiently flexible dependence -- both in time and cross-section -- while still being tractable enough to allow for the pricing and hedging of derivatives. To solve this challenge, the present paper introduces a continuous time model for the panel of spot prices that takes the natural cyclical nature of the market into account, by considering the spot price panel as a realization of a random field indexed by a continuum of days and delivery periods. This approach allows us to treat the full panel of prices using quite few parameters, while trivially accounting for arbitrary partitions of delivery periods. Concretely, we represent each delivery period as a point $h$ on a circle, and for a given time $t$ and delivery period $h$, we model the spot price $S_{t}(h)$ as an ambit field
\begin{equation}\label{eq:model}
S_{t}(h) = \int_{A_{t}(h)}\kappa (t,s,h,\xi)a_{s}(\xi) c(ds,d\xi) + \int_{A_{t}(h)}K(t,s,h,\xi)\sigma_{s}(\xi) L(ds,d\xi).
\end{equation}
Here, $L$ is a Lévy basis representing the price shocks in both the temporal and cross sectional dimension, $c$ is a measure to be specified, $\kappa,K$ are kernel functions that govern the dependence structure, and $a,\sigma$ are random fields that permit us to model stochastic seasonality and volatility effects. The crucial modelling ingredient that gives rise to a random field are the ambit sets $A_{t}(h)$, which, for each $(t,h)$, are truncated cylinder surfaces in $\mathbb{R}^{3}$ parametrized as
\begin{equation}\label{eq:ambit_set}
A_{t}(h) = \lbrace (s,\cos (\theta),\sin(\theta)) \mid s\in (-\infty,t], \; \theta \in (0,2\pi] \rbrace.
\end{equation}
This form of the ambit set imposes an underlying, but parameter-free, dependence structure on the delivery periods, as it places them on a circle in the plane, giving rise to an intrinsic and intuitive notion of distance between delivery periods. In particular, we explicitly impose that the last delivery period of a given day, $t$, is ``closer'' to the first delivery period of the following day $t+1$, than it is to the first delivery period on day $t$, which is a quite natural condition when considering that prices are based on electricity supply and demand in \emph{actual time}.

The framework outlined in \eqref{eq:model} is very general and embeds most of the previously proposed reduced form models if one considers the null-spatial case, which corresponds to the ambit type processes considered in \textcites{BARNDORFF-NIELSENOLEE.2013Mesp}{Barndorff-NielsenOleE.2014MEFb}{VeraartVeraart2014}{Bennedsen2017}. In particular, the framework nests any one-factor Lévy-driven Ornstein-Uhlenbeck process via the null-spatial kernel $K(t,s)=\mathbf{1}_{[0,t]}e^{-\lambda (t-s)}$, as well as the CARMA type processes of \textcites{KluppelbergCARMA}{BenthCARMA}. Besides generalizing many previously studied models to the panel data setting, the main novelty in terms of modelling lies in specifying the ambit sets in a way that naturally permits us to treat the underlying market structures, while maintaining the same degree of analytical tractability as in well-known univariate models. This also represents one of the first applications of ambit fields within finance to the genuine tempo-spatial setting, and in particular to more general spaces than simply Euclidean spaces.

We consider the first term in \eqref{eq:model} to represent non-stationarities such as seasonality and trend effects, where we note that it is possible to model these as also containing a stochastic component. Upon suitable de-seasonalization, we are then left with the second term, which should contain the remaining stylized facts of spot prices, such as extreme price spikes, potential long memory, mean-reverting behaviour, and occasional negative prices. The presence of negative prices becomes much more relevant when modelling individual price series compared to the average price process, and we therefore model the price level directly as opposed to the log-price. By choosing the model ingredients $L,K,\sigma$ appropriately, we may include these stylized facts while obtaining a suitably stationary field with a known moment structure. Correlation and memory properties in prices are governed by the kernel function $K$, and we consider how to choose this crucial quantity. To avoid model misspecification, we show that it is possible to specify $K$ in a semi-parametric way, such that we may learn the correlation structure from the data. The semi-parametric kernel is based on a generalization of the temporal kernel studied in \textcite{Bennedsen2017}, which in principle allow us to assess, via estimation, whether the panel is consistent with a semimartingale property for the ambit field in time. We set up a toy study based on German data where the semi-parametric kernel is estimated via a Whittle likelihood approach. The findings indicate that the semimartingale property is violated, which supports previous findings in the literature. As we have a view towards the world of derivatives, we also derive expressions for the prices of futures contracts and study how to ensure some stylized properties of these. This is accompanied by a structure preserving change of measure in the model, which serves as a first step towards characterizing the dynamics of the market price of risk in each delivery period. Given that we have a model for each delivery period, we also turn to the problem of pricing derivatives written on spreads between subsets of delivery periods. Such contracts  allow market participants to hedge risks associated to individual delivery periods, which is an aspect not covered in the conventional null-spatial setting. An appealing property of our framework is that, by treating the full panel of prices as one modelling object, the pricing of such \emph{within-day} products is automatically self-consistent and arbitrage free. Finally, we also provide a simulation scheme for the model, which is inspired by the method introduced in \textcite{Eyjolfsson2015}, but adapted to allow for singular kernel functions. The simulation scheme is used to provide some concrete illustrations of the modelling framework and the pricing of various derivatives. 

The rest of the paper is organized as follows. Section \ref{sec:model_setup} motivates the space-time interpretation of the panel of prices, and Section \ref{sec:ambit_on_manifolds} rigorously introduces the model \eqref{eq:model}, which requires a brief treatment of ambit fields on manifolds. Section \ref{sec:derivatives} treats the problem of pricing and hedging derivatives, deriving a structure preserving change of measure and analyzing both futures, within-day spreads, and leverage. Section \ref{sec:kernel} introduces the semi-parametric model and contains an estimation exercise on German data, where the kernel function is learned from data. Section \ref{sec:simulation} develops a simulation scheme for the proposed model, and several illustrative examples are provided. Section \ref{sec:conclusion} concludes. Mathematical proofs of the propositions throughout the paper are relegated to Appendix~\ref{app:proofs}.

\section{A random field interpretation for the panel of prices}\label{sec:model_setup}
Electricity spot prices in many European generation zones are determined each day on an auction, based on supply and demand. The spot prices are typically divided into hourly, half-hourly, or quarter-hourly prices, with each such timeslot simply being referred to as a delivery period, as the quoted spot price refers to the price of delivery (and thus consumption) of one MWh of electricity during that time. We let $H$ denote the total number of delivery periods in the given market under consideration and enumerate the ordered and non-overlapping delivery periods by $d=1,\ldots ,H$, such that period $d=1$ corresponds to the first delivery period of the day. For a given zone, the prices in all delivery periods are determined and revealed simultaneously on the day before they are valid. If we denote by $P_{t}(d)$ the spot price of one MWh of electricity delivered during period $d$ that is \emph{revealed} at time $t$, we then observe the full cross section $(P_{t}(d))_{d=1}^{H}\in \mathbb{R}^H$ simultaneously at a daily frequency. In this work, we introduce a natural notion of distance between delivery periods and use this to formulate a continuous time model for the full panel of spot prices in which the cross section is treated as one object. This is in contrast to a conventional multivariate approach which inevitably leads to a large amount of model parameters and restrictions on these (consider the simplest case of $H$ independent AR(1) processes, in which case the model already has $2H$ parameters). Our approach has several advantages compared to modelling the average price or using conventional panel data models, of which we highlight the following.
\begin{itemize}
\item The approach utilizes a priori knowledge of the market structure to embed dependence in a model-free manner.
\item Modelling each delivery period in continuous time allows for considering derivatives written on a subset of delivery periods, or even spreads between them, in a consistent manner.
\item The model trivially allows for an arbitrary, or even changing, partition of the day into delivery periods, as well as arbitrary maturities and settlement periods for derivatives.
\end{itemize}

It is well documented that the observed processes $t\mapsto P_{t}(d)$ exhibit strong autocorrelation for fixed $d$, and several approaches to capture this path dependence has been proposed by e.g. \textcites{HaldrupNielsen2006}{LongRange2}{ErgemenHaldrup2016}{Bennedsen2017}. However, the dependence structure in the cross sectional dimension $d\mapsto P_{t}(d)$ has mostly been neglected. 
\begin{figure}
	\centering
	\begin{subfigure}[b]{0.49\textwidth}
		\includegraphics[scale=0.4]{./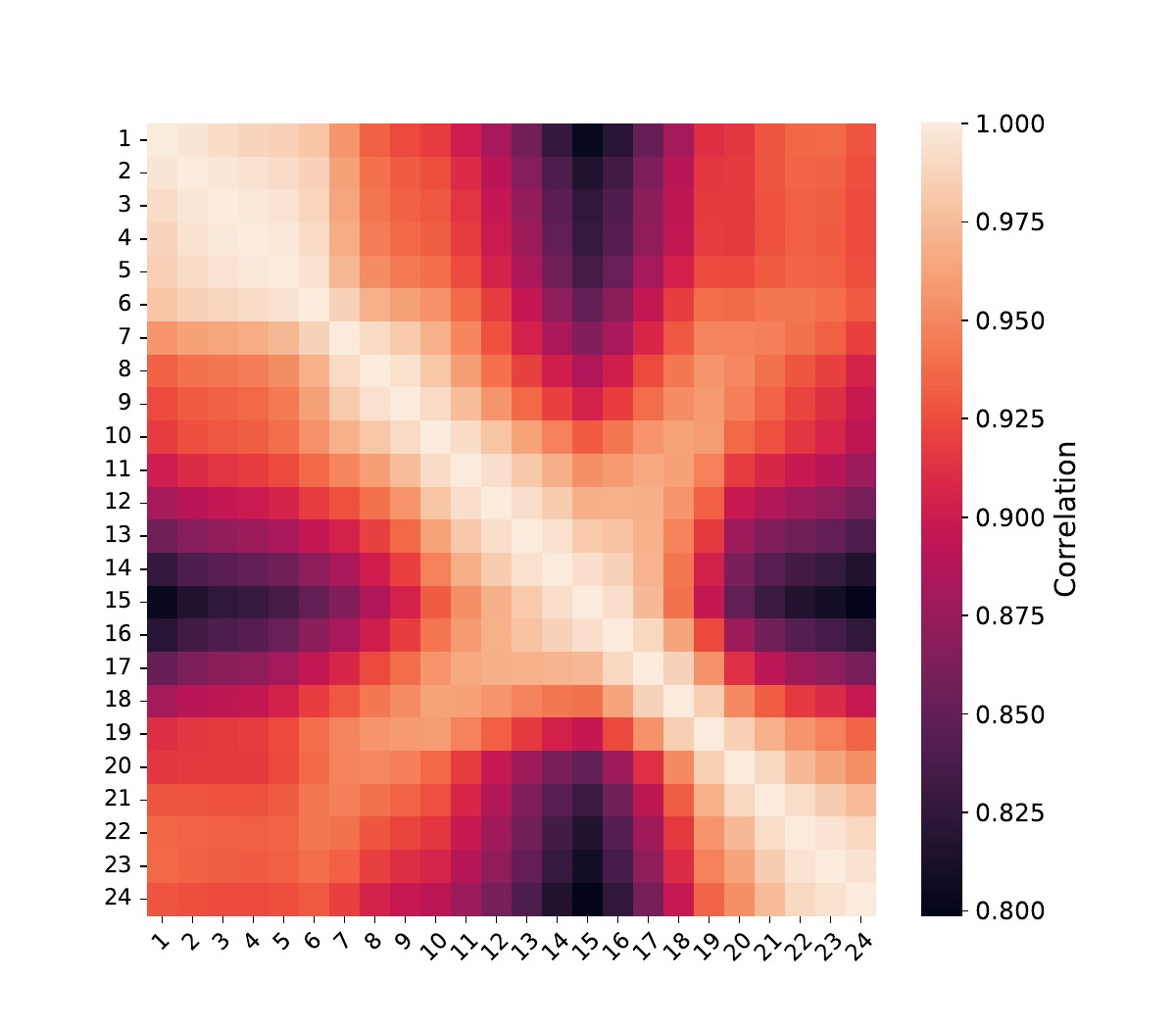}
	\end{subfigure}
	\begin{subfigure}[b]{0.49\textwidth}
		\includegraphics[scale=0.4]{./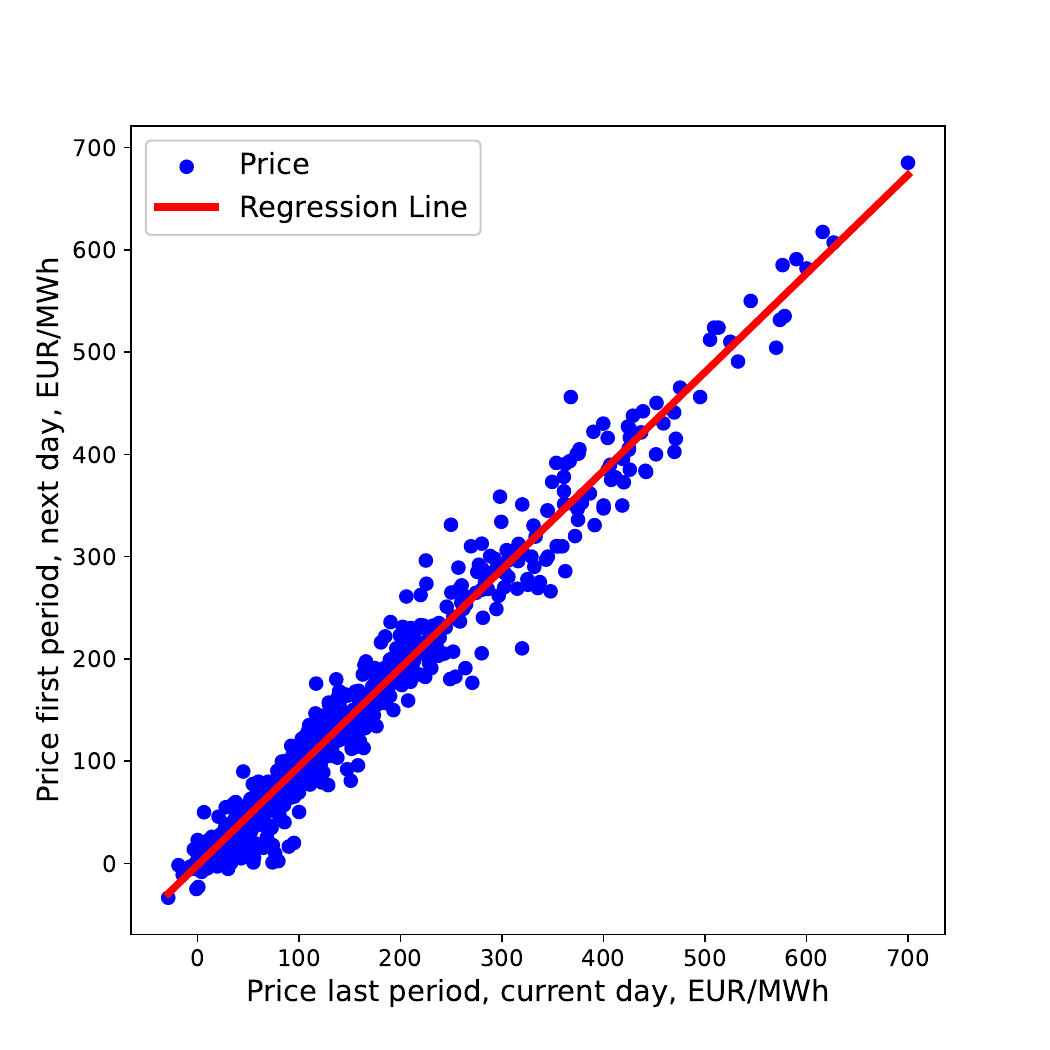}
	\end{subfigure}
	\caption{Correlation matrix of hourly spot prices on the German market from October 1st 2018 to October 1st 2024 (left), and regression of spot price during the first delivery period on the last spot price of the preceding day (right). Prices are shown in Euros (EUR) per megawatt hour (MWh).}\label{fig:correlation_matrix}
\end{figure}
As spot prices are determined based on supply and demand, it stands to reason that prices are highly correlated across hours that are close to each other \emph{in actual time}, as supply and demand rise and fall rather continuously across actual time. Figure \ref{fig:correlation_matrix} illustrates that, empirically, the full daily cross section is very highly and positively correlated, but also that periods that are close ``on the clock'' are more highly correlated. Figure \ref{fig:correlation_matrix} also shows that the last delivery period each day has very strong predictive power on the price in the first delivery period on the following day, as these two periods are close in actual time, in spite of being determined and revealed on different days. This entails a type of transversality condition that connects the spot price in the last delivery period, $P_{t}(H)$, to the spot price of the first delivery period of the following day, $P_{t+1}(1)$. We refer to this connection between late delivery periods on the preceding day and early delivery periods on the following day as \emph{cyclicality}. We summarize the stylized dependence structure in the panel in the following three conditions.
\begin{enumerate}
\item \textbf{Adjacency:} $P_{t}(d)$ and $P_{t}({d\pm 1})$ are highly correlated  for $d\in \lbrace 2,\ldots ,H-1\rbrace$.
\item \textbf{Cyclicality:} $P_{t}(1)$ and $P_{t-1}(H)$ are highly correlated. 
\item \textbf{Autocorrelation:} $P_{t-1}({d})$ and $P_{t}({d})$ are highly correlated.
\end{enumerate}
It may be possible to enforce a correlation structure that adheres to these three conditions by imposing a suitable multivariate model as in e.g. \textcite{VeraartVeraart2014}, but in this paper we embed the dependence structure in the model itself. A way of doing so, is to map each daily delivery period onto a circle via $d \mapsto\left( \cos \left(\frac{2\pi d}{H}\right), \sin \left(\frac{2\pi d}{H} \right) \right)$. All observation time points $(t,d)$ then correspond naturally to a unique point on a cylinder surface $\mathcal{C}$ in $\mathbb{R}^{3}$ as follows.
\begin{equation}\label{eq:C_set}
(t,d) \mapsto \mathcal{C} = \left\lbrace (t, \cos (\theta), \sin(\theta))\mid t\in \mathbb{R}, \; \theta\in (0,2\pi] \right\rbrace.
\end{equation}
Letting $h$ denote a point on the unit circle, we may then represent the panel of spot prices as a random field indexed by $\mathcal{C}$, $(S_{t}(h))_{(t,h)\in\mathcal{C}}$, where we note that $h=(h_1,h_2)$ is two-dimensional. We have changed the notation for spot prices to $S_{t}(h)$ to emphasize that this is an interpretation of the observed panel $P_{t}(d)$. In this view, we then observe a ``slice'' (or rather, equispaced discrete points on a slice) of electricity prices at a daily frequency, where the highly correlated delivery periods are close in terms of distance on $\mathcal{C}$. Note that we have simply normalized the radius of the cylinder surface $\mathcal{C}$ to $1$, as this simplifies the exposition. The index set $\mathcal{C}$ and the observations along this are illustrated in Figure \ref{fig:index_set}. By considering the panel as point-samples of a $\mathcal{C}$-indexed random field, we pass from a multivariate setting to an infinite dimensional setting, where both time and space is a continuum. The discrete observations on $\mathcal{C}$ then correspond to sampling the latent spot price process at fixed points in space-time. In the following we shall see that this simplifies the analysis since we only require the treatment of one infinite dimensional object, as opposed to many one-dimensional objects.
\begin{figure}[ht]
    \centering
    \begin{subfigure}[b]{0.45\textwidth}
        \centering
        \includegraphics[width=\textwidth]{./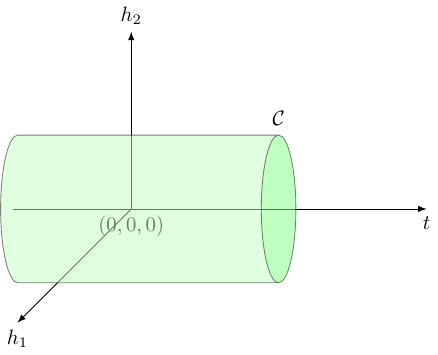}  
    \end{subfigure}
    \hspace{0.05\textwidth}
    \begin{subfigure}[b]{0.45\textwidth}
        \centering
        \includegraphics[width=\textwidth]{./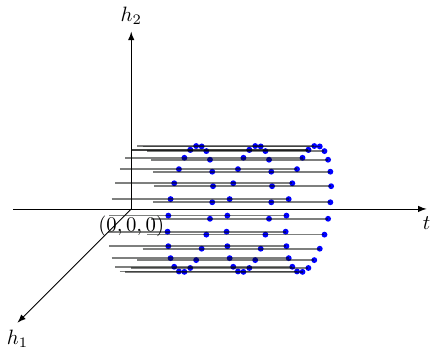}  
    \end{subfigure}
    \caption{Visualizations of the index set $\mathcal{C}$ (left), and of $H=24$ observations along slices of $\mathcal{C}$ represented by blue dots (right).}
    \label{fig:index_set}
\end{figure}
\begin{remark}
One could envision a stochastic process indexed by a helix, which wraps around the cylinder surface $\mathcal{C}$ to describe the same stylized properties. Such a structure would naturally also exhibit adjacency and cyclicality, but it also enforces a causal direction  between the same-day delivery periods, as information can then only propagate forwards in time. A helix would therefore not lead to a good description of the panel of prices, as it is crucial that the prices across all same-day delivery periods are based on the same information set. 
\end{remark}

\section{Ambit fields on manifolds}\label{sec:ambit_on_manifolds}
The interpretation of the panel of spot prices evolving as a random field indexed by the surface $\mathcal{C}$ provides a clear avenue for defining a model for the panel, namely by defining a suitable random field that may reproduce the desired stylized empirical facts. We will utilize the powerful theory of ambit stochastics to define $(S_{t}(h))_{(t,h)}$ as an ambit field on $\mathcal{C}$, which is a natural generalization of Lévy driven Ornstein-Uhlenbeck (OU) processes which have been popular as models for the average spot price (see e.g. \textcite{BenthMonograph}). We refer to the monograph \textcite{Ambit} and the references therein for a comprehensive introduction to ambit stochastics. Working on the cylinder surface $\mathcal{C}$ entails working with an ambit field on a 2-dimensional manifold embedded in $\mathbb{R}^3$, which is slightly different than most of the literature on ambit fields that primarily treats the Euclidean setting. This section is therefore devoted to carefully defining ambit fields on manifolds and the related properties that we will need. To this end, we require the notion of a Lévy basis, which may be considered as a generalization of Lévy processes to random measures, and these will act as the noise in the model. For more details on Lévy bases, we refer to \textcites{Rosinski1989}{Urbanik1966}{Pedersen2003}. In the following, we fix a probability space $(\Omega ,\mathcal{F},\mathbb{P})$ on which all random variables are defined, and let $(M,\mathcal{M})$ be a smooth Riemannian $k$-manifold where $\mathcal{M} = \mathcal{B}(M)$ is the Borel $\sigma$-algebra on $M$. Further, we define $\mathcal{M}_{b}$ as the $\delta$-ring generated by all totally bounded sets in $M$, on which Lévy bases will be defined. We shall always assume that $M$ is connected, oriented, and without boundary. A Riemannian manifold is equipped with a metric, $\rho$, which is a bilinear form on the tangent space $T_{p}M$, such that the map
$
M\ni p \to \rho_{p}(X_p,Y_p)\in \mathbb{R}
$,
is smooth for all locally defined smooth vector fields $X$ and $Y$. In our applied context, we shall adopt the extrinsic viewpoint that $M$ is embedded in $\mathbb{R}^n$ with smooth, injective embedding $\iota$, and that the metric $\rho$ is induced by this embedding. The metric is thus represented by the matrix $(\rho_{ij})_{1\leq i,j\leq k}$ with entries 
\[
\rho_{ij}(p) = \left\langle \frac{\partial\iota_{p} }{\partial u^{i}},\frac{\partial\iota_p }{\partial u^{j}} \right\rangle_{n}, \quad 1\leq i,j\leq k,
\]
where $u^{1},\ldots, u^{k}$ are local coordinates on $M$ and $\langle \cdot, \cdot\rangle_n$ is the usual inner product on $\mathbb{R}^n$. From now on, we shall write $\rho_M$ as the metric associated to $M$, where it is implicitly understood that $\rho_M$ is induced by the embedding of $M$ in $\mathbb{R}^n$, which is in turn determined by the application at hand. The Riemannian metric $\rho_M$ then defines the Riemannian measure $\lambda_M$, which allows us to do integration on the manifold. The Riemannian measure $\lambda_M$ on $M$ is such that
\begin{equation}\label{eq:volume_integral}
\int_{A}f(z) \lambda_{M}(dz) = \int_{U_A} f(\iota (u)) (\det \rho_{M}(u))^{\frac{1}{2}}\lambda^{k}(du), \quad A\in \mathcal{M},
\end{equation}
where $\lambda^{k}$ denotes the Lebesgue measure on $\mathbb{R}^{k}$ and $U_A=\iota^{-1}(A)$. In the case of the cylinder surface $\mathcal{C}$ given by \eqref{eq:C_set}, the integral in \eqref{eq:volume_integral} is simply a surface integral with a particularly convenient parametrization, and the Riemannian metric $\rho_{\mathcal{C}}$ on $\mathcal{C}$ is just represented by the $2$-dimensional identity matrix with determinant 1. The corresponding Riemannian measure $\lambda_{\mathcal{C}}$ is thus very simple, and it holds that $\int_{A}f(z) \lambda_{\mathcal{C}}(dz) = \int_{U_A}f(u,r(v))\lambda^{2}(du,dv)$, where we have used the notation $r(v)=(\cos (v),\sin(v))$ to denote the unit circle parametrization. To keep the notation light, we shall often use $r(v)$ as the unit circle parametrization throughout.

\subsection{Lévy bases on manifolds}
\begin{definition}[Lévy basis]
A Lévy basis $L$ on $(M,\mathcal{M})$ is a collection of random variables ${\lbrace L(A)\; : \; A\in \mathcal{M}_{b}\rbrace}$, such that 
\begin{enumerate}
\item For any sequence $A_1,A_2,\ldots $ of disjoint sets in $\mathcal{M}_{b}$ with $\bigcup_{j=1}^{\infty}A_j \in\mathcal{M}_{b}$ we have $L\left( \bigcup_{j=1}^{\infty}A_j\right) = \sum_{j=1}^{\infty}L(A_j)$, where the equality and convergence of the sum holds almost surely.
\item $L$ is \emph{independently scattered}, i.e. for disjoint elements $A_{1},\ldots ,A_{n}$ in $\mathcal{M}_{b}$ the random variables $L(A_1),\ldots ,L(A_n)$ are independent.
\item The law of the random variable $L(A)$ is infinitely divisible for any $A\in\mathcal{M}_{b}$.
\end{enumerate}
\end{definition}
Lévy bases are uniquely characterized in law by a so-called characteristic quadruplet (CQ) of the form $(\gamma,\Sigma,\nu,c)$, where $c$ is a measure on $(M,\mathcal{M})$ called the \emph{intensity measure}, $\gamma \in\mathbb{R}$, $\Sigma \geq 0$ are constants, and $\nu$ is a Lévy measure such that $\int_{\mathbb{R}\setminus \lbrace 0\rbrace} \min (1,x^{2})\nu (dx) < \infty$ \citep[see Definition 33 in][]{Ambit}. To any Lévy basis, we may therefore associate an infinitely divisible random variable $L'$, called the \emph{Lévy seed} of $L$, which has triplet $(\gamma,\Sigma,\nu)$ and thus cumulant function of Lévy-Khintchine type
\[
C(u;L') = \mathrm{i}u\gamma - \frac{1}{2}u^2 \Sigma + \int_{\mathbb{R}}\left( e^{\mathrm{i}ux}-1-\mathrm{i}ux\mathbf{1}_{[-1,1]}(x) \right) \nu (dx),
\]
where $C(u;X)=\log (\mathbb{E}[ e^{\mathrm{i}uX} ])$ for an arbitrary random variable $X$ and $u\in\mathbb{R}$. We will only consider CQ's where the elements are constant (except in Proposition~\ref{prop:esscher_transform}), and shall implicitly assume so in the following, although many results extend in a natural way to the inhomogeneous setting where $\gamma(\cdot),\Sigma(\cdot),\nu (dx,\cdot)$ are functions on $M$. The intensity measure $c$ is free to be specified, and may be regarded as a pure modelling parameter, which essentially governs the scaling of the infinitely divisible distribution of the seed $L'$ according to \citep[see][equation 5.10]{Ambit}
\[
C(u;L(A)) = C(u;L')c(A), \quad A \in \mathcal{M}_b.
\]
However, the intensity measure is closely related to stationarity properties of the Lévy basis, which have important implications for estimation and inference. When $(M,\mathcal{M})=(\mathbb{R}^{n},\mathcal{B}(\mathbb{R}^{n}))$, the only choice of $c$ that yields a stationary (translation invariant) Lévy basis is, up to scaling, the $n$-dimensional Lebesgue measure $\lambda^{n}$, which means that $c=\lambda^{n}$ is usually taken for granted. When $M$ is a Riemannian manifold embedded in Euclidean space, there are various operations under which the Lévy basis can be stationary, and the Lebesgue measure is not a feasible intensity measure as $M$ is typically a Lebesgue null set. We therefore require a more general definition of stationarity of Lévy bases than usual, which we provide in Definition~\ref{def:stationary}. It is straightforward to check that Definition~\ref{def:stationary} reduces to the usual translation invariance whenever $(M,\mathcal{M})=(\mathbb{R}^{n},\mathcal{B}(\mathbb{R}^{n}))$ and $G$ is the group of translations on $\mathbb{R}^{n}$. In Proposition~\ref{prop:levy_basis_stationary} we then give necessary and sufficient conditions on the intensity measure $c$ for a Lévy basis on $(M,\mathcal{M})$ to be stationary with respect to arbitrary operations. 
\begin{definition}[Stationary Lévy basis]\label{def:stationary}
Let $(M,\mathcal{M})$ be a Riemannian manifold, and let $G$ be a group of operations from $M$ onto $M$ with function composition being the group operation. Let $g\in G$, and let $A_{1},A_{2},\ldots ,A_{n}$ be a finite collection elements in $\mathcal{M}_{b}$ such that $g(A_1),g(A_2),\ldots, g(A_n)$ belong to $\mathcal{M}_{b}$. A Lévy basis $L$ on $(M,\mathcal{M})$ is said to be stationary with respect to $G$ if
\[
\left(
L(A_1), L(A_2), \ldots, L(A_n)
\right) \overset{d}{=}
\left(
L(g(A_1)), L(g(A_2)), \ldots , L(g(A_n))
\right).
\]
\end{definition}
\begin{proposition}\label{prop:levy_basis_stationary}
Let $M$ be a Riemannian manifold and let $L$ be a Lévy basis on $(M,\mathcal{M})$ with characteristic quadruplet $(\gamma ,\Sigma , \nu, c)$. Let also $G$ be a group of operations from $M$ onto $M$ with function composition being the group operation. Then $L$ is stationary with respect to $G$ if and only if $c$ is invariant with respect to $G$, i.e.
$c(A) = c(g(A))$ for $g\in G$ and $A\in \mathcal{M}$.
\end{proposition}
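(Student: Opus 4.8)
The plan is to reduce the equality in distribution demanded by Definition~\ref{def:stationary} to an equality of joint cumulant functions, and then to exploit the fact that these depend on the intensity measure $c$ in a transparent, linear fashion. First I would fix a finite collection $A_1,\ldots,A_n\in\mathcal{M}_b$ and generate from it the disjoint ``atoms'' $B_S=\bigcap_{i\in S}A_i\cap\bigcap_{j\notin S}A_j^{c}$ indexed by the nonempty subsets $S\subseteq\{1,\ldots,n\}$. Each such $B_S$ again lies in $\mathcal{M}_b$ (being a finite intersection and relative complement of members of the $\delta$-ring), the $B_S$ are pairwise disjoint, and $A_i=\bigsqcup_{S\ni i}B_S$. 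Property~1 of a Lévy basis then yields $L(A_i)=\sum_{S\ni i}L(B_S)$ almost surely, while property~2 makes the family $\{L(B_S)\}$ independent. Combining independence with $C(u;L(B_S))=C(u;L')\,c(B_S)$, I would compute the joint cumulant function
\[
\log\mathbb{E}\exp\Big(\mathrm{i}\sum_{i=1}^{n}u_i L(A_i)\Big)=\sum_{\emptyset\neq S\subseteq\{1,\ldots,n\}}C\Big(\sum_{i\in S}u_i;L'\Big)\,c(B_S),\qquad u\in\mathbb{R}^{n}.
\]
This exhibits the law of $(L(A_1),\ldots,L(A_n))$ as a functional depending on $c$ only through the numbers $c(B_S)$.

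Next I would use that every $g\in G$ is a bijection of $M$ (it is invertible in the group), so $g$ commutes with finite intersections and complements; hence the atoms generated by $g(A_1),\ldots,g(A_n)$ are precisely $g(B_S)$. Substituting into the display above, the joint cumulant function of $(L(g(A_1)),\ldots,L(g(A_n)))$ is identical except that each $c(B_S)$ is replaced by $c(g(B_S))$. The sufficiency direction is then immediate: if $c=c\circ g$ on $\mathcal{M}$, the two joint cumulant functions coincide for every $u$, so all finite-dimensional laws agree and $L$ is stationary with respect to $G$.

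For necessity I would specialize to $n=1$. Stationarity forces $C(u;L')\,c(A)=C(u;L')\,c(g(A))$ for all $u\in\mathbb{R}$ and all $A\in\mathcal{M}_b$ with $g(A)\in\mathcal{M}_b$. Provided the Lévy seed is nondegenerate, i.e.\ the triplet $(\gamma,\Sigma,\nu)$ is nontrivial so that $C(\cdot;L')\not\equiv 0$, there is some $u$ with $C(u;L')\neq 0$, and cancelling gives $c(A)=c(g(A))$ on this class of sets. The final step is to promote this equality from $\mathcal{M}_b$ to all of $\mathcal{M}$: since $\mathcal{M}_b$ is a $\delta$-ring that generates $\mathcal{M}=\mathcal{B}(M)$ and is closed under intersection (hence a $\pi$-system), the two $\sigma$-finite measures $c$ and $c\circ g$, agreeing there, agree on all of $\mathcal{M}$ by the standard uniqueness theorem for measures.

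I expect the main obstacle to be the necessity direction. Two points require care: first, it is the nondegeneracy of the Lévy seed that licenses cancelling $C(u;L')$, and this hypothesis must be made explicit, since a trivial seed gives $L\equiv 0$, which is stationary for \emph{any} $c$; second, one must extend the set equality from the $\delta$-ring $\mathcal{M}_b$, where stationarity is phrased and where $g(A)$ is controlled, to the full Borel $\sigma$-algebra. Here one should verify that $c$ is $\sigma$-finite (finite on totally bounded sets, with $M$ exhausted by such sets) and that $g$ being merely a bijection, not assumed continuous or a priori measure-preserving, does not obstruct the generating/uniqueness argument on the relevant class of sets.
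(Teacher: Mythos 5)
Your proof follows essentially the same route as the paper's: disjointify the $A_i$ into atoms, use independent scattering together with the identity $C(u;L(B))=C(u;L')\,c(B)$ to express the joint cumulant function linearly in $c$, and for necessity cancel $C(u;L')$ from the $n=1$ case. The two points you flag as requiring care --- nondegeneracy of the Lévy seed to license the cancellation, and the promotion of $c=c\circ g$ from the $\delta$-ring $\mathcal{M}_b$ to all of $\mathcal{M}$ via a $\pi$-system/uniqueness argument --- are both passed over silently in the paper's own proof, so your additional attention there is warranted rather than superfluous.
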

A consequence of Proposition \ref{prop:levy_basis_stationary} is that, when working on more general spaces than simply $\mathbb{R}^{n}$, we potentially have more freedom to choose the intensity measure, but the viable choices of $c$ depend on the operations under consideration. If we restrict attention to operations $g$ that are isometries on $M$, we have a prime candidate for an intensity measure in the Riemannian measure $\lambda_{M}$ on $M$, as isometries preserve the Riemannian metric, thus making $\lambda_{M}$ invariant with respect to all isometries by definition (see equation \eqref{eq:volume_integral}). The following proposition shows that, on the cylinder surface \eqref{eq:C_set}, the Riemannian measure in fact arises from pushing forward a translation invariant Lévy basis on $(\mathbb{R}^2,\mathcal{B}(\mathbb{R}^2))$ onto $\mathcal{C}$.
\begin{proposition}\label{prop:pushforward}
Let $M=\mathcal{C}$ with parametrization \eqref{eq:C_set}, and denote this parametrization by $\varphi$. Let $L$ be a translation invariant Lévy basis on $(\mathbb{R}^{2},\mathcal{B}(\mathbb{R}^{2}))$ with characteristic quadruplet $(\gamma ,\Sigma ,\nu, \lambda^{2})$. Then the pushforward of $L$, onto $M$ under $\varphi$ is (in law) a Lévy basis $L_{M}$ on $(M,\mathcal{M})$ with characteristic quadruplet $(\gamma ,\Sigma ,\nu,\lambda_{M})$.
\end{proposition}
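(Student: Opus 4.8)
The plan is to define the pushforward explicitly, verify the three axioms of a Lévy basis essentially mechanically, and reserve the real work for identifying the resulting intensity measure. Since $\varphi : \mathbb{R}\times(0,2\pi] \to \mathcal{C}$ is a bijective parametrization, I would set $L_M(A) := L(\varphi^{-1}(A))$ for $A \in \mathcal{M}_b$. The first thing to check is that this is well defined on the $\delta$-ring $\mathcal{M}_b$: because the circle factor is compact, a totally bounded subset of $\mathcal{C}$ projects to a bounded interval in the $t$-direction, so $\varphi^{-1}(A)$ is a bounded Borel subset of $\mathbb{R}^2$ and hence lies in the domain on which $L$ is defined.

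The three defining properties then transfer directly from $L$. Because $\varphi$ is a bijection, taking preimages preserves disjointness and commutes with countable unions: if $A_1,A_2,\ldots \in \mathcal{M}_b$ are disjoint with union in $\mathcal{M}_b$, then the sets $\varphi^{-1}(A_1),\varphi^{-1}(A_2),\ldots$ are disjoint with $\varphi^{-1}\big(\bigcup_j A_j\big) = \bigcup_j \varphi^{-1}(A_j)$, so the almost sure $\sigma$-additivity and the independent scattering of $L_M$ follow from the corresponding properties of $L$. Likewise, each $L_M(A) = L(\varphi^{-1}(A))$ is infinitely divisible because $L(\varphi^{-1}(A))$ is. This establishes that $L_M$ is a Lévy basis on $(M,\mathcal{M})$.

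It remains to identify the characteristic quadruplet, which is where the manifold structure enters. Using the cumulant representation $C(u;L(B)) = C(u;L')\,\lambda^2(B)$ for the flat basis, I would write $C(u;L_M(A)) = C(u;L(\varphi^{-1}(A))) = C(u;L')\,\lambda^2(\varphi^{-1}(A))$, which shows at once that the Lévy seed triplet $(\gamma,\Sigma,\nu)$ is preserved. The only remaining point is that the pushforward of Lebesgue measure coincides with the Riemannian measure, i.e. $\lambda^2(\varphi^{-1}(A)) = \lambda_{\mathcal{C}}(A)$. This is the crux of the argument, and it follows from the volume formula \eqref{eq:volume_integral} applied with $f = \mathbf{1}_A$: since the Riemannian metric of $\mathcal{C}$ in the parametrization $\varphi$ is the identity matrix, one has $(\det \rho_{\mathcal{C}})^{1/2} = 1$, so $\lambda_{\mathcal{C}}(A) = \int_{\varphi^{-1}(A)} 1\,\lambda^2(du) = \lambda^2(\varphi^{-1}(A))$. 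Hence $C(u;L_M(A)) = C(u;L')\,\lambda_{\mathcal{C}}(A)$, so $L_M$ has characteristic quadruplet $(\gamma,\Sigma,\nu,\lambda_{\mathcal{C}})$ as claimed.

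I do not anticipate any genuine obstacle. The degenerate seam of the parametrization at $\theta \in \{0,2\pi\}$ is a Lebesgue null set and therefore affects none of the measure identities, and the boundedness bookkeeping needed to stay inside $\mathcal{M}_b$ is routine given compactness of the circle. The single step requiring attention is the change-of-variables identity $\lambda^2 \circ \varphi^{-1} = \lambda_{\mathcal{C}}$, which is precisely the statement that the flat parametrization of the cylinder is measure preserving, and this is immediate from the unit determinant of $\rho_{\mathcal{C}}$ already noted below \eqref{eq:volume_integral}.
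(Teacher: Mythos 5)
Your proposal is correct and follows essentially the same route as the paper's proof: the key step in both is the cumulant computation $C(u;L(\varphi^{-1}(A))) = C(u;L')\,\lambda^{2}(\varphi^{-1}(A)) = C(u;L')\,\lambda_{\mathcal{C}}(A)$, using that the unit determinant of $\rho_{\mathcal{C}}$ makes $\varphi$ measure preserving. Your additional explicit verification of the three Lévy basis axioms for $L\circ\varphi^{-1}$ is a harmless (and slightly more careful) supplement to the paper's purely distributional argument, which only needs the "in law" conclusion stated in the proposition.
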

As is the case of Lévy processes, one may pose the question if the ``Lévy basis in law'' $L_M$ of Proposition~\ref{prop:levy_basis_stationary} has a suitably regular modification. From a fully general and intrinsic viewpoint, the answer is not clear, but Proposition \ref{prop:levy_ito} shows that on the cylinder surface, we may always construct the Lévy basis to admit a Lévy-Itô type decomposition. 
\begin{proposition}\label{prop:levy_ito}
Let $M=\mathcal{C}$ with parametrization \eqref{eq:C_set}, and denote this parametrization by $\varphi$. Then for a given CQ $(\gamma,\Sigma,\nu,\lambda_{M})$, there exists a Lévy basis $L_M$ on $(M,\mathcal{M})$ with such a CQ and with the following Lévy-Itô decomposition
\begin{equation}\label{eq:levy_ito1}
L_M(A) = \gamma\lambda_{M}(A) + W_{M}(A) + \int_{\mathbb{R}}y\mathbf{1}_{(-1,1)}(y)(N_{M}-\ell_M)(d y,A) + \int_{\mathbb{R}}y\mathbf{1}_{(-1,1)^\complement}(y)N_{M}(d y,A),
\end{equation}
where $W_M$ is a Gaussian Lévy basis cf. Example \ref{ex:Gaussian_levy_basis} with CQ $(0,\Sigma,0,\lambda_M)$ and $N_M$ is Poisson Lévy basis with CQ $(0,0,\nu,\lambda_M)$ and compensator $\ell_M$. 
\end{proposition}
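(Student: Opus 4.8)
The plan is to transport the classical Euclidean Lévy--Itô decomposition onto $\mathcal{C}$ along the parametrization $\varphi$, thereby upgrading the ``in law'' statement of Proposition~\ref{prop:pushforward} to a genuine pathwise construction. First I would fix a Lévy basis $L$ on $(\mathbb{R}^2,\mathcal{B}(\mathbb{R}^2))$ with CQ $(\gamma,\Sigma,\nu,\lambda^2)$. By the classical theory of infinitely divisible independently scattered random measures on Euclidean space (see e.g.\ \textcite{Pedersen2003}), such an $L$ admits a pathwise Lévy--Itô decomposition
\[
L(B) = \gamma \lambda^2(B) + W(B) + \int_{\mathbb{R}} y \mathbf{1}_{D}(y)(N-\ell)(dy,B) + \int_{\mathbb{R}} y \mathbf{1}_{D^c}(y) N(dy,B),
\]
valid for totally bounded Borel $B$, where $W$ is a Gaussian Lévy basis with CQ $(0,\Sigma,0,\lambda^2)$, $N$ is a Poisson random measure on $\mathbb{R}\times\mathbb{R}^2$ with intensity $\ell=\nu\otimes\lambda^2$, and $W,N$ are independent.

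Next I would define the pathwise pushforward $L_M(A):=L(\varphi^{-1}(A))$ for $A\in\mathcal{M}_b$. Since $\varphi$ restricts to a measurable bijection from $\mathbb{R}\times(0,2\pi]$ onto $\mathcal{C}$ with measurable inverse --- the seam $\{\theta=2\pi\}$ being $\lambda^2$-null and hence immaterial --- the preimage $\varphi^{-1}(A)$ is a totally bounded Borel subset of $\mathbb{R}^2$ whenever $A$ is totally bounded in $\mathcal{C}$, so $L_M$ is a well-defined random set function. Proposition~\ref{prop:pushforward} already guarantees that $L_M$ is a Lévy basis on $(M,\mathcal{M})$ with the claimed CQ $(\gamma,\Sigma,\nu,\lambda_M)$, and in particular records the change-of-variables identity $\lambda^2\circ\varphi^{-1}=\lambda_M$ that I will use repeatedly.

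I would then push forward each building block separately, setting $W_M(A):=W(\varphi^{-1}(A))$ and $N_M(C\times A):=N(C\times\varphi^{-1}(A))$ for $C\in\mathcal{B}(\mathbb{R})$, $A\in\mathcal{M}_b$, with compensator $\ell_M:=\nu\otimes\lambda_M$. Here $W_M$ is Gaussian (cf.\ Example~\ref{ex:Gaussian_levy_basis}) because the deterministic reindexing by $\varphi^{-1}$ preserves joint Gaussianity, and its variance measure is $\Sigma\cdot(\lambda^2\circ\varphi^{-1})=\Sigma\lambda_M$, giving CQ $(0,\Sigma,0,\lambda_M)$; while $N_M$, being the image of the Poisson random measure $N$ under $\mathrm{id}_{\mathbb{R}}\times\varphi$, is Poisson by the mapping theorem, with intensity $\nu\otimes(\lambda^2\circ\varphi^{-1})=\nu\otimes\lambda_M$ and hence compensator $\ell_M$. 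Independence of $W_M$ and $N_M$ is inherited from that of $W$ and $N$. Applying $\varphi^{-1}$ term by term to the Euclidean decomposition --- the pushforward being linear in $L$ and compatible with the compensated integral, since the integrands depend only on the jump size $y$ and the spatial set is merely relabeled --- yields exactly \eqref{eq:levy_ito1}, using $\gamma\lambda^2(\varphi^{-1}(A))=\gamma\lambda_M(A)$.

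The hard part will be verifying that the pushforward preserves the \emph{individual} Gaussian and Poisson components with the correctly transformed compensator, rather than merely preserving the overall law of $L_M$, which is all that Proposition~\ref{prop:pushforward} delivers. This rests on the Poisson mapping theorem together with the identity $\lambda^2\circ\varphi^{-1}=\lambda_M$; the attendant measure-theoretic bookkeeping --- measurability of $\varphi^{-1}$, the null seam, and the commutation of the pushforward with the compensated integral --- is where the care is needed, although each such point is standard once isolated.
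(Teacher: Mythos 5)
Your proposal is correct and follows essentially the same route as the paper: take the Euclidean Lévy--Itô decomposition of a Lévy basis on $(\mathbb{R}^2,\mathcal{B}(\mathbb{R}^2))$ from \textcite{Pedersen2003} and push it forward onto $\mathcal{C}$ along $\varphi$, defining $L_M = L\circ\varphi^{-1}$. You are in fact somewhat more careful than the paper's own (very terse) proof, which simply invokes Proposition~\ref{prop:pushforward} on the right-hand side; your explicit component-wise treatment --- joint Gaussianity preserved under deterministic reindexing, the Poisson mapping theorem for $N_M$ with intensity $\nu\otimes\lambda_M$, and inherited independence --- is exactly the bookkeeping needed to upgrade the in-law pushforward to the claimed pathwise decomposition.
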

\begin{example}[Gaussian Lévy basis]\label{ex:Gaussian_levy_basis}
Consider a Lévy basis $L$ on $(M,\mathcal{M})$ with CQ $(0,\Sigma,0,\lambda_{M})$. Then we call $L$ a Gaussian Lévy basis and it holds that $L(A) \sim N(0, \lambda_{M}(A)\Sigma)$ for $A\in\mathcal{M}_{b}$.
\end{example}
\begin{example}[Normal inverse Gaussian Lévy basis]\label{ex:NIG_levy_basis}
Consider a Lévy basis $L$ on $(M,\mathcal{M})$ with CQ $(\gamma ,0,\nu,\lambda_{M})$ where
\[
\gamma = \mu + \frac{2}{\pi}\delta \alpha \int_{0}^{1}\sinh \left( \beta y\right)K_{1}(\alpha y)dy, \quad
\nu (dx) = \frac{\delta\alpha}{\pi \lvert x\rvert}K_{1}(\alpha \lvert x \rvert)e^{b x}dx,
\]
where $K_{v}$ denotes the modified Bessel function of the second kind with index $v$ and the parameters satisfy $ \mu \in\mathbb{R},\; \delta >0, \; 0\leq \beta < \alpha$. Let $L'$ be an infinitely divisible random variable with cumulant function
\[
C(u;L') = \mathrm{i}u\mu + \delta \left( \sqrt{\alpha^2 - \beta^2} - \sqrt{\alpha^2 - (\beta + \mathrm{i}u)^2} \right), 
\]
which corresponds to a normal inverse Gaussian (NIG) distribution. We say that $L'\sim NIG(\alpha,\beta,\mu,\delta)$. Then we call $L$ a NIG Lévy basis and it holds that $L$ has Lévy seed $L'$, and that
\[
L(A) \sim NIG(\alpha,\; \beta ,\; \lambda_{M}(A)\mu, \; \lambda_{M}(A)\delta ), \quad A\in \mathcal{M}_{b}.
\]
\end{example}
\subsection{Ambit fields}
To define ambit fields, we will rely on the $L^2$ stochastic integral of \textcite{Walsh1986}, which has very simple integrability criteria and allows for sufficiently general integrands, along with extensions to integrals over infinite sets. The construction parallels Itô's theory for square integrable martingales, and for Lévy bases on $(\mathbb{R}_{+},\mathcal{B}(\mathbb{R}_+))$ one indeed recovers the classical theory. All we shall require is the mild assumption that the Lévy seed has finite second moment, which we henceforth assume. We shall also impose that the manifold $M$ has a space-time structure, which is in particular satisfied by the cylinder surface $\mathcal{C}$. 
\begin{assumption}\label{ass:space_time_structure}
We assume that the manifold $M$ is of the form $M=\mathbb{R}\times \Pi$ for a Riemannian manifold $\Pi$. The first coordinate is interpreted as time and the second as space. This entails that the standard Riemannian metric has a product structure, and hence the Riemannian measure on $M$ will be of the form $\lambda_{M}=\lambda^{1}\otimes \lambda_{\Pi}$, where $\lambda_{\Pi}$ is a fixed Riemannian measure on the manifold $\Pi$. 
\end{assumption}
Under Assumption~\ref{ass:space_time_structure}, any Lévy basis $L$ is an orthogonal martingale measure (as defined in \textcite{Walsh1986}) on $(M,\mathcal{M})$ with respect to the right-continuous filtration $\mathcal{F}_{t}^{L}$ generated by the increments of $L$ as
\begin{equation}\label{eq:L_filtration}
\mathcal{F}_t^L = \bigcap_{n=1}^{\infty}\sigma \left( L((a,b]\times A) : \, -\infty<a<b\leq (t+\tfrac{1}{n}), \, A\in \mathcal{B}(\Pi) \right).
\end{equation}
In the rest of the paper, we assume that any filtration is augmented with the $\mathbb{P}$-null sets of $\mathcal{F}$. For a given filtration $\mathcal{F}_t$, we say that a random field is predictable with respect to $\mathcal{F}_t$, if it is measurable with respect to the $\sigma$-algebra generated by all simple random fields $X_t(h)$ of the form 
\[ 
X_{t}(h) = \sum_{i=1}^{n}\sum_{j=1}^{m}w_{i}Y_{j}\mathbf{1}_{(t_{j},t_{j+1}]}(t)\mathbf{1}_{A_i}(h), \quad (t,h)\in \mathbb{R}\times \Pi, \, A_{i}\in\mathcal{B}(\Pi),
\]
where $w_{i}$ are constants and each $Y_{j}$ is $\mathcal{F}_{t_j}$-measurable. When $L$ is an orthogonal martingale measure with respect to $\mathcal{F}_t$, the class of $L$-integrable random fields are then all $\mathcal{F}_t$-predictable fields $X_{t}(h)$ with $(t,h)\in \mathbb{R}\times \Pi$, such that 
\begin{equation}\label{eq:integrability_condition}
\mathbb{E}\left[ \int_{\mathbb{R}\times \Pi}(X_{s}(\xi)^2 + \lvert X_{s}(\xi)\rvert) (\lambda^{1}\otimes \lambda_{\Pi})(ds,d\xi) \right] < \infty.
\end{equation}
\begin{remark}\label{rem:OMM}
If $\sigma$ is independent of $\mathcal{F}_t^{L}$, then $L$ remains an orthogonal martingale measure in the enlarged filtration $\mathcal{F}_t=\sigma\left( \mathcal{F}_{t}^{L}\cup \mathcal{F}_{t}^\sigma \right)$, where $\mathcal{F}_t^\sigma =\sigma(\lbrace\sigma_{s}(\xi) : s\leq t, \xi \in \Pi\rbrace)$. 
\end{remark}
\begin{definition}[Ambit set]
A collection of sets $\lbrace A_{t}(h)\rbrace_{(t,h)}$ in $\mathcal{M}$ is called a collection of ambit sets if they are causal in the sense that $A_{s}(h) \subseteq A_{t}(h)$ for $s\leq t$ and $A_{t}(h)\cap \left((t,\infty)\times \Pi\right) = \emptyset$.
\end{definition}
\begin{definition}[Ambit field]\label{def:ambit_field}
Let $\lbrace A_{t}(h)\rbrace_{(t,h)}$ be a family of ambit sets in $\mathcal{M}$ and let $L$ be a Lévy basis on $(M,\mathcal{M})$ with intensity measure $c$. Suppose that $L$ is an orthogonal martingale measure with respect to the right continuous and complete filtration $\mathcal{F}_t$. Let $\kappa,K$ be deterministic kernel functions and let $a_{t}(h),\sigma_{t}(h)$ be random fields that are predictable with respect to $\mathcal{F}_t$ and such that the fields $X_{s}^{1}(\xi)=\mathbf{1}_{A_{t}(h)}\kappa (t,s,h,\xi)a_{s}(\xi)$ and $X_{s}^{2}(\xi) = \mathbf{1}_{A_{t}(h)}K(t,s,h,\xi)\sigma_{s}(\xi)$ satisfy \eqref{eq:integrability_condition} for all $(t,h)\in M$. An ambit field $Y_{t}(h)$ on $M$ is then a random field indexed by $(t,h)$ with $t\in\mathbb{R}$ and $h\in \Pi$, defined by the following stochastic integral interpreted in the sense of \textcite{Walsh1986}
\[
Y_{t}(h) = \int_{A_{t}(h)} \kappa (t,s,h,\xi)a_{s}(\xi) c(ds,d\xi) + \int_{A_{t}(h)}K(t,s,h,\xi)\sigma_{s}(\xi) L(ds,d\xi).
\]
\end{definition}
The first term of the ambit field $Y_{t}(h)$ in Definition~\ref{def:ambit_field} represents the usual notion of a drift term, while the second term represents the noise. The ambit sets provide great flexibility, as they allow the modeller to impose a priori known structure into the field in a natural way. Furthermore, one of the powerful properties of ambit fields, is that they allow one to model ``in stationarity'', by potentially including all time points back to the infinite past. In many cases it is natural to consider time intervals of the form $[t_{1},t_{2}]$ and this may easily be accommodated by including an indicator function in the kernel of the form 
$K(t,s,h,\xi)\mathbf{1}_{[t_1,t_2]}(t)$, but in Proposition~\ref{prop:ambit_stationary} we give necessary and sufficient conditions on the kernel function $K$ and volatility field $\sigma$ in the specific model \eqref{eq:model} to ensure stationarity under time translations and spatial rotations, where it is crucial that the integration is carried out over the full history $(-\infty ,t]$. 
\begin{proposition}\label{prop:ambit_stationary}
Let $Y_{t}(h)$ be an ambit field as in Definition \ref{def:ambit_field}, with ambit sets corresponding to the truncated cylinder surface \eqref{eq:ambit_set}. Let the CQ of the driving Lévy basis $L$ be $(\gamma ,\Sigma,\nu ,\lambda_{\mathcal{C}})$, where $\lambda_{\mathcal{C}}$ is the Riemannian measure on $M=\mathcal{C}$. Let also $B\in\mathcal{M}_b$ and define the operations $g_{\tau}$ and $g_{\theta}$ as the following translation and rotation maps on $\mathcal{C}$.
\[
g_{\tau}(B) = B + (\tau,0,0), \quad g_{\theta}(B) = \mathcal{R}_{\theta}B, \quad \tau,\theta \in \mathbb{R},
\]
where $\mathcal{R}_{\theta}$ is the rotation matrix
\begin{equation}\label{eq:rotation_matrix}
\mathcal{R}_{\theta} = \begin{pmatrix}
1 & 0 & 0 \\ 0 & \cos (\theta) & -\sin(\theta) \\ 0 & \sin(\theta) & \cos (\theta) 
\end{pmatrix} = \begin{pmatrix}
1 & \mathbf{0} \\
\mathbf{0} & R_{\theta}
\end{pmatrix}.
\end{equation}
Let $G$ be the group generated by the isometries $g_{\tau},g_{\theta}$, with function composition as group operation. Then $g(B)\in \mathcal{M}_b$ and $L$ is stationary with respect to $G$. Furthermore, the field $Y_{t}(h)$ is stationary with respect to $G$ in the sense that
\[
Y_{t}(h) \overset{d}{=} Y_{t+\tau}(R_{\theta}h), \quad \tau \in \mathbb{R},\; \theta \in \mathbb{R},
\]
if and only if the following conditions hold.
\begin{enumerate}
\item $\sigma_{s}(\xi)$ and $a_{s}(\xi)$ are stationary with respect to $G$, in the sense that
\[
\sigma_{s}(\xi) \overset{d}{=}  \sigma_{s+\tau}(R_{\theta}\xi), \quad a_{s}(\xi) \overset{d}{=} a_{s+\tau}(R_{\theta}\xi).
\]
\item The kernel functions $\kappa,K$ are stationary and isotropic, in the sense that they only depend on the distance between points in as follows (with some abuse of notation)
\[
K(t,s,h,\xi) = K(t-s, \theta_{h}-\theta_{\xi}), \quad \kappa (t,s,h,\xi) = \kappa ( t-s, \theta_{h}-\theta_{\xi}),
\]
where $h=(\cos (\theta_h),\sin(\theta_h))$ and $\xi = (\cos (\theta_{\xi}),\sin(\theta_{\xi}))$, $\theta_h,\theta_{\xi}\in (0,2\pi]$.
\end{enumerate}
\end{proposition}
\begin{corollary}\label{cor:ambit_stationary}
Let the setting be as in Proposition \ref{prop:ambit_stationary}. Then $Y_{t}(h)$ is time stationary such that
$Y_{t}(h)\overset{d}{=}Y_{t+\tau}(h)$,
if and only if the following conditions hold
\begin{enumerate}
\item $\sigma_{s}(\xi)$ and $a_{s}(\xi)$ are time stationary such that
\[
\sigma_{s}(\xi) \overset{d}{=}  \sigma_{s+\tau}(\xi), \quad a_{s}(\xi) \overset{d}{=} a_{s+\tau}(\xi).
\]
\item The kernel functions $K,\kappa$ are of the form (with some abuse of notation)
\begin{equation}\label{eq:stationary_kernel}
K(t,s,h,\xi) = K(t-s, h,\xi), \quad \kappa (t,s,h,\xi) = \kappa (t-s,h,\xi).
\end{equation}
\end{enumerate}
\end{corollary}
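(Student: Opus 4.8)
The plan is to recognize that Corollary~\ref{cor:ambit_stationary} is precisely the restriction of Proposition~\ref{prop:ambit_stationary} to the subgroup $G_{\tau}\subset G$ consisting of pure time translations $g_{\tau}$, with the rotations $g_{\theta}$ switched off. I would therefore reuse the computation carried out in the proof of Proposition~\ref{prop:ambit_stationary}, simply setting $\theta=0$ throughout (equivalently, replacing every occurrence of $R_{\theta}\eta$ by $\eta$ and $R_{\theta}h$ by $h$). The upshot is that, since no rotation is applied to the spatial coordinate, the kernels are allowed to retain full, unrestricted dependence on the pair $(h,\xi)$; only the joint dependence on $(t,s)$ must collapse to a dependence on the time lag $t-s$. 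This is exactly the weakening of condition~2 recorded in~\eqref{eq:stationary_kernel} relative to the isotropic requirement of Proposition~\ref{prop:ambit_stationary}.

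For sufficiency, I would start from $Y_{t+\tau}(h)$ and perform the time change of variables $s=\zeta+\tau$ in both integrals. Because the ambit set~\eqref{eq:ambit_set} is a full truncated cylinder whose only time-dependence is the upper truncation at $s=t$, it satisfies $A_{t+\tau}(h)=g_{\tau}(A_{t}(h))$, so the substitution maps the domain of integration back onto $A_{t}(h)$. Condition~2 then replaces $K(t+\tau,s,h,\xi)$ by $K(t-\zeta,h,\xi)$ and likewise for $\kappa$. It remains to absorb the shift $\tau$: condition~1 gives $a_{\zeta+\tau}(\eta)\overset{d}{=}a_{\zeta}(\eta)$ and $\sigma_{\zeta+\tau}(\eta)\overset{d}{=}\sigma_{\zeta}(\eta)$, while the time translation $g_{\tau}$ is an isometry of $\mathcal{C}$, so $\lambda_{\mathcal{C}}$ is invariant under it and Proposition~\ref{prop:levy_basis_stationary} yields $L(g_{\tau}(\cdot))\overset{d}{=}L(\cdot)$. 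Combining these three facts gives $Y_{t+\tau}(h)\overset{d}{=}Y_{t}(h)$, establishing the claimed time stationarity.

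For necessity, I would argue that if $Y_{t}(h)\overset{d}{=}Y_{t+\tau}(h)$ holds for every $\tau$, then after the same change of variables the only $(t,s)$-dependence of the integrand compatible with invariance for all $\tau$ is through the lag $t-s$: since $\lambda_{\mathcal{C}}$ and $L$ are already time-translation invariant and $a,\sigma$ are assumed stationary, any residual dependence of $K$ or $\kappa$ on the absolute time levels would produce a $\tau$-dependent law, forcing~\eqref{eq:stationary_kernel}. Note that, in contrast to Proposition~\ref{prop:ambit_stationary}, no isotropy conclusion is drawn here, precisely because the spatial argument is never moved.

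The main obstacle I anticipate is not the single-point identity above but the upgrade to full finite-dimensional distributions, i.e. showing $(Y_{t_{1}}(h_{1}),\ldots,Y_{t_{m}}(h_{m}))\overset{d}{=}(Y_{t_{1}+\tau}(h_{1}),\ldots,Y_{t_{m}+\tau}(h_{m}))$, which is what ``stationary'' really requires. I would handle this by running the change-of-variables argument inside the joint cumulant functional of the Walsh integral: a common shift $\zeta=s-\tau$ applied simultaneously to all $m$ integrals, together with the independently scattered structure of $L$ and the invariance of $\lambda_{\mathcal{C}}$, leaves the joint cumulant functional unchanged, which suffices since the law of a Walsh integral against a Lévy basis is determined by this functional.
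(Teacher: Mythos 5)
Your proposal is correct and follows exactly the route the paper intends: the corollary is obtained by rerunning the change-of-variables computation from the proof of Proposition~\ref{prop:ambit_stationary} with the rotation switched off, using invariance of $\lambda_{\mathcal{C}}$ and Proposition~\ref{prop:levy_basis_stationary} under time translations, so that only the convolution structure in time is forced on the kernels. Your additional remark about upgrading the single-point identity to finite-dimensional distributions via the joint cumulant functional is a point the paper itself glosses over, but it does not change the argument.
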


\subsection{Model cumulant function and moments}\label{sec:cumulant}
Let $S_{t}(h)$ be the spot price model defined as in \eqref{eq:model}. Then the manifold on which the ambit field is defined is exactly $M=\mathcal{C}$ and we define the \emph{de-seasonalized spot price} as the zero-drift ambit field $D_{t}(h)$, given by
\begin{equation}\label{eq:deseasonalized_field}
\begin{aligned}
D_{t}(h) &= S_{t}(h) - \int_{A_{t}(h)}\kappa (t,s,h,\xi)a_{s}(\xi)c(ds,d\xi) \\
&= \int_{A_{t}(h)}K(t,s,h,\xi)\sigma_{s}(\xi)L(ds,d\xi),
\end{aligned}
\end{equation}
The underlying filtration $\mathcal{F}_t$ is henceforth assumed to be such that $L$ is an orthogonal martingale measure and $\sigma$ is predictable, cf. Definition~\ref{def:ambit_field}. When $\sigma$ and $L$ are independent, we follow Remark~\ref{rem:OMM} and set $\mathcal{F}_{t}=\sigma \left( \mathcal{F}_t^L \cup \mathcal{F}_t^\sigma\right)$, where $\mathcal{F}^L_{t}$ is given by \eqref{eq:L_filtration} and $\mathcal{F}_t^\sigma$ is the filtration
\begin{equation}\label{eq:filtration_sigma}
\mathcal{F}_{t}^{\sigma} = \sigma\left( \lbrace \sigma_{s}(\xi) : (s,\xi)\in (-\infty,t]\times \Pi \rbrace \right).
\end{equation}
Proposition~\ref{prop:ambit_field_cumulant} below states the $\mathcal{F}_{t}^\sigma$-conditional cumulant function of the ambit field \eqref{eq:deseasonalized_field}, in the case where $\sigma$ is independent of $L$. The first and second order structure of the field $D_t(h)$ is then derived in Proposition~\ref{prop:second_order_structure}. When $c=\lambda_{\mathcal{C}}$ is the Riemannian measure, these quantities reduce to volume integrals.
\begin{proposition}\label{prop:ambit_field_cumulant}
Let $D_{t}(h)$ be the deseasonalized field \eqref{eq:deseasonalized_field}, and suppose that the volatility field $\sigma$ is independent of the driving Lévy basis $L$. Then the $\mathcal{F}_{t}^{\sigma}$-conditional cumulant function of $D_{t}(h)$ is given by
\begin{equation}\label{eq:cumulant_function}
\begin{aligned}
C(u;D_{t}(h)\mid \mathcal{F}_{t}^{\sigma}(h)) &= \log\left(\mathbb{E}\left[ e^{\mathrm{i}uD_{t}(h)}\mid \mathcal{F}_{t}^{\sigma}(h) \right]\right)\\
&= \int_{A_{t}(h)} C(uK(t,s,h,\xi)\sigma_{s}(\xi) ;L') c(ds,d\xi) \\
&= \int_{-\infty}^{t}\int_{0}^{2\pi} C(uK(t,s,h,r(\theta))\sigma_{s}(r(\theta)) ;L') d\theta ds.
\end{aligned}
\end{equation}
\end{proposition}
\begin{proposition}\label{prop:second_order_structure}
Let $D_{t}(h)$ be the deseasonalized field \eqref{eq:deseasonalized_field}, and suppose that the volatility field $\sigma$ is independent of the driving Lévy basis $L$.
Then the first and second order structure of $D_{t}(h)$ is given as follows.
\[
\begin{aligned}
\mathbb{E}\left[ D_{t}(h) \right] &= \mathbb{E}\left[ L' \right]\int_{-\infty}^{t}\int_{0}^{2\pi}K(t,s,h,r(\theta))\mathbb{E}\left[ \sigma_{s}(r(\theta))\right] d\theta ds, \\[1.25ex]
\mathrm{Cov}\left( D_{t}(h), D_{t'}(h') \right) &= \mathbb{V}\left[ L'\right]\int_{-\infty}^{\min\lbrace t, t' \rbrace}\int_{0}^{2\pi}K(t,s,h,r(\theta))K(t',s,h',r(\theta))\mathbb{E}\left[ \sigma_{s}(r(\theta ))^{2} \right]d\theta ds \\
+\mathbb{E}\left[ (L') \right]^2\int_{-\infty}^{t}\int_{-\infty}^{t'}&\int_{0}^{2\pi}\int_{0}^{2\pi}K(t,s,h,r(\theta))K(t',s',h',r(\theta '))\varrho (s,s',r(\theta), r(\theta '))d\theta d\theta ' ds ds',
\end{aligned}
\]
where
$
\varrho (s,s',r(\theta), r(\theta')) = \mathrm{Cov}\left( \sigma_{s}(r(\theta)), \sigma_{s'}(r(\theta')) \right).
$
\end{proposition}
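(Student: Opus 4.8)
The plan is to read both moments off the conditional cumulant function \eqref{eq:cumulant_function} and then combine this with the law of total covariance, using throughout that the choice $c=\lambda_{\mathcal{C}}$ collapses every manifold integral to the concrete surface integral $\int_{-\infty}^{t}\int_{0}^{2\pi}(\cdot)\,d\theta\,ds$ via \eqref{eq:volume_integral}, since on $\mathcal{C}$ we have $\det\rho_{\mathcal{C}}\equiv 1$. So the structural work is done abstractly, and the final step is merely a substitution of the parametrization $r(\theta)$ into the abstract formulas. Independence of $\sigma$ and $L$, and the standing finite-second-moment assumption on the Lévy seed, are what make everything integrable.

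First I would obtain the mean by differentiating \eqref{eq:cumulant_function} once in $u$ and evaluating at $u=0$. Using $\partial_u C(u;L')|_{u=0}=\mathrm{i}\,\mathbb{E}[L']$ and differentiating under the manifold integral (justified because \eqref{eq:integrability_condition} dominates the integrand and its $u$-derivative uniformly near $0$), one gets $\mathbb{E}[D_t(h)\mid\sigma]=\mathbb{E}[L']\int_{A_t(h)}K(t,s,h,\xi)\sigma_s(\xi)\,c(ds,d\xi)$. Taking expectations and pulling $\mathbb{E}[\sigma_s(\xi)]$ inside by independence, then parametrizing, yields the stated mean.

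Next I would compute the covariance by conditioning on $\sigma$, writing $\mathrm{Cov}(D_t(h),D_{t'}(h'))=\mathbb{E}[\mathrm{Cov}(D_t(h),D_{t'}(h')\mid\sigma)]+\mathrm{Cov}(\mathbb{E}[D_t(h)\mid\sigma],\mathbb{E}[D_{t'}(h')\mid\sigma])$. Conditionally on $\sigma$ the field is a Walsh integral of a deterministic integrand, so the $L^2$ isometry together with the independently scattered property of $L$ gives $\mathrm{Cov}(D_t(h),D_{t'}(h')\mid\sigma)=\mathbb{V}[L']\int_{A_t(h)\cap A_{t'}(h')}K(t,\cdot)K(t',\cdot)\,\sigma^2\,c$; the decisive point is that the noise contributes only on the overlap of the two ambit sets, and since each ambit set \eqref{eq:ambit_set} contains the full circle, this overlap is exactly the truncated cylinder up to time $\min\{t,t'\}$. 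Parametrizing produces the first (single angular) term. For the second summand I would expand the product of conditional means by Fubini into a quadruple integral of $K(t,\cdot)K(t',\cdot)\,\mathbb{E}[\sigma_s(\xi)\sigma_{s'}(\xi')]$ weighted by $(\mathbb{E}[L'])^2$; subtracting $\mathbb{E}[D_t(h)]\mathbb{E}[D_{t'}(h')]$ replaces $\mathbb{E}[\sigma_s(\xi)\sigma_{s'}(\xi')]$ by the centred quantity $\varrho(s,s',r(\theta),r(\theta'))=\mathrm{Cov}(\sigma_s(\xi),\sigma_{s'}(\xi'))$, giving the second term after parametrization.

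The hard part is conceptual rather than computational: correctly separating the two sources of second-order dependence. The orthogonal-increment (martingale) contribution of the Lévy basis lives only on the intersection $A_t(h)\cap A_{t'}(h')$ and therefore collapses one pair of integration variables onto the diagonal, whereas the stochastic volatility contributes a genuinely bivariate term surviving as the full double space-time integral. Getting this split right requires care with the independently scattered property and with identifying the overlap of the ambit sets; the remaining technical obstacle is justifying the interchanges of differentiation, expectation, and stochastic integration, all of which are controlled by \eqref{eq:integrability_condition} and the finite second moment of the Lévy seed.
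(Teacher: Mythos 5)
Your proof is correct and, in substance, is the same argument as the paper's: the paper disposes of the whole computation by citing Proposition 41 of the Ambit monograph together with the observation that $A_{t}(h)\cap A_{t'}(h')=A_{\min\{t,t'\}}(h)$ and the reduction of $\lambda_{\mathcal{C}}$-integrals to $\int_{-\infty}^{t}\int_{0}^{2\pi}(\cdot)\,d\theta\,ds$, whereas you re-derive that cited proposition from scratch via the conditional cumulant function and the law of total covariance. That is exactly the standard proof of the cited result, so nothing is gained or lost methodologically; your version is merely self-contained.

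One discrepancy is worth flagging. Your derivation of the second covariance term comes from $\mathrm{Cov}\bigl(\mathbb{E}[D_{t}(h)\mid\sigma],\,\mathbb{E}[D_{t'}(h')\mid\sigma]\bigr)$, where each conditional mean carries a single factor $\mathbb{E}[L']$, so the quadruple integral is weighted by $(\mathbb{E}[L'])^{2}$, the \emph{square of the mean} of the Lévy seed. The proposition as printed weights it by $\mathbb{E}[(L')^{2}]$, the \emph{second moment}. These differ by $\mathbb{V}[L']$ whenever the seed is not deterministic, and your constant is the correct one; as written, your argument does not (and should not) establish the displayed formula verbatim, so you should either note that the statement contains a typo or reconcile the two constants explicitly. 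Everything else — the isometry on the overlap of the ambit sets, the identification of that overlap with the full truncated cylinder up to $\min\{t,t'\}$, and the justification of the interchanges via \eqref{eq:integrability_condition} — is sound.
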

\begin{example}\label{ex:BNS_model}
A Barndorff-Nielsen-Shephard type ambit field model is obtained by letting the squared volatility field be an ambit field of the form
\[
\sigma_{t}(h)^2 = \int_{A_{t}(h)}Q(t-s,h,\xi) L^\sigma (ds,d\xi),
\]
where $Q$ is a strictly positive deterministic kernel function satisfying the conditions of Corollary~\ref{cor:ambit_stationary} and $L^\sigma$ is a positive Lévy basis with CQ $(0,0,\nu,\lambda_{\mathcal{C}})$ and independent of $L$. In this case, the volatility field is always non-negative and if the driving noise $L$ is Gaussian with CQ $(0,1,0,\lambda_{\mathcal{C}})$, then the unconditional cumulant function of $D_{t}(h)$ can be computed from Proposition~\ref{prop:ambit_field_cumulant} and the stochastic Fubini theorem of \textcite{Walsh1986} as
\[
C(u;D_{t}(h)) = \int_{A_{t}(h)}C\left( -u^2 I(t-w,h,\zeta); (L^\sigma)'\right)\lambda_{\mathcal{C}}(dw,d\zeta),
\]
where $(L^\sigma)'$ is the Lévy seed of $L^\sigma$ and $I$ is the integral
\[
I(t-w,h,\zeta) = \int_{A_{t}(h)\setminus A_{w}(h)}K(t-s,h,\xi)^2 Q(s-w,\xi,\zeta)\lambda_{\mathcal{C}}(ds,d\xi).
\]
\end{example}
\section{Electricity derivatives}\label{sec:derivatives}
In this Section we consider the pricing of some fundamental electricity derivatives. In the null-spatial case, some semi-explicit pricing formulas for futures and options along with structure preserving changes of measure are derived in \textcites{BARNDORFF-NIELSENOLEE.2013Mesp}{RowińskaPaulinaA.2021Amat}, and we effectively extend these to cover the tempo-spatial setting. The most fundamental product is the futures contract, which is simply the promise of delivery or consumption of a fixed quantity of electricity at a fixed price $P$ in each delivery period over some time period, normalized with the amount of payments. In terms of the payoff, a long futures position on one unit of electricity with delivery period $[\tau_1, \tau_2]$ therefore has the payoff 
\[
\frac{1}{2\pi(\tau_2 - \tau_1)}\int_{\tau_1}^{\tau_2}\int_{0}^{2\pi}S_{t}(r(\phi))d\phi dt - P,
\]
for some fixed strike $P$. As futures contracts are traded products, these are priced under a pricing measure $\mathbb{Q}$ equivalent to the physical measure $\mathbb{P}$. Given such a pricing measure, the futures price at time $\tau_{0}\leq \tau_{1}$, $F_{\tau_{0},\tau_{1},\tau_{2}}$ becomes \citep[see equation~(1.12) in][]{BenthMonograph}
\begin{equation}\label{eq:futures_price}
F_{\tau_{0},\tau_{1},\tau_{2}} = \frac{1}{2\pi(\tau_2 - \tau_1)}\int_{\tau_1}^{\tau_2}\int_{0}^{2\pi}\mathbb{E}^{\mathbb{Q}}\left[ S_{t}(r(\phi)) \mid \mathcal{F}_{\tau_{0}} \right] d\phi dt - P, 
\end{equation}
where $\mathcal{F}_t$ is a filtration representing the available information at time $t$. If the model is such that ${\mathbb{E}^{\mathbb{Q}}[\lvert S_{t}(h)\rvert]<\infty}$, then $F_{\tau_{0},\tau_{1},\tau_{2}}$ is a martingale process in $\tau_{0}$ for any choice of $\mathbb{P}$-equivalent measure $\mathbb{Q}$ (with respect to $\mathcal{F}_{\tau_0}$). This is for example the case in the setting of Proposition~\ref{prop:second_order_structure}, where additionally the volatility field is time stationary and with finite first moment. The futures price in \eqref{eq:futures_price} depends on the \emph{conditional} expectation of $S_{t}(h)$, which is complicated by the path dependence of $t\mapsto S_{t}(h)$. However, as the futures price is simply an expectation, it is straightforward to incorporate the drift term that we have omitted so far, as it can simply be added. For non-linear payoffs with the futures as underlying, such as options, the drift contributes in a non-trivial way to the overall price.

\subsection{Structure preserving change of measure}
Electricity itself is not a storable commodity, and the spot price process is therefore not constrained to be a martingale under $\mathbb{Q}$. It follows that the market is not only incomplete, but that \emph{any} probability measure equivalent to $\mathbb{P}$ is a valid pricing measure. It is therefore common to assume $\mathbb{P}=\mathbb{Q}$, but several works such as \textcites{BenthCarteaKiesel2006}{WeiLunde2023} have found there to be a significant and time-varying risk premium in the electricity market, which necessitates a model flexible enough to capture this. Although the class of valid pricing measures is very large, it is customary to seek a structure preserving change of measure, such that the model of $S_{t}(h)$ does not change its fundamental properties when moving from $\mathbb{P}$ to $\mathbb{Q}$. We may characterize a class of structure preserving changes of measure by means of the Esscher transform, similar to the approach of \textcites{BenthMonograph}{BARNDORFF-NIELSENOLEE.2013Mesp}. We state this result in Proposition \ref{prop:esscher_transform} and for technical reasons, we restrict attention to the case of a finite time interval $[0,T^{\ast}]$, meaning that for the rest of the Section, we interpret the sets $A_{t}(h)$ as starting at time $t=0$. The main implication is, that time stationarity of the field $S_{t}(h)$ is lost c.f. Corollary \ref{cor:ambit_stationary}.
\begin{proposition}\label{prop:esscher_transform}
Let $L$ be a Lévy basis on $[0,T^\ast]\times \Pi$ with CQ $(\gamma, \Sigma ,\nu, \lambda_{\mathcal{C}})$, and let $(\mathcal{F}_t^L)_{t\in [0,T^{\ast}]}$ be the filtration generated by $L$, corresponding to \eqref{eq:L_filtration}. Let also $q(t,h):[0,T^\ast]\times \Pi\to \mathbb{R}$ be a continuous function and suppose that there exists a $q^{*}$ such that
\[
\mathbb{E}\left[ e^{q^{*}\lvert L([0,T^\ast]\times \Pi)\rvert} \right]<\infty, \quad \sup_{(t,h)\in [0,T^{\ast}]\times \Pi}\lvert q (t,h)\rvert < q^{*}.
\]
Define the process $Z_t$ as
\[
Z_t = \exp\left({\int_{0}^{t}\int_{\Pi}q (s,\xi) L(d s,d\xi)-\int_{0}^{t}\int_{\Pi}C(q (s,\xi) ;L')\lambda_{\mathcal{C}}(d s,d\xi)}\right), \quad t\in [0,T^{\ast}].
\]
Then $Z_t=\frac{\mathbb{P}^{q}}{\mathbb{P}}\big\rvert_{\mathcal{F}_t}$ defines a valid density process with respect to the filtration $(\mathcal{F}_{t}^L)_{t\in [0,T^{\ast}]}$, and it holds that $L$ is a Lévy basis under the measure $\mathbb{P}^q$ (defined on $(\Omega,\mathcal{F}\rvert_{T^{\ast}})$) with inhomogeneous CQ $(\tilde{\gamma},\Sigma,\tilde{\nu},\lambda_{\mathcal{C}})$, where
\begin{equation}\label{eq:esscher_triplet}
\begin{aligned}
\tilde{\gamma}(t,h) &= \gamma + q(t,h)\Sigma + \int_{\mathbb{R}}\left( e^{q(t,h)x}-1 \right)\mathbf{1}_{\lvert x\rvert < 1}(x) \nu (d x), \\
\tilde{\nu}(t,h,dx) &= e^{q(t,h)x}\nu (d x).
\end{aligned}
\end{equation}
\end{proposition}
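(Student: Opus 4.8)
The plan is to prove the statement in two stages: first that $Z_t$ is a strictly positive $\mathbb{P}$-martingale with $\mathbb{E}[Z_t]=1$, so that it induces a probability measure $\mathbb{P}^q$ equivalent to $\mathbb{P}$ on $\mathcal{F}_{T^\ast}$, and second that the law of $L$ under $\mathbb{P}^q$ is the claimed (inhomogeneous) Lévy basis, which I would establish by computing the characteristic functional of the tilted measure. The whole argument rests on the exponential moment hypothesis $\mathbb{E}[e^{q^\ast|L([0,T^\ast]\times\Pi)|}]<\infty$ together with $\sup_{(t,h)}|q(t,h)|<q^\ast$, which guarantees that the cumulant generating function $\Psi(w):=\log\mathbb{E}[e^{wL'}]$ of the Lévy seed is finite and analytic on the complex strip $\{\,|\mathrm{Re}\,w|<q^\ast\,\}$; note $\Psi(\mathrm{i}u)=C(u;L')$ and that $C(q;L')$ in the statement is understood as $\Psi(q)$.

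For the martingale property I would fix $s<t$ and use that $L$ is independently scattered, so that the increment $\int_s^t\int_\Pi q\,dL$ over $(s,t]\times\Pi$ is independent of $\mathcal{F}_s$. Writing $Z_t=Z_s\exp(\int_s^t\int_\Pi q\,dL-\int_s^t\int_\Pi\Psi(q)\,d\lambda_{\mathcal{C}})$ and conditioning on $\mathcal{F}_s$ reduces the claim $\mathbb{E}[Z_t\mid\mathcal{F}_s]=Z_s$ to the identity $\mathbb{E}[\exp(\int_s^t\int_\Pi q\,dL)]=\exp(\int_s^t\int_\Pi\Psi(q)\,d\lambda_{\mathcal{C}})$. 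This is precisely the cumulant functional for deterministic integrands against a Lévy basis, i.e. Proposition~2.6 of \textcite{Rosinski1989} (the same tool used for \eqref{eq:cumulant_function}), applied with the real argument $q$; the exponential moment bound keeps the integrand inside the domain of $\Psi$ and makes both sides finite. Since $Z_0=1$, taking $s=0$ gives $\mathbb{E}[Z_t]=1$, and positivity is immediate, so $\mathbb{P}^q$ is well defined and equivalent to $\mathbb{P}$.

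To identify the transformed law I would compute, for a bounded set $B\subseteq[0,T^\ast]\times\Pi$, the characteristic function $\mathbb{E}^{\mathbb{P}^q}[e^{\mathrm{i}uL(B)}]=\mathbb{E}^{\mathbb{P}}[Z_{T^\ast}e^{\mathrm{i}uL(B)}]$. Folding the factor $e^{\mathrm{i}uL(B)}$ into the exponent of $Z_{T^\ast}$ turns the integrand into $q+\mathrm{i}u\mathbf{1}_B$, and a second application of the cumulant functional yields
\begin{equation*}
\mathbb{E}^{\mathbb{P}^q}\!\left[e^{\mathrm{i}uL(B)}\right]=\exp\left(\int_{B}\left[\Psi(q(s,\xi)+\mathrm{i}u)-\Psi(q(s,\xi))\right]\lambda_{\mathcal{C}}(ds,d\xi)\right).
\end{equation*}
Running the same computation with $\sum_j\mathrm{i}u_j\mathbf{1}_{B_j}$ for disjoint $B_1,\dots,B_n$ shows the functional factorizes over the $B_j$, which re-establishes independent scattering and infinite divisibility under $\mathbb{P}^q$; hence $L$ is again a Lévy basis whose now inhomogeneous seed cumulant at $(s,\xi)$ is $\Psi(q(s,\xi)+\mathrm{i}u)-\Psi(q(s,\xi))$.

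It remains to read off the characteristic quadruplet by matching the Lévy–Khintchine form. Inserting the representation of $\Psi$ and using the key algebraic identity $e^{(q+\mathrm{i}u)x}-e^{qx}=e^{qx}(e^{\mathrm{i}ux}-1)$ immediately gives the tilted Lévy measure $\tilde{\nu}(dx)=e^{q(t,h)x}\nu(dx)$; the Gaussian part survives unchanged as $\Sigma$; and collecting the terms linear in $u$ — the quadratic completion of the Gaussian part contributing $q\Sigma$ and the re-truncation of the tilted jump part contributing the remaining integral — produces the drift $\tilde{\gamma}(t,h)$ of \eqref{eq:esscher_triplet}. I expect the main obstacle to be the analytic and integrability bookkeeping in the two applications of the cumulant functional: one must verify that the complex argument $q+\mathrm{i}u$ remains within the strip of analyticity of $\Psi$ and that the tilted measure $\tilde{\nu}$ still integrates $\min(1,x^2)$, both of which follow from the exponential moment hypothesis but are needed to justify interchanging the expectation with the functional identity and to legitimately extend Proposition~2.6 of \textcite{Rosinski1989} from purely imaginary to mixed complex arguments.
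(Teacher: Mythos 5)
Your proposal is correct and follows essentially the same route as the paper's proof: establish the martingale property of $Z_t$ via the independently scattered property and the cumulant functional for deterministic integrands, then compute the tilted characteristic function by absorbing the test exponent into $Z_{T^\ast}$, verify independent scattering under $\mathbb{P}^q$ by the analogous multi-set computation, and read off the quadruplet from the Lévy--Khintchine form. Your added remarks on the strip of analyticity and the integrability of the tilted Lévy measure are careful bookkeeping the paper leaves implicit, but they do not change the argument.
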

Proposition \ref{prop:esscher_transform} imposes the constraint that $L$ must admit sufficiently many exponential moments, which is not a significant constraint, but it precludes the use of heavy-tailed Lévy bases. This change of measure is -- as usual -- essentially an altering of the drift of the driving noise, but with added flexibility due to the jump structure. We may consider $q (t,h) > 0$ to emphasize the occurrence of positive jumps, and $q (t,h) <0$ to emphasize the occurrence of negative jumps. The sign of $q$ therefore determines which type of risk the market prices into the futures contracts, and we may use $q (t,h)$ to construct estimates of the market price of risk. Finally we note that a similar change of measure may be carried out for the volatility field $\sigma_{t}(h)$, which can be used to characterize a market price of volatility risk.

\subsection{Electricity futures} 
The futures contract constitutes the most fundamental electricity derivative, but also acts as the underlying of other derivative products such as options. As remarked above, the futures price is a conditional expectation, which is generally not expressible in terms of the current price level as the model is not Markovian. Suppose for simplicity that we have fixed some pricing measure $\mathbb{Q}$ via the structure preserving Esscher transformation with $\tilde{q}(t,h)=q$ constant, that $\sigma$ is independent of $L$, and that $\mathcal{F}_t$ is the filtration $\mathcal{F}_t=\sigma\left( \mathcal{F}_t^L \cup \mathcal{F}_{t}^{\sigma} \right)$ of Remark~\ref{rem:OMM}. Applying  Proposition~\ref{prop:second_order_structure} and a simple computation using the independently scattered property of $L$ shows that
\begin{equation}\label{eq:futures_price2}
\begin{aligned}
\mathbb{E}^{\mathbb{Q}}\left[ D_{T}(h) \mid \mathcal{F}_{t} \right] &= \int_{A_{t}(h)}K(T,s,h,\xi)\sigma_{s}(\xi) L(d s, d \xi) \\
&\quad + \mathbb{E}^{\mathbb{Q}}\left[ L'\right]\int_{t}^{T}\int_{0}^{2\pi}K(T,s,h,r(\theta))\mathbb{E}^{\mathbb{Q}}\left[ \sigma_{s}(r(\theta))\mid \mathcal{F}_{t}\right]d\theta d s.
\end{aligned}
\end{equation}
Up to the computation of the expectation $\mathbb{E}^{\mathbb{Q}}\left[ \sigma_{s}(r(\theta)) \mid\mathcal{F}_{t} \right]$, this is reasonably explicit, but the first term on the right hand side is generally not equal to $D_{t}(h)$. This may be imposed by assuming that $K(T,t,h,\xi)=K(T,h,\xi)K(t,h,\xi)$, but this is also a quite restrictive assumption. Following Chapter 10.3.3 of \textcite{Ambit}, we may construct the squared volatility field $\sigma^{2}$ (which is positive) to allow for semi-closed forms of the conditional expectation via the functional identity
\[
x^{p} = \frac{p}{\Gamma (1-p)}\int_{0}^{\infty}\frac{(1-e^{-xz})}{z^{1+p}}d z,
\]
in which case we find that
\[
\mathbb{E}^{\mathbb{Q}}\left[ \sigma_{s}(\xi)\mid \mathcal{F}_{t} \right] = \mathbb{E}^{\mathbb{Q}}\left[ \sqrt{\sigma_{s}(\xi)^2} \mid \mathcal{F}_{t} \right] = \frac{1}{2\sqrt{\pi}}\int_{0}^{\infty}\frac{1-\mathbb{E}^{\mathbb{Q}}\left[ e^{-z\sigma_{s}(\xi)^{2}}\mid\mathcal{F}_t\right]}{z^{\frac{3}{2}}}d z.
\]
By specifying $\sigma^{2}$ such that its conditional Laplace transform under $\mathbb{Q}$ is known in closed form, we may then plug this into \eqref{eq:futures_price2} and integrate to obtain the final futures price. This can for example be computed via \eqref{eq:cumulant_function} in the specification of Example~\ref{ex:BNS_model}. Ultimately, we note that the futures price is essentially identical to the null-spatial case of \textcites{BARNDORFF-NIELSENOLEE.2013Mesp}{Ambit} up to the additional spatial integral. Hence the extension to the tempo-spatial setting is a rather smooth generalization, and the cost incurred from adding the extra dimension is not too severe. 

As noted in \textcite{AidMonograph}, conventional Markovian spot price models have a hard time reproducing the observed volatility structure of futures prices at even slightly long maturities or delivery periods, as the implied exponential mean-reversion speed essentially entails the spot price process staying near its long term mean over the interval $[\tau_1, \tau_2]$. The present framework, however, allows for much more flexible autocorrelation structures, and thus more realistic volatility patterns, depending on the specification of $K$ and $\sigma$. We will return to the issue of selecting the kernel function in Section~\ref{sec:kernel}.

In the literature, it is customary to think of an electricity futures contract as being composed by a sum of forward contracts, which has delivery at an instant in time. If we define the forward price as $\tilde{F}_{t}(T,h) = \mathbb{E}^{\mathbb{Q}}\left[ S_{T}(h) \mid \mathcal{F}_t \right]$, then it holds that
\[
F_{t,\tau_1,\tau_2} = \frac{1}{2\pi (\tau_2-\tau_1)}\int_{\tau_1}^{\tau_2}\int_{0}^{2\pi}\tilde{F}_{t}(T,r(\theta)) d T d\theta.
\]
Both the forward and futures contracts are martingales in time, and hence we may follow \textcite{BARNDORFF-NIELSENOLEE.2013Mesp} and derive their $t$-dynamics $d\tilde{F}_{t}(T,h),d F_{t,\tau_1,\tau_2}$, which may provide some insight on the structure and volatility of the futures price. Due to tempo-spatial setting, however, we will not recover a usual stochastic differential equation, and it is not immediately clear how to obtain useful dynamics for the forward price. Assuming for simplicity that the drift term is zero, we may apply the stochastic Fubini theorem in \textcite{Walsh1986} to obtain the representation
\[
\begin{aligned}
\tilde{F}_{t}(T,h) &= \int_{0}^{t}\int_{\Pi}K(T,s,h,\xi)\sigma_{s}(\xi)L(d s,d\xi) \\
&\quad + \mathbb{E}^{\mathbb{Q}}\left[ L'\right]\int_{t}^{T}\int_{0}^{2\pi}K(T,s,h,r(\phi))\mathbb{E}^{\mathbb{Q}}\left[ \sigma_{s}(r(\phi)) \mid \mathcal{F}_t \right]d\phi d s,
\end{aligned}
\]
Since $\sigma$ is independent of $L$, Proposition~\ref{prop:second_order_structure} yields that
\begin{equation}\label{eq:samuelson_effect}
\lim_{t\uparrow T}\mathbb{V}^{\mathbb{Q}}\left[ \tilde{F}_{t}(T,h) \right] = \mathbb{V}^{\mathbb{Q}}\left[ D_{T}(h) \right].
\end{equation}
This simply shows that the volatility of the forward price converges to that of the (deseasonalized in this case) spot price. In classical models, this convergence is exponentially increasing, which is known as the Samuelson effect. In our framework, however,  we can obtain different convergence speeds, depending on the choice of kernel $K$. We may obtain similar expressions for both spatially averaged forward contracts and futures contracts, where we note that the spatially averaged forward contract
$
\hat{F}_{t}(T) = \int_{0}^{2\pi}\tilde{F}_{t}\left( T,r(\theta)\right) d\theta,
$
corresponds to the usual notion of a forward contract in the null-spatial setting. The spatially averaged forward contract has the following structure
\[
\begin{aligned}
\hat{F}_{t}(T) &= \int_{0}^{t}\int_{\Pi}I(T,s,\xi)\sigma_{s}(\xi) L(d s,d\xi) \\
&\quad +  \mathbb{E}^{\mathbb{Q}}\left[ L'\right]\int_{t}^{T}\int_{0}^{2\pi}I(T,s,r(\phi))\mathbb{E}^{\mathbb{Q}}\left[ \sigma_{s}(r(\phi))\mid \mathcal{F}_t\right]d\phi d s,
\end{aligned}
\]
where $I(T,s,\xi)=\int_{0}^{2\pi}K(T,s,r(\theta),\xi)d\theta$ is the spatially integrated kernel. This corresponds to a type of volatility modulated mixed moving average process as studied in \textcite{Veraart2015}, and the results therein may be useful to derive the autocorrelation function and finite dimensional distribution of the (spatially averaged) forward price. The convergence of the volatility structure corresponding to \eqref{eq:samuelson_effect} is now in an averaged sense of the form
\[
\lim_{t\uparrow T}\mathbb{V}^{\mathbb{Q}}\left[\hat{F}_{t}(T)\right] = \mathbb{V}^{\mathbb{Q}}\left[ \int_{0}^{2\pi}D_{T}(r(\theta))d\theta \right].
\]
For the spatially averaged forward, the kernel $I(T,t,\xi)$ plays the role of the kernel function in \textcite{BARNDORFF-NIELSENOLEE.2013Mesp} and the behaviour of $t\mapsto I(T,t,\xi)$ will determine how the forward volatility converges to the average spot volatility.

In addition to plain futures contracts, there is a variety of more exotic derivatives in electricity markets. It is clear that the pricing on hedging of more such products is not free of complications and the tractability of the pricing formulas strongly depend on the model specification. We note, however, that it is possible to simulate the ambit model as detailed in Section \ref{sec:simulation}, and thereby to price certain products via Monte Carlo methods without the need for explicit analytical expressions.

\subsection{Within-day spreads}
Our model distinguishes itself by including each individual delivery period in the extra spatial dimension, and it is therefore possible to consider a class of derivatives which have, to the best of our knowledge, not been studied previously in the literature. As a particular example, we consider products written on the average within-day spread between selected delivery periods. Such products do not currently trade on any centralized exchange, but may potentially be used for hedging delivery commitments, or to gain exposure to spike risk in certain delivery periods. Concretely, for  given $H_1,H_2\in \mathcal{B}((0,2\pi])$ we consider the spread
\[
X_{\tau_{1},\tau_{2}} = \int_{\tau_1}^{\tau_2}\left[\int_{H_1}S_{t}(r(\phi))d\phi - \int_{H_2}S_{t}(r(\phi))d\phi\right] dt.
\]
The analysis of derivatives written on the spread $X_{\tau_1,\tau_2}$ turns out to have a relatively simple structure in our setting, as the stochastic Fubini theorem of \textcite{Walsh1986} yields a mixed moving average representation similar to that of the futures price
\begin{equation}\label{eq:spread_payoff}
X_{\tau_1,\tau_2} = \int_{\tau_1}^{\tau_2}\left[ \int_{A_{t}(h)}I_{1}(t,s,\xi)a_{s}(\xi)\lambda_{\mathcal{C}}(ds,d\xi) + \int_{A_{t}(h)}I_{2}(t,s,\xi)\sigma_{s}(\xi)L(ds,d\xi) \right] dt,
\end{equation}
where $I_{1}$ and $I_{2}$ denote the following quantities
\[
\begin{aligned}
I_{1}(t,s,\xi) &= \int_{H_1}\kappa (t,s,r(\phi),\xi) d\phi - \int_{H_2}\kappa (t,s,r(\phi),\xi) d\phi, \\
I_{2}(t,s,\xi) &= \int_{H_1}K(t,s,r(\phi),\xi)d\phi - \int_{H_2}K(t,s,r(\phi),\xi)d\phi.
\end{aligned}
\] 
The structure \eqref{eq:spread_payoff} is obviously retained under structure preserving changes of measure, and hence the kernel functions are what drives the prices of derivatives written on $X_{\tau_1,\tau_2}$, which highlights the importance of modelling these to properly account for the risk inherent in the spread $X_{\tau_{1},\tau_{2}}$. The value of the spread is simply obtained as
\begin{equation}\label{eq:withi_day_spread_price}
\begin{aligned}
\mathbb{E}^{\mathbb{Q}}\left[ X_{\tau_1,\tau_2} \mid \mathcal{F}_{\tau_0}\right] &= 
\int_{\tau_1}^{\tau_2}\left[\int_{0}^{\tau_0} \int_{0}^{2\pi}I_{1}(t,s,r(\phi))a_{s}(r(\phi)) d\phi ds + \int_{A_{\tau_0}(0)}I_{2}(t,s,\xi)\sigma_{s}(\xi)L(d\xi, ds) \right] dt \\
&\quad + \int_{\tau_1}^{\tau_2}\left[\int_{\tau_0}^{t} \int_{0}^{2\pi}I_{1}(t,s,r(\phi))\mathbb{E}^{\mathbb{Q}}\left[ a_{s}(r(\phi))\mid \mathcal{F}_{\tau_0} \right] d\phi ds\right] dt \\
&\quad + \mathbb{E}^{\mathbb{Q}}\left[ L' \right] \int_{\tau_1}^{\tau_2}\left[\int_{\tau_{0}}^{t}\int_{0}^{2\pi}I_{2}(t,s,r(\phi)) \mathbb{E}^{\mathbb{Q}}\left[ \sigma_{s}(r(\phi))\mid\mathcal{F}_{\tau_0} \right] d\phi ds\right] dt.
\end{aligned}
\end{equation}
Similar to the futures price, the first term in \eqref{eq:withi_day_spread_price} is generally not equal to any directly observable quantity. An example of a within-day spread of particular interest could be the case where $H_{1} = \left[ \frac{2\pi}{3}, \frac{5\pi}{3}\right]$, corresponding to the daily \emph{peak load period} and $H_{2}=(0,2\pi]$ or $H_{2}=(0,\frac{2\pi}{3})\cup (\frac{5\pi}{3},2\pi]$ corresponding to the average daily price or the \emph{off-peak load period}. A spread on these subsets of delivery periods would provide exposure to deviations in the peak load period from the overall daily average, or the discrepancy between prices in peak and off-peak periods. The case of $H_{2}=(0,2\pi]$ is particularly convenient, as the integrals $I_1$ and $I_2$ simplify to
\begin{equation}\label{eq:kernel_integrals}
I_{1}(t,s,\xi) = -\int_{(0,\frac{2\pi}{3})\cup (\frac{5\pi}{3},2\pi]}\kappa (t,s,r(\phi),\xi) d\phi, \quad
I_{2}(t,s,\xi) = -\int_{(0,\frac{2\pi}{3})\cup (\frac{5\pi}{3},2\pi]}K(t,s,r(\phi),\xi)d\phi,
\end{equation}
which corresponds to the price of the spread being proportional to the expected average of the off-peak load period. Consistent modelling of within-day spread contracts constitutes an interesting area of research in itself, but a deeper analysis of the pricing of within-day spreads and their potential for hedging spike risk is left for future research. 

\subsection{Incorporating leverage}
In equity markets, the effect of \emph{leverage} is well established. Leverage is the simple notion that large upwards movements in volatility are typically associated with downwards movements in the price. As noted in \textcite{InverseLeverage}, this effect is seemingly opposite in electricity markets, such that periods of high volatility are associated with high price levels. In our modelling framework, there are multiple ways of adding leverage effects and it is in particular possible to let the volatility field $\sigma$ and the driving noise $L$ be dependent in \eqref{eq:model}. Since many results simplify immensely under the assumption of independence between $\sigma$ and $L$, we illustrate a straightforward way of incorporating leverage in the spirit of \textcite{BNS}, which maintains independence of $\sigma$ and $L$. If we assume that the variance field $\sigma_{t}(h)^{2}$ is specified as in Example~\ref{ex:BNS_model}, the leverage effect can be modelled as
\begin{equation}\label{eq:leverage}
D_{t}(h) = \int_{A_{t}(h)}K(t-s,h,\xi)\sigma_{s}(\xi) L(d s , d\xi) + \int_{A_{t}(h)}\rho(s)\widetilde{K}(t-s,h,\xi) \widetilde{L}^{\sigma}(d s, d\xi),
\end{equation}
where $\rho (t)$ is a function modelling the ``amount of leverage'', $\widetilde{K}$ is an admissible kernel function and 
\[
\widetilde{L}^{\sigma}(A) = L^{\sigma}(A), \quad \text{or} \quad \widetilde{L}^{\sigma}(A) = L^{\sigma}(A) - \mathbb{E}\left[ L^{\sigma}(A) \right].
\]   
The first case simply means that volatility directly contributes to the price level, whereas the second case is such that only periods of excess volatility (compared to the mean) drives up the price. If the leverage effect is interpreted not on the level of shocks, but rather trends, we may also model leverage by taking $a_{s}(\xi) = f(\sigma_{s}(\xi))$ in \eqref{eq:model} for some appropriate increasing function $f$. In this view the volatility field directly modulates the differentiable drift, such that a period of high volatility contributes smoothly to a rise in the overall price level, which gradually goes back down when the level of volatility does.

While the inclusion of leverage effects is simple to model, it complicates the resulting estimation and pricing formulas even more than just including stochastic volatility. The impact on futures prices will be linear when $\tilde{L}^{\sigma}=L^{\sigma}$ and negligible when $\tilde{L}^{\sigma}$ has mean zero, but the impact on prices of nonlinear payoffs will generally be much more apparent, as the variance of the process $D_{t}(h)$ substantially increases for $\rho (t) \geq 0$, due to the positive correlation of the two terms in \eqref{eq:leverage}. It is clear that understanding the finer structure of spot price volatility is necessary to assess the impact and form of leverage. 

\section{Specifying the kernel function}\label{sec:kernel}
The kernel function $K$ entering in the model \eqref{eq:model} is a crucial modelling ingredient that determines the dependence structure in the model. Ideally, $K$ should be chosen to reflect the various stylized facts of electricity prices, but still be tractable enough to facilitate the computation of required model quantities, such as moments and derivative prices. It is well-documented that properly de-seasonalized electricity prices are stationary in time, and in all of the following we therefore impose the conditions of Corollary~\ref{cor:ambit_stationary} on the kernel, entailing that $K$ is of convolution type in time. We note that all model computations simplify immensely whenever $K$ is \emph{separable} in the sense that
\begin{equation}\label{eq:separable_kernel}
K(t-s,h,\xi) = K_{1}(t-s)K_{2}(h,\xi),
\end{equation}
as all quantities involving $K$ essentially split into a product of a temporal and a spatial part. Unfortunately, it is likely more realistic that the temporal and spatial parts interact in a non-linear way, prompting the need for a non-separable kernel. This issue is related to that of non-separable covariance functions, which is a well-known problem within random field theory and is e.g. discussed in \textcite{CressieNoel1999CoNS}.

\subsection{A semi-parametric approximation scheme}\label{sec:kernel_approximation}
Selecting a kernel with a known functional form is necessary to do any meaningful computations within the model, but is prone to model misspecification and/or intractable model expressions. As an alternative to selecting a functional form, we now illustrate a semi-parametric approach, which approximates any given time-stationary model of the form \eqref{eq:deseasonalized_field} via generic parameters to be fitted (or \emph{learned}) from the data. This is very useful for model estimation, where one assumes that there exists a correctly specified and time stationary model $S_{t}(h)$ of the form \eqref{eq:model} for the true data generating process, with \emph{unknown} and possibly non-separable kernel $K$. By taking the approximation order sufficiently large, we may then estimate or calibrate the semi-parametric model on the data to uncover the true shape of $K$ without imposing any structure. This can be achieved by noting that the square integrable kernel function $K\in L^{2}(\mathbb{R}_{+}\times \Pi \times \Pi)$ of \eqref{eq:deseasonalized_field} can be represented as
\begin{equation}\label{eq:repr_K_L2}
K(t,r(\theta),r(\phi)) = \frac{1}{2\pi}\sum_{j_1=0}^\infty\sum_{j_{2}\in\mathbb{Z}}\sum_{j_{3}\in\mathbb{Z}}c_{j_{1},j_{2},j_{3}}\Psi_{j_1}(t)e^{\mathrm{i}j_{2}\theta}e^{\mathrm{i}j_{3}\phi},
\end{equation}
where $\Psi_{j_1}(t)$ is given by
\[
\Psi_{j_1}(t) = \left(\frac{j_1 !}{\Gamma (j_1 + \alpha + 1)}\right)^{1/2}t^{\tfrac{\alpha}{2}}e^{-\frac{t}{2}}\mathcal{L}_{j_1}^{(\alpha)}(t),
\]
with $\mathcal{L}_{j_1}^\alpha$ denoting the $j_{1}$'th generalized Laguerre polynomial with parameter $\alpha > -1$. The representation \eqref{eq:repr_K_L2} follows since the system $(\Psi_{j_1}(t))_{j_1 = 0}^{\infty}$ is an orthonormal basis for $L^{2}(\mathbb{R}_{+})$ (see, e.g., Theorem~5.7.1 in \textcite{szego75}) and $(\frac{1}{2\pi}e^{\mathrm{i}j_{2}\theta}e^{\mathrm{i}j_{3}\phi})_{(j_{1},j_{2})\in \mathbb{Z}\times \mathbb{Z}}$ is an orthonormal basis for $L^2(\Pi\times \Pi)$. The coefficients $c_{j_1,j_2,j_3}$ are the projection coefficients
\[
c_{j_1,j_2,j_3} = \langle K, \Psi_{j_1}e^{\mathrm{i}j_2 \cdot}e^{\mathrm{i}j_3 \cdot}\rangle_{L^2(\mathbb{R}_{+}\times \Pi\times \Pi)}.
\]
Truncating the sum in \eqref{eq:repr_K_L2} yields a finite order approximation $K^{(n_{1},n_{2},n_{3})}$ of the form
\begin{equation}\label{eq:truncated_kernel}
K^{(n_1,n_2,n_3)}(t,r(\theta),r(\phi)) =\frac{1}{2\pi} \sum_{j_1=0}^{n_1}\sum_{j_{2}\in \lbrace -n_{2},\ldots ,n_{2} \rbrace}\sum_{j_{3}\in \lbrace -n_{3},\ldots ,n_{3} \rbrace}c_{j_{1},j_{2},j_{3}}\Psi_{j_1}(t)e^{\mathrm{i}j_{2}\theta}e^{\mathrm{i}j_{3}\phi},
\end{equation}
which is exactly the projection of $K$ onto the finite dimensional subspace
\begin{equation}\label{eq:span}
\mathrm{span}\left( \Psi_{j_1}(t)e^{\mathrm{i}j_2\theta}e^{\mathrm{i}j_3\phi} : 0\leq j_1 \leq n_1, \; \lvert j_2\rvert \leq n_2, \; \lvert j_3 \rvert \leq n_3 \right). 
\end{equation}
We can then define the \emph{approximate field} $D_{t}^{(\mathbf{n})}(h)$ by
\begin{equation}\label{eq:approx_field} 
D_{t}^{(\mathbf{n})}(h) = \int_{A_{t}(h)}K^{(\mathbf{n})}(t-s,h,\xi)\sigma_{s}(\xi)L(ds,d\xi),
\end{equation}
with $\mathbf{n}=(n_1,n_2,n_3)$. The mean-squared error between $D_{t}^{(\mathbf{n})}(h)$ and $D_t(h)$ can be derived via the Itô isometry of \textcite{Walsh1986} as
\begin{equation}\label{eq:ambit_field_isometry_approx}
\begin{aligned}
\mathbb{E}\left[ \left( D_{t}(h) - D_{t}^{(\mathbf{n})}(h)\right)^2\right] &= \int_{A_{t}(h)}\left( K(t-s,h,r(\phi))-K^{(\mathbf{n})}(t-s,h,r(\phi)) \right)^2 d\phi ds \\
&= \int_{-\infty}^{t}\int_{0}^{2\pi}\left(\frac{1}{2\pi}\sum_{(j_1,j_2,j_3)\notin\mathcal{J}}c_{j_{1},j_{2},j_{3}}\Psi_{j_1}(t)e^{\mathrm{i}j_{2}\theta}e^{\mathrm{i}j_{3}\phi}\right)^2 d\theta ds,
\end{aligned}
\end{equation}
where $\mathcal{J}=\lbrace j_1\leq n_1\rbrace\cup\lbrace \lvert j_2\rvert\leq n_2\rbrace\cup \lbrace \lvert j_3\rvert \leq n_3\rbrace$, and it follows that $D_{t}^{(\mathbf{n})}(h)\to D_{t}(h)$ in $L^2(\Omega)$. As the basis in which we expand $K$ is orthonormal, it follows from Parseval's identity that the coefficients $c_{j_1,j_2,j_3}$ tend to zero in the limit, and it is therefore reasonable to expect the approximation error \eqref{eq:ambit_field_isometry_approx} to become small for sufficiently large approximation orders. The finite order accuracy depends on the decay rate of the $c_{j_1,j_2,j_3}$ coefficients, which may potentially be controlled by imposing suitable assumptions on the true kernel $K$. 

Ultimately, we may carry out inference on the true and unknown model $D_t(h)$, by fitting the simpler approximate model $D_t^{(n_1,n_2,n_3)}(h)$ to data. To this end, the approximate model will, without further structure, have $N=(n_1+1)(2n_2+1)(2n_3+1)$ coefficients $c_{j_1,j_2,j_3}$ to be fitted, along with the extra parameter $\alpha$, for a total of $N+1$ parameters. The total amount of parameters quickly blows up in practice, but since the projection coefficients are ``generic'', it may be possible to induce sparsity via various common regularization methods. The parameter $\alpha$ is of independent interest, as it controls the behaviour of the approximate kernel near $t=0$. In particular, we note that for $\alpha\in (-1,0]$, the temporal behaviour is similar to that of the gamma kernel, which is utilized in the context of Lévy semistationary processes for energy markets in \textcite{BARNDORFF-NIELSENOLEE.2013Mesp}{Bennedsen2017}. In the univariate case, the gamma kernel has been used to model \emph{roughness}, in the sense that the price paths are less regular than that of a Brownian motion, and the model is a semimartingale exactly when $\alpha = 0$. With non-Gaussian noise, the notion of Hölder regularity becomes redundant, but we note that even in our tempo-spatial setting, the case of $\alpha=0$ corresponds to a type of semimartingale property in time of the field $D_{t}^{(n_1,n_2,n_3)}(h)$ c.f. Chapter~5.3.3 of \textcite{Ambit} and this semimartingale property is lost for $\alpha < 0$ where the kernel becomes singular such that $\lim_{t\to 0}K(t,h,\xi)=\infty$. We note that even though the model $D_{t}^{(n_1,n_2,n_3)}(h)$ is an approximation, it is itself a very general and flexible model that can model complex dependence structures even at low approximation orders and exhibits similar temporal behaviour to the univariate models of \textcites{BARNDORFF-NIELSENOLEE.2013Mesp}{Bennedsen2017}.

\subsection{An empirical toy study of the kernel function}\label{sec:kernel_fitting}
In Section~\ref{sec:kernel_approximation}, we introduced the model $D_{t}^{(\mathbf{n})}(h)$, where $\mathbf{n}=(n_1,n_2,n_3)$ is a multi-index. The kernel function governing this field is an approximation to the underlying ``true'' kernel, with the added benefit that the approximate kernel consists of basis functions which are simple to deal with. In the case of rotational invariance/isotropy of the field $D_{t}(h)$, we may derive a tractable representation that permits estimation of $D_{t}^{(\mathbf{n})}$ via Whittle likelihood methods that involve the model spectral density which is derived in Proposition~\ref{prop:spectral_density} and requires that the driving Lévy basis has mean zero, i.e., $\mathbb{E}[L']=0$. This is not an unreasonable assumption for the properly de-seasonalized prices as illustrated on German data in Appendix~\ref{app:data}. The Whittle likelihood procedure is, in the general non-Gaussian case, a pseudo-likelihood method that is only able to identify the field up to the second order structure. Loosely speaking, the estimated $K$ can be considered as the kernel of the best Gaussian approximation to the non-Gaussian field, but the parameters of the driving Lévy basis $L$ must be identified by other means. We leave a full estimation procedure for future research and consider this exercise a first step towards characterizing the kernel function.
\begin{proposition}\label{prop:D_spectral_rep}
Let $D_t(h)$ be the de-seasonalized field \eqref{eq:deseasonalized_field} and suppose that the driving Lévy basis $L$ has mean zero, i.e., $\mathbb{E}[L']=0$. Then it holds that
\[
D_{t}(h) \overset{L^2}{=} \frac{1}{2\pi}\sum_{n\in\mathbb{Z}}X_n(t) e^{\mathrm{i}nh},
\]
where $X_n(t)$ are the scalar-valued and stationary (on $\mathbb{R}$) processes
\begin{equation}\label{eq:spectral_processes}
X_{n}(t) = \int_{0}^{2\pi}D_{t}(r(\theta))e^{-\mathrm{i}n\theta}d\theta.
\end{equation}
\end{proposition}

\begin{proposition}\label{prop:spectral_density}
Let $D_{t}(h)$ be as in Proposition~\ref{prop:D_spectral_rep}. Suppose in addition that $D_{t}(h)$ satisfies the conditions of Proposition~\ref{prop:ambit_stationary}, such that $D_{t+\tau}(R_{\theta}h)\overset{d}{=}D_{t}(h)$ with $R_\theta$ denoting the rotation matrix in \eqref{eq:rotation_matrix} and denote by $\mathbb{E}[\sigma^2]$ the stationary second moment of the volatility field $\sigma_{t}(h)$. Then the spectral density of $D_{t}(h)$ is given by $\sum_{n\in\mathbb{Z}}f_{n}$, where each $f_n$ is the spectral density of the process $X_{n}(t)$ in \eqref{eq:spectral_processes} and is given by
\[
f_{n}(u) = \frac{1}{2\pi}\int_{\mathbb{R}}e^{-\mathrm{i}u\tau}\mathbb{E}[X_{n}(t+\tau)\overline{X_n(t)}]d\tau = \frac{\mathbb{V}[L']\mathbb{E}[\sigma^2]}{2\pi} \times \lvert H_{n}(u)\rvert^2,  \quad H_{n}(v) = \sum_{j=1}^{\infty}a_{j,n}\widehat{\Psi}_{j}(v),
\]
where $u\in\mathbb{R}$, $a_{j,n}=c_{j,-n,n}$ with $c_{j_1,j_2,j_3}$ defined as in \eqref{eq:repr_K_L2}, and 
\begin{equation}\label{eq:Psi_fourier_transform}
\widehat{\Psi}_{j}(u) = \left( \frac{j!}{\Gamma (j+\alpha + 1)}\right)^{\frac{1}{2}}\Gamma (1+\tfrac{\alpha}{2})(\tfrac{1}{2}+\mathrm{i}u)^{-(\frac{\alpha}{2}+1)}{}_{2}F_{1}(-j,\tfrac{\alpha}{2}+1;\alpha+1;1-\tfrac{1}{1/2+\mathrm{i}u}),
\end{equation}
with ${}_{2}F_{1}$ denoting the hypergeometric function.
\end{proposition}
Using Proposition~\ref{prop:spectral_density}, we can compute the empirical spectral density from observations of the multivariate time series $(X_{-N},\ldots,X_{0}(t),\ldots ,X_{N}(t))^{\top}$ for some finite $N$ and match this to the model-implied spectral density of Proposition~\ref{prop:spectral_density}. In practice, the transfer functions $H_{n}(v)$ must be truncated at some finite order $J$, which ultimately corresponds exactly to modelling $D_t(h)$ by the approximate (and isotropic) field $D_{t}^{(J,N,N)}(h)$ of Section~\ref{sec:kernel_approximation}. It is therefore reasonable to assume that taking the truncation orders $J,N$ sufficiently large yields a good proxy for the true and unknown kernel function $K$. Since all observation points are equispaced in both time and space, the empirical spectral density can be computed very efficiently via the fast Fourier transform as the periodogram over the time grid $\lbrace t_{1}<t_{2}<\cdots <t_{J}\rbrace$
\begin{equation}\label{eq:empirical_spectral_density}
\widehat{f}_{n}(u_{k}) = \frac{1}{2\pi J}\left\lvert \sum_{j=1}^{J}X_{n}(t_j) e^{-\mathrm{i}u_{k}t_{j}}\right\rvert^2.
\end{equation}
The Whittle estimator is then found as $\vartheta^\ast = \arg\min_\vartheta \ell_W(\vartheta)$, where $\vartheta$ denotes the parameter vector of the model spectral density, $f_n(u;\vartheta)$, and $\ell_W$ is the loss function
\begin{equation}\label{eq:whittle_loss}
\ell_{W}(\vartheta) = \sum_{n=-N}^{N}\sum_{u_{k}}\left[ \log (f_{n}(u_k;\vartheta)) + \frac{\widehat{f}_n(u_k)}{f_{n}(u_k;\vartheta)}\right].
\end{equation}
As is customary in the literature on spectral methods, we assume that the spectral density of $D_t(h)$ satisfies the minimum-phase condition, which in particular entails that $\log (f_{n}(u))$ is integrable and that the transfer function $H_{n}(u)$ is analytic and without zeros in the half-plane $\Im (u)<0$. See, e.g., \textcites{Bingham1}{Bingham2} for surveys on this and related topics. In Proposition~\ref{prop:identifiability} below, it is shown that under the minimum-phase condition plus some additional regularity, the model with a kernel of the form \eqref{eq:repr_K_L2} is well-specified, in the sense that two differently parametrized kernels cannot produce the same model spectral density. This also holds for the approximate kernel $K^{(J,N,N)}$, such that, in practice, we obtain the best possible approximation to the true kernel in the finite dimensional space \eqref{eq:span}. The required additional regularity are mainly technical assumptions which are simple to enforce in a numerical optimization procedure.
\begin{proposition}\label{prop:identifiability}
Let the setting be as in Proposition~\ref{prop:spectral_density}. Suppose in addition that $a_{0,0}=1$ and that for each $n$, it holds that $a_{j^\ast (n),n}>0$, where $j^\ast (n)$ is defined as the first index for which $a_{j^\ast (n),n}\neq 0$. Let $K_1,K_2$ be kernel functions satisfying the conditions of Proposition~\ref{prop:ambit_stationary} and let $(f_{n}^{(1)})_{n\in\mathbb{Z}},(f_{n}^{(2)})_{n\in\mathbb{Z}}$ be their respective spectral densities. If $f_{n}^{(1)}(u)=f_{n}^{(2)}(u)$ for all $n$ and $u$, then $K_1\overset{L^2}{=}K_2$. Additionally, defining $f_{n,J}$ as the spectral density of $X_{n}(t)$ in \eqref{eq:spectral_processes} with the transfer function $H_{n}$ truncated at level $J$, it holds that if $f_{n,J}^{(1)}(u)=f_{n,J}^{(2)}(u)$ for all $u$, $n=-N,\ldots ,N$ and with $J$ fixed, then $K_{1}^{(J,N,N)}\overset{L^2}{=}K_{2}^{(J,N,N)}$.
\end{proposition}
Consistency and asymptotic normality of the Whittle estimator $\vartheta^\ast$ as $J\to\infty$ in \eqref{eq:empirical_spectral_density} holds under standard regularity assumptions on the multivariate process $(X_{-N}(t),...,X_{0}(t),...,X_{N}(t))$ such as existence of fourth moments and weak mixing conditions (see, e.g., \textcite{Hannan1970}). These model  properties are inherited from the stationary field $D_{t}(h)$, where we note that \textcite{CuratoStelzerStroh2022} derive central limit theorems and certain weak mixing conditions for a class of ambit fields on $\mathbb{R}^d$ that are conceptually very similar to that of Example~\ref{ex:BNS_model} (see in particular their Corollaries~4.6 and 4.8). Extending their results to cover our non-Euclidean setting is, however, beyond the scope of this paper.

\begin{figure}
	\begin{center}
	\begin{subfigure}[b]{0.48\textwidth}
		\includegraphics[scale=0.375]{./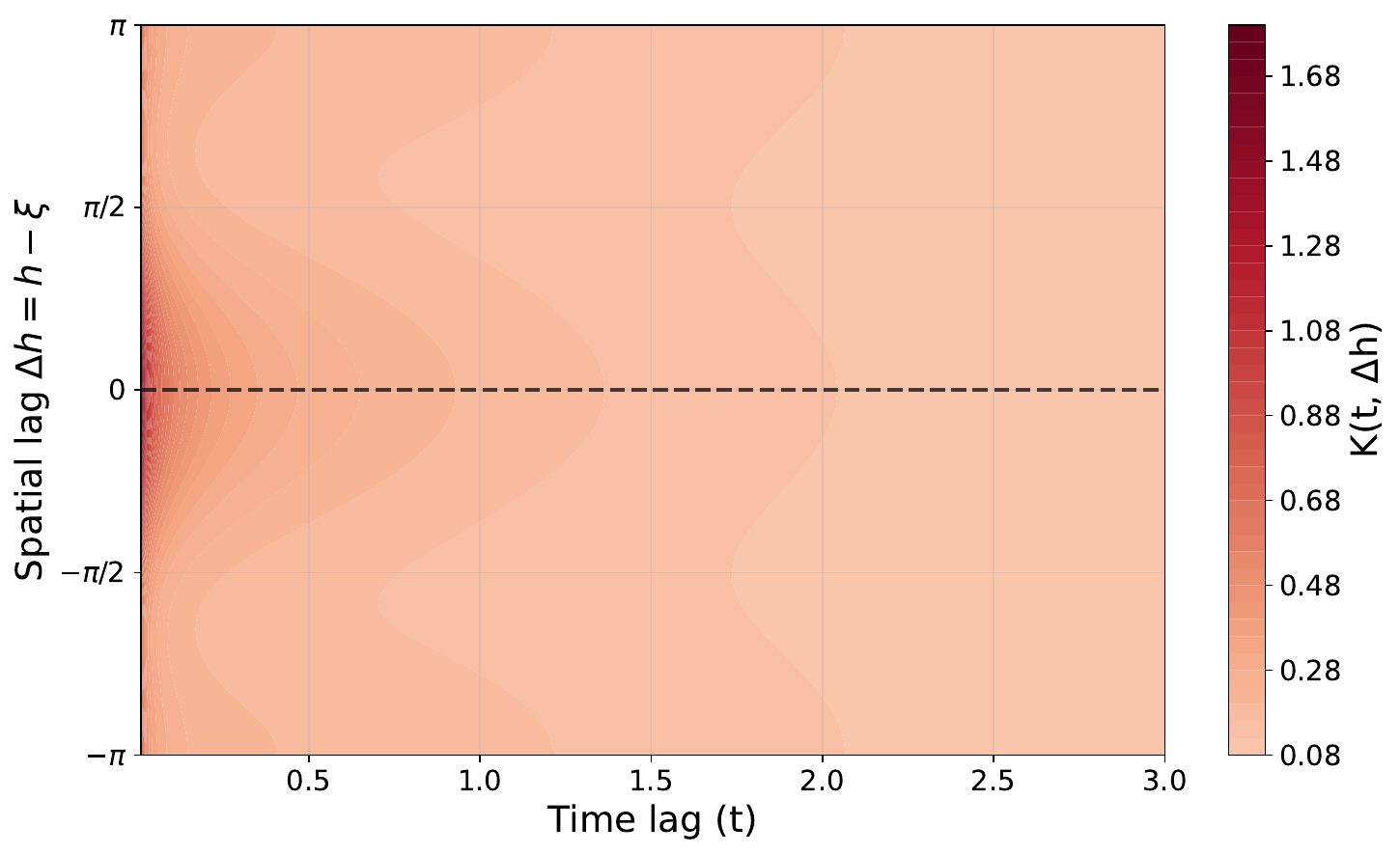}
	\end{subfigure}
	\begin{subfigure}[b]{0.48\textwidth}
		\includegraphics[scale=0.375]{./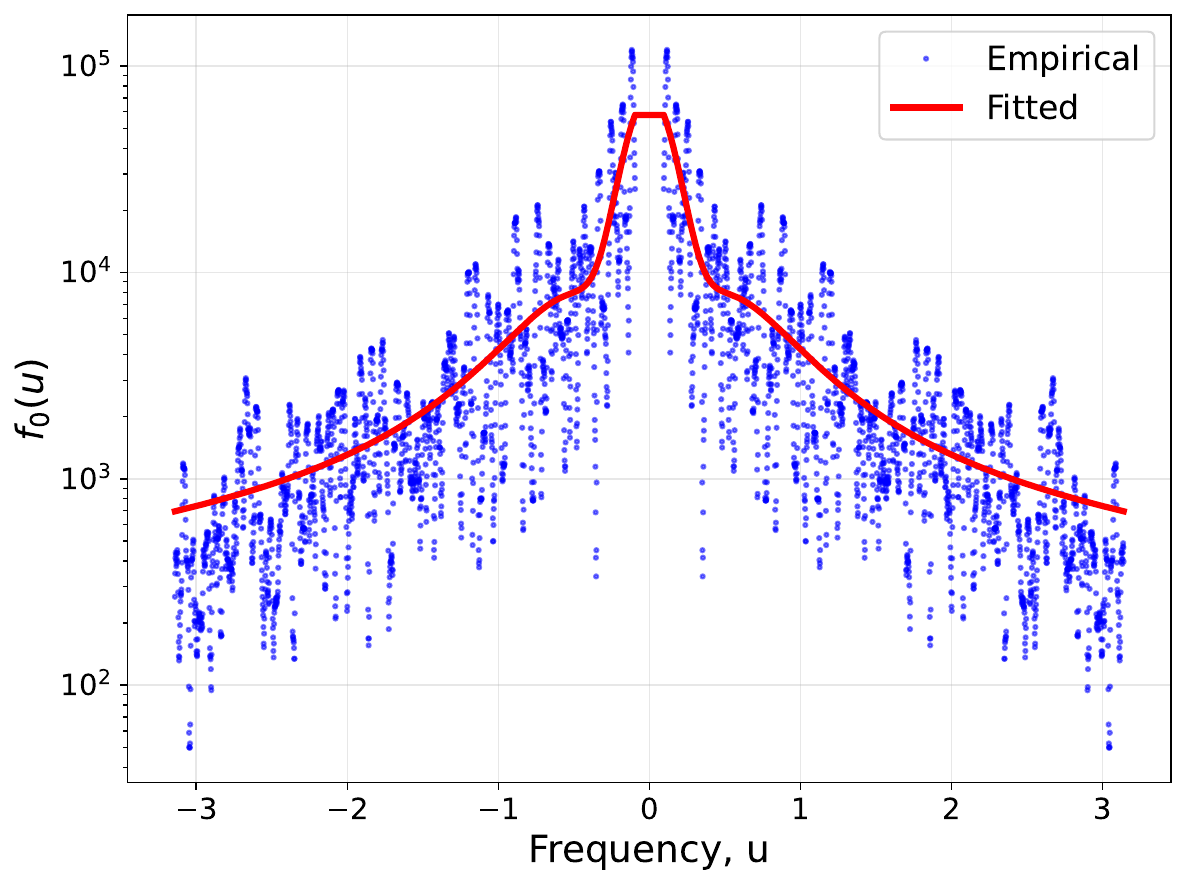}
	\end{subfigure}
	\end{center}
	\caption{Fitted isotropic kernel function (left) and fitted spectral density at mode $n=0$ against the empirical counterpart (right).}\label{fig:kernel_fit}
\end{figure}
As an illustration, we consider minimizing the loss function \eqref{eq:whittle_loss} for the isotropic de-seasonalized model of Proposition~\ref{prop:spectral_density} on de-trended German data from the period October 1st 2018 to October 1st 2024. A description of the de-seasonalization procedure is provided in Appendix~\ref{app:data}. We choose approximation orders corresponding to the field $D_{t}^{(4,2,2)}$ which has parameter vector $\vartheta = (\alpha, \varsigma \, ;\, (a_{j,n}))$, where $\varsigma = \mathbb{V}[L']\mathbb{E}[\sigma^2]$. This leaves 16 total parameters to be estimated, of which 14 are the unrestricted generic coefficients $a_{j,n}$ (for comparison, note that the 24-dimensional ambit process of \textcite{VeraartVeraart2014} has $\approx 200$ parameters). The model kernel function is $K^{(4,2,2)}$ given by \eqref{eq:truncated_kernel} and so it follows from Proposition~\ref{prop:identifiability} that we only require summing over $n\in \lbrace -2,\ldots, 2\rbrace$ in the loss function \eqref{eq:whittle_loss}. To improve the numerical implementation, we utilize a profile likelihood approach where the variance parameter $\varsigma$ is estimated from the first order condition of \eqref{eq:whittle_loss} as
$
\widehat{\varsigma}(\vartheta) = \frac{1}{(2N+1)J}\sum_{n}\sum_{j}\lvert H_{n}(u_j;\vartheta)\rvert^2 f_n(u_{j};\vartheta).
$
The estimated kernel parameters are provided in Table~\ref{tab:kernel_coefficients} and the resulting isotropic kernel function is depicted on Figure~\ref{fig:kernel_fit} along with the fitted spectral mode at $n=0$. The fits of the remaining spectral modes $f_{n}(u)$ for $n\neq 0$ are shown in Appendix~\ref{app:plot}. We note that the empirical spectral densities are very noisy, but that the overall shapes are relatively well-matched by the model spectral densities. The fitted kernel exhibits quite slow decay in the temporal dimension, suggesting that de-seasonalized prices exhibit substantial autocorrelation at long lags. The overall shape of the fitted kernel aligns well with the notions of adjacency, autocorrelation, and cyclicality as discussed in Section~\ref{sec:model_setup}. We also note that the $\alpha$ parameter is essentially estimated to be at the lower bound of $-1$, suggesting that singular behaviour in the kernel function is helpful to reproduce the observed autocorrelation structure of prices. This is consistent with the findings of \textcite{Bennedsen2017} for the daily average price series. The fitted kernel showcases some important features of the dependence structure in spot prices, but suffers from the assumption of rotational invariance. An interesting avenue for further research is to develop a robust estimation methodology for general kernels of the form $K(t-s,h,\xi)$, to shed more light on the cross sectional dependence structure.
\begin{table}[htbp]
\centering
\begin{tabular}{cccccc}
$n$\textbackslash $j$ & 0 & 1 & 2 & 3 & 4 \\
\midrule
0 & $1.0000$ & $-0.1939$ & $0.0609$ & $-0.0713$ & $0.0305$ \\
1 & $0.5608$ & $0.0293$ & $0.0117$ & $0.0025$ & $-0.0014$ \\
2 & $0.2490$ & $-0.0031$ & $0.0054$ & $-0.0063$ & $-0.0003$ \\
\midrule
\multicolumn{6}{c}{$\alpha = -0.9910$, \quad $\varsigma = 6.4677 \times 10^{4}$} \\
\bottomrule
\end{tabular}
\vspace{0.2cm}
\caption{Estimated kernel parameters $a_{j,n}=c_{j,-n,n}$ based on Whittle likelihood. The coefficient $a_{0,0}$ is fixed at $1$ to ensure the minimum-phase condition.}
\label{tab:kernel_coefficients}
\end{table}

\section{A simulation scheme}\label{sec:simulation}
In this Section, we develop a simulation scheme for an ambit field of the form \eqref{eq:deseasonalized_field}, which is inspired by the methodology developed in \textcites{BenthEyjolfssonVeraart2014}{Eyjolfsson2015}, but with the advantage that the proposed scheme is able to treat singular kernels without having to truncate the kernel at some finite level. The procedure relies on a representation of the ambit field as a countable sum of certain integrated processes, which ultimately permits an iterative simulation scheme. For ambit fields on $\mathbb{R}^d$, this type of representation is generally only an approximation, which is obtained after truncating the spatial domain to a compact subset. In our setting, however, the spatial domain is already compact, and hence the representation of the ambit field as a countable sum is in fact exact and given in Proposition \ref{prop:simulation}. The simulation scheme is based on truncating this sum at some finite level, and in Proposition \ref{prop:error_bound} we provide a bound on the error between the true and the truncated ambit field. To obtain the desired representation, we assume that time starts at $t=0$ and make the following technical assumption on the kernel function.
\begin{assumption}\label{ass:laplace_kernel}
Let $K(t-s,h,\xi)$ be a kernel function satisfying the criteria of Corollary \ref{cor:ambit_stationary}. Then we assume that the function $t\mapsto K(t,h,\xi)$ is continuously differentiable on $(0,\infty)$ and such that $\lvert K(t,h,\xi) \rvert < Me^{-\gamma t}$ for some $M > 0$ and $\gamma > 0$.
\end{assumption}
\begin{proposition}\label{prop:simulation}
Let $Y_{t}(h)$ be an ambit field according to Definition \ref{def:ambit_field} of the form
\[
Y_t(h) = \int_{0}^{t}\int_{\Pi}K(t-s,h,\xi)\sigma_{s}(\xi)L(ds,d\xi),
\]
where we assume that $\mathbb{E}[L']=0$ and $K$ and $\sigma$ satisfy the conditions of Corollary \ref{cor:ambit_stationary} with $\sigma$ independent of $L$, and  $\Pi$ is the unit circle. 
Then $Y_{t}(h)$ has the representation
\begin{equation}\label{eq:simulation_representation}
Y_{t}(h) = \sum_{n\in\mathbb{Z}}\int_{\mathbb{R}}\widehat{K}(z_{r}+\mathrm{i}z_{i},h,n) V_{n}(t,z_{r}+\mathrm{i}z_{i}) dz_{i},
\end{equation}
where $z_{r}\in (-\gamma,0)$ and
\begin{equation}\label{eq:kernel_transforms}
\begin{gathered}
\widehat{K}(z,h,n) = \int_{0}^{\infty}e^{-zx}\tilde{K}(x,h,n) dx, \quad
\widetilde{K}(t-s,h,n) = \frac{1}{2\pi}\int_{0}^{2\pi}e^{\mathrm{i}\phi n}K(t-s,h,r(\phi)) d\phi,
\end{gathered}
\end{equation}
and $t\mapsto V_{n}(t,z)$ is a complex-valued OU type process defined by
\begin{equation}\label{eq:Vn_OU}
V_{n}(t,z) = \int_{0}^{t}\int_{\Pi}e^{(z_{r}+\mathrm{i}z_{i})(t-s)}e^{\mathrm{i}\xi n}\sigma_{s}(\xi) L(ds,d\xi).
\end{equation}
\end{proposition}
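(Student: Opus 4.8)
The plan is to decouple the angular and temporal dependence of $K$ by two classical transforms --- a Fourier series on the circle $\Pi$ and a Laplace transform in time --- and then to push both of the resulting deterministic integrations past the stochastic integral using the stochastic Fubini theorem of \textcite{Walsh1986}. First I would fix $(t,s,h)$ and expand the $2\pi$-periodic map $\phi\mapsto K(t-s,h,r(\phi))$ in its Fourier series,
\[
K(t-s,h,r(\phi)) = \sum_{n\in\mathbb{Z}}\widetilde{K}(t-s,h,n)\,e^{-\mathrm{i}n\phi},
\]
with coefficients $\widetilde{K}(t-s,h,n)$ exactly as in \eqref{eq:kernel_transforms}. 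The $L^2$-integrability granted by \eqref{eq:integrability_condition} ensures convergence in $L^2(\Pi)$, while smoothness of $K$ in the angular variable yields decay of the coefficients in $n$, which is needed below. This step reduces the spatial dependence to the elementary factors $e^{-\mathrm{i}n\phi}$.

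Next I would treat the temporal factor mode by mode. By Assumption~\ref{ass:laplace_kernel} the map $x\mapsto\widetilde{K}(x,h,n)$ inherits an exponential bound $|\widetilde{K}(x,h,n)|\leq Me^{-\gamma x}$, so its Laplace transform $\widehat{K}(z,h,n)$ is well-defined and analytic on the half-plane $\{\mathrm{Re}(z)>-\gamma\}$. Choosing a vertical contour $\mathrm{Re}(z)=z_r$ with $z_r\in(-\gamma,0)$ inside this strip, the Bromwich inversion formula represents the temporal kernel as
\[
\widetilde{K}(x,h,n) = \frac{1}{2\pi}\int_{\mathbb{R}}\widehat{K}(z_r+\mathrm{i}z_i,h,n)\,e^{(z_r+\mathrm{i}z_i)x}\,dz_i,
\]
the constant $1/(2\pi)$ being absorbed into the normalisation of the transforms. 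Substituting both representations into $Y_t(h)$ with $x=t-s$ turns the kernel into a superposition of the separable exponentials $e^{(z_r+\mathrm{i}z_i)(t-s)}e^{-\mathrm{i}n\phi_\xi}$, weighted by the deterministic coefficient $\widehat{K}(z_r+\mathrm{i}z_i,h,n)$.

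The crux is then to interchange the deterministic operations (the sum over $n$ and the integral over $z_i$) with the Walsh stochastic integral against $\sigma\,L$. After the interchange, the inner stochastic integral of $e^{(z_r+\mathrm{i}z_i)(t-s)}e^{\mathrm{i}n\xi}\sigma_s(\xi)$ over $A_t(h)=[0,t]\times\Pi$ is, by definition, precisely the complex OU process $V_n(t,z_r+\mathrm{i}z_i)$ of \eqref{eq:Vn_OU}; the sign conventions in \eqref{eq:kernel_transforms} and \eqref{eq:Vn_OU} are arranged so that the angular exponent matches (a relabelling $n\mapsto -n$ leaves the sum over $\mathbb{Z}$ unchanged). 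This yields the representation \eqref{eq:simulation_representation}. That $t\mapsto V_n(t,z)$ is an OU process is immediate from the exponential kernel $e^{z(t-s)}$ and is what ultimately powers the iterative simulation.

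I expect the justification of the stochastic Fubini step to be the main obstacle. I would verify its hypotheses through the It\^o isometry, showing that
\[
\sum_{n\in\mathbb{Z}}\int_{\mathbb{R}}\bigl|\widehat{K}(z_r+\mathrm{i}z_i,h,n)\bigr|\Bigl(\mathbb{E}\!\int_{0}^{t}\!\int_{\Pi}\bigl|e^{(z_r+\mathrm{i}z_i)(t-s)}\sigma_s(\xi)\bigr|^{2}\lambda_{\mathcal{C}}(ds,d\xi)\Bigr)^{1/2}dz_i<\infty.
\]
Here $|e^{(z_r+\mathrm{i}z_i)(t-s)}|=e^{z_r(t-s)}$ with $z_r<0$ is uniformly bounded and time-integrable, the second-moment integrability of $\sigma$ is guaranteed by \eqref{eq:integrability_condition}, the decay of $\widehat{K}$ along the vertical line makes the $z_i$-integral converge, and the smoothness of $K$ in $\phi$ (hence rapid decay of the Fourier coefficients, transmitted to $\widehat{K}(\cdot,h,n)$) makes the sum over $n$ summable. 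Since $\mathbb{E}[L']=0$ and the L\'evy seed has finite variance, every integral in play is square-integrable, so all identities hold in the $L^2(\Omega)$ sense; because $\Pi$ is already compact, no spatial truncation is needed and the representation is exact rather than approximate.
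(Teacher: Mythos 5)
Your proposal follows essentially the same route as the paper: Fourier expansion of the kernel in the angular variable, Bromwich/Laplace inversion of the temporal coefficients under Assumption~\ref{ass:laplace_kernel}, and a stochastic Fubini interchange to identify the inner stochastic integrals with the complex OU processes $V_{n}$. The only point you treat more lightly than the paper is the verification that $t\mapsto V_{n}(t,z)$ is genuinely an Ornstein--Uhlenbeck process, which the paper establishes by deriving the integral equation $z\int_{0}^{t}V_{n}(s,z)\,ds = V_{n}(t,z)-L_{n}(t)$ via a Fubini argument; conversely, your discussion of the hypotheses for the stochastic Fubini step is more explicit than the paper's.
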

\begin{figure}[htbp]
	\centering
	\begin{subfigure}[b]{0.495\textwidth}
		\includegraphics[scale=0.4]{./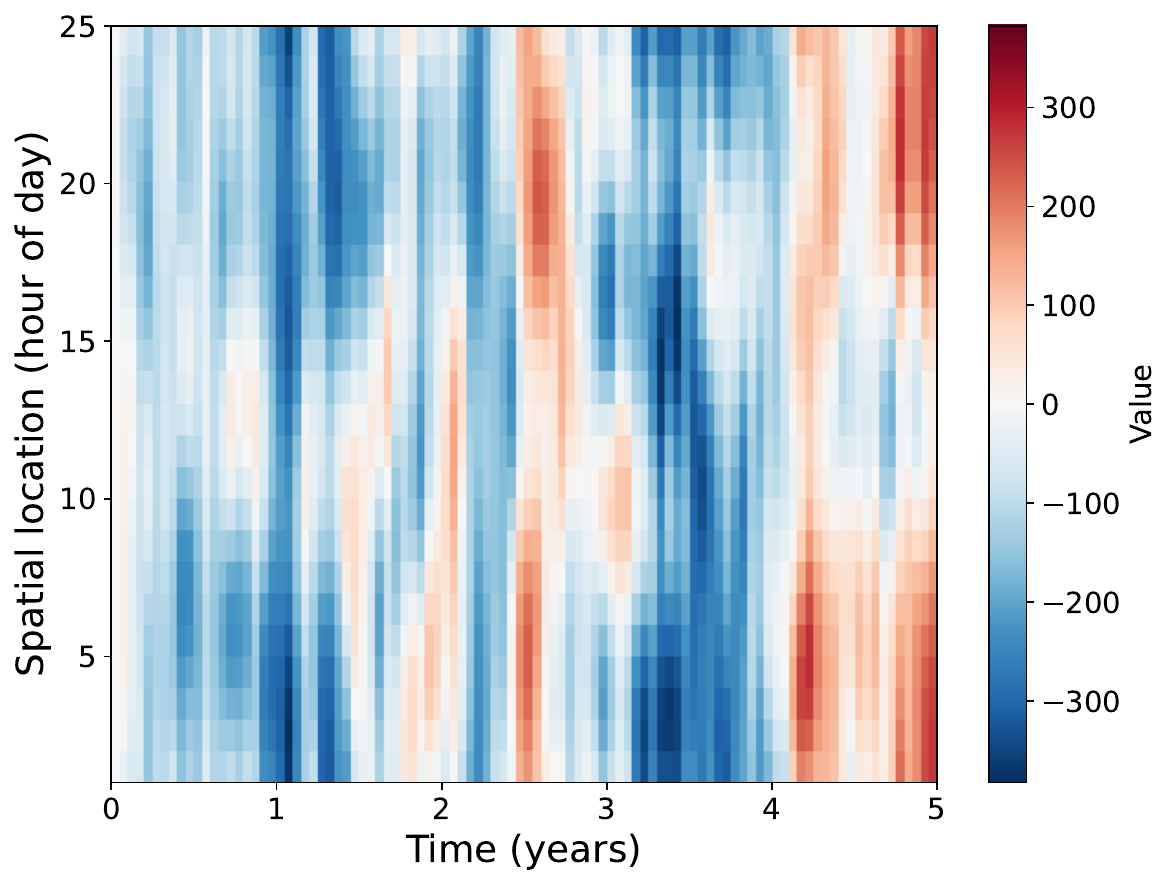}
	\end{subfigure}
	\begin{subfigure}[b]{0.495\textwidth}
		\includegraphics[scale=0.4]{./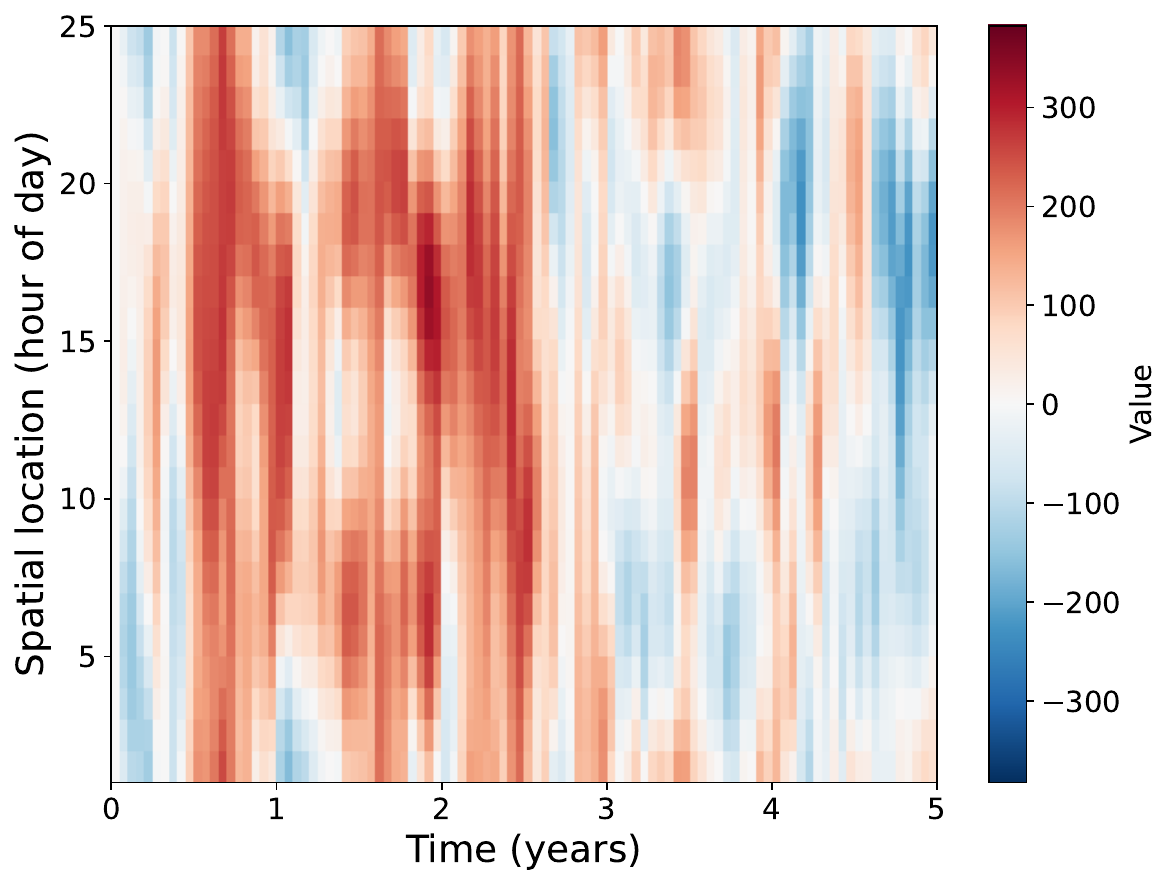}
	\end{subfigure}
	\caption{Generic realizations of simulated Gaussian (left) and NIG (right) ambit fields without stochastic volatility.}\label{fig:sim_fields}
\end{figure}
To see why the representation \eqref{eq:simulation_representation} of Proposition \ref{prop:simulation} is useful, we truncate the countable sum at some finite $N$ to obtain
\begin{equation}\label{eq:Yn_approximation}
Y_{t}(h) \approx Y_{t}^{(N)}(h) = \sum_{n=-N}^{N}\int_{\mathbb{R}}\widehat{K}(z_r + \mathrm{i}z_i, h, n)V_{n}(t,z) dz.
\end{equation}
This allows for an iterative (in time) simulation scheme of the approximate field since we may simulate the OU type process $t\mapsto V_{n}(t,z)$ iteratively in time, via an Euler scheme as
\begin{equation}\label{eq:sim_complex_OU}
V_{n}(t_{k+1},z) = e^{(z_r + \mathrm{i}z_{i})(t_{k+1}-t_{k})}V_{n}(t_{k},z) + e^{(z_r + \mathrm{i}z_{i})(t_{k+1}-t_{k})}\left( L_{n}(t_{k+1})-L_{n}(t_{k})\right),
\end{equation}
where
\[
L_{n}(t) = \int_{0}^{t}\int_{\Pi}e^{\mathrm{i}n \xi}\sigma_{s}(\xi) L(ds,d\xi).
\]
If we can simulate the volatility field $\sigma_t (h)$, we can then discretize the set $[t_{k},t_{k+1}]\times \Pi=\bigcup_{j=1}^{J}\Pi_j$ and simulate the stochastic integral by simulating from $L(\Pi_j)$. The simulation procedure for an ambit field without stochastic volatility modulation is detailed in Example~\ref{ex:simulation_algorithm} in Appendix~\ref{app:simulation}, where the general case with stochastic volatility modulation follows by first generating the volatility field according to Example \ref{ex:simulation_algorithm}. Simulation code for reproducing the plots can be found at \textcolor{blue}{\url{github.com/Tkkloster/ambit_simulation}}. Proposition \ref{prop:error_bound} below provides an estimate on the error between the exact and the simulated field, where we can control the approximation error if we can control the decay rate of the coefficients $\lvert \hat{K}(z,h,n) \rvert$ in $n$. Furthermore, we note that the approximation error is reduced when $\lvert z_r\rvert$ is large. We would therefore like to choose $z_r$ close to $-\gamma$, but for numerical stability this is not always desirable. In our experience, selecting $z_{r}=-\gamma/2$ usually works well.
\begin{figure}[htbp]
	\centering 
	\includegraphics[scale=0.4]{./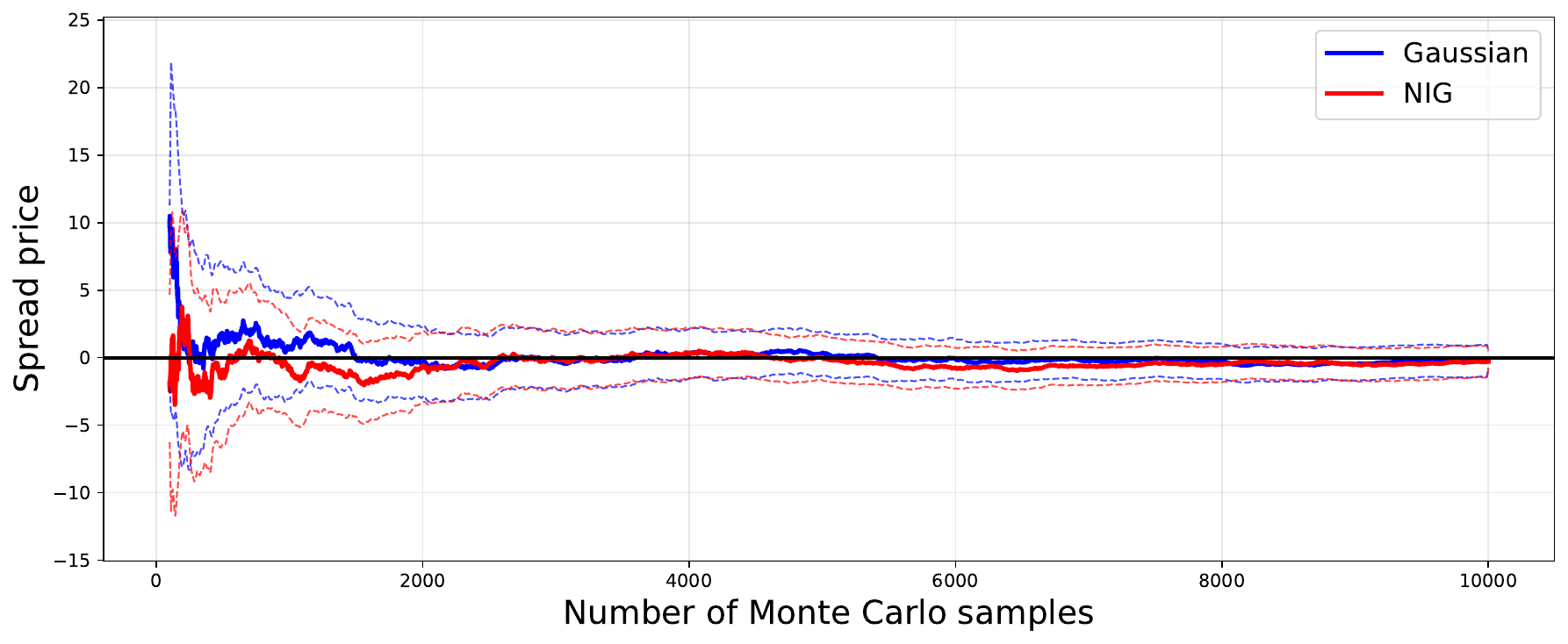}
	\caption{Price of within-day spread as computed via Monte Carlo simulation, with the increasing sample size on the horizontal axis. The Gaussian model is represented as a blue line, and the NIG model is represented as a red line. Corresponding $95\%$ confidence intervals are indicated as dotted lines.}\label{fig:spread_simulation}
\end{figure}
\begin{proposition}\label{prop:error_bound}
Let $Y_t$ be defined as in Proposition \ref{prop:simulation} and $Y_t^{(N)}$ as in \eqref{eq:Yn_approximation}. Suppose that there is a $k_{\sigma}>0$ such that $\mathbb{E}[\sigma_{t}(h)^2]<k_{\sigma}$ for all $(t,h)\in [0,T]\times \Pi$, and that $z_{r}<0$. Then we have the following bound on the approximation error
\begin{equation}\label{eq:simulation_bound}
\mathbb{E}\left[ \left( Y_{t}(h)-Y_{t}^{(N)}(h)\right)^2 \right] \leq \frac{\mathbb{V}[L'] k_{\sigma}}{8\pi^2 \lvert z_r\rvert}\cdot \sum_{n\in\mathbb{Z}\setminus \lbrace -N,\ldots ,N \rbrace}\left(\int_{\mathbb{R}}\lvert \widehat{K}(z_r+\mathrm{i}z_i,h,n)\rvert dz_{i}\right)^2.
\end{equation}
\end{proposition}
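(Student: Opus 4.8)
The plan is to write the truncation error as a single Walsh stochastic integral, apply the associated Itô isometry, and then reduce the resulting spatial integral to a tail sum over Fourier modes by orthogonality. First I would start from the exact representation of Proposition~\ref{prop:simulation} and subtract the truncation \eqref{eq:Yn_approximation}, so that
\[
Y_t(h) - Y_t^{(N)}(h) = \sum_{\lvert n\rvert > N}\int_{\mathbb{R}}\widehat{K}(z_r+\mathrm{i}z_i,h,n)\,V_n(t,z_r+\mathrm{i}z_i)\,dz_i .
\]
Inserting the definition \eqref{eq:Vn_OU} of $V_n$ and applying the stochastic Fubini theorem of \textcite{Walsh1986} to interchange the $dz_i$-integral and the sum over $n$ with the stochastic integral, this becomes one integral against $L$,
\[
Y_t(h) - Y_t^{(N)}(h) = \int_{0}^{t}\int_{\Pi}G_N(t-s,\xi)\,\sigma_s(\xi)\,L(ds,d\xi), \qquad G_N(u,\xi) = \sum_{\lvert n\rvert > N} g_n(u)\,e^{\mathrm{i}n\xi},
\]
where $g_n(u) = \int_{\mathbb{R}}\widehat{K}(z_r+\mathrm{i}z_i,h,n)\,e^{(z_r+\mathrm{i}z_i)u}\,dz_i$ (up to the $1/(2\pi)$ normalisations carried by $\widehat K,\widetilde K$). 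Justifying this interchange is the technical crux; it is where I would use Assumption~\ref{ass:laplace_kernel}, since the exponential bound $\lvert K\rvert < M e^{-\gamma t}$ together with $z_r\in(-\gamma,0)$ guarantees absolute convergence of the $z_i$-integral and of the mode sum in the relevant $L^2$ sense, so that the deterministic and stochastic integrations may legitimately be exchanged.

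Next I would invoke the Itô isometry for the Walsh integral (see \textcite{Walsh1986} and \textcite{Ambit}). Because $\mathbb{E}[L']=0$ and $\sigma$ is independent of $L$, the drift part of the integral vanishes, and since $Y_t(h)-Y_t^{(N)}(h)$ is real its second moment equals the expected squared modulus, giving
\[
\mathbb{E}\!\left[\left(Y_t(h)-Y_t^{(N)}(h)\right)^2\right] = \mathbb{V}[L']\,\mathbb{E}\!\left[\int_0^t\!\int_\Pi \lvert G_N(t-s,\xi)\rvert^2\,\sigma_s(\xi)^2\,\lambda_\Pi(d\xi)\,ds\right].
\]
The complex modulus is handled by splitting $G_N$ into real and imaginary parts, whose squares add with no cross term. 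I would then bound $\mathbb{E}[\sigma_s(\xi)^2]$ by $k_\sigma$, pull this constant out, and use orthogonality of $\{e^{\mathrm{i}n\xi}\}$ on the circle, $\int_\Pi e^{\mathrm{i}(n-m)\xi}\lambda_\Pi(d\xi)\propto \delta_{nm}$, to collapse the spatial integral of $\lvert G_N\rvert^2$ into $\sum_{\lvert n\rvert>N}\lvert g_n(t-s)\rvert^2$.

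Finally I would estimate each mode. Writing $e^{(z_r+\mathrm{i}z_i)u}=e^{z_r u}e^{\mathrm{i}z_i u}$ and using $\lvert e^{\mathrm{i}z_i u}\rvert=1$, the triangle inequality yields $\lvert g_n(u)\rvert \le e^{z_r u}\int_{\mathbb{R}}\lvert\widehat K(z_r+\mathrm{i}z_i,h,n)\rvert\,dz_i$. Substituting $u=t-s$, extending the time integral from $[0,t]$ to $[0,\infty)$ (a harmless over-estimate), and using $z_r<0$ to evaluate $\int_0^\infty e^{2z_r u}\,du = 1/(2\lvert z_r\rvert)$ then produces precisely the tail sum $\sum_{\lvert n\rvert>N}\big(\int_{\mathbb{R}}\lvert\widehat K(z_r+\mathrm{i}z_i,h,n)\rvert\,dz_i\big)^2$ with an overall prefactor proportional to $\mathbb{V}[L']\,k_\sigma/\lvert z_r\rvert$. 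Bookkeeping the $1/(2\pi)$ factors from $\widetilde K$ and the Bromwich inversion against the $2\pi$ arising from orthogonality yields the stated constant $1/(8\pi^2)$, completing \eqref{eq:simulation_bound}. I expect the only genuinely delicate point to be the rigorous justification of the stochastic Fubini step; the remaining arguments are orthogonality and a single exponential integral.
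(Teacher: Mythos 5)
Your proposal follows essentially the same route as the paper's proof: apply the Walsh--It\^o isometry to the kernel tail $K-K^{(N)}$, use orthogonality of $\lbrace e^{\mathrm{i}n\xi}\rbrace$ on the circle to decouple the modes, bound each Bromwich integral by $e^{z_r u}\int_{\mathbb{R}}\lvert\widehat{K}(z_r+\mathrm{i}z_i,h,n)\rvert\,dz_i$, and evaluate $\int_0^t e^{2z_r s}\,ds\le 1/(2\lvert z_r\rvert)$; reassembling the error into a single stochastic integral via stochastic Fubini first is only a cosmetic difference, since the representation of Proposition~\ref{prop:simulation} already supplies the decomposition and the paper simply works with the deterministic kernel difference directly. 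The one point that does not check out as written is your final constant bookkeeping: a clean Parseval computation, $\int_0^{2\pi}\bigl\lvert\sum_n b_n(u)e^{\mathrm{i}n\xi}\bigr\rvert^2 d\xi = 2\pi\sum_n\lvert b_n(u)\rvert^2$ with $\lvert b_n(u)\rvert\le \frac{1}{2\pi}e^{z_r u}\int_{\mathbb{R}}\lvert\widehat{K}\rvert\,dz_i$, produces the prefactor $\mathbb{V}[L']k_\sigma/(4\pi\lvert z_r\rvert)$ rather than $\mathbb{V}[L']k_\sigma/(8\pi^2\lvert z_r\rvert)$, so the asserted cancellation of the $2\pi$'s to give $1/(8\pi^2)$ should be verified rather than claimed --- the qualitative content of the bound (the tail sum of squared $L^1$-norms of $\widehat{K}$ with a $1/\lvert z_r\rvert$ decay) is nevertheless exactly the paper's.
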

\begin{remark}
In the case of a kernel of the semi-parametric form \eqref{eq:truncated_kernel}, the Fourier transform $\widetilde{K}(t-s,h,n)$ is non-zero for only finitely many $n$ by definition, so we can simulate without any error incurred by truncation of the countable sum. 
\end{remark}

\subsection{A concrete example}
We now illustrate the proposed simulation scheme by simulating from a model with the fitted kernel of Section~\ref{sec:kernel_fitting} with parameters as in Table~\ref{tab:kernel_coefficients}, which should then exhibit a realistic covariance structure. We thus simulate from a rotation invariant model (without stochastic volatility for simplicity) of the form
\[
Y_{t}(h) = \int_{0}^{t}\int_{\Pi}K^{(4,2,2)}(t-s,\theta_h-\theta_\xi)L(ds,d\xi),
\]
We consider both the case where $L$ is a Gaussian Lévy basis as in Example~\ref{ex:Gaussian_levy_basis} and where $L$ is a NIG Lévy basis as in Example~\ref{ex:NIG_levy_basis}. By inspection of the kernel \eqref{eq:truncated_kernel}, we find that it satisfies Assumption~\ref{ass:laplace_kernel} with $\gamma = \frac{1}{2}$ and we therefore take the integration contour to be $z_{r}=-\frac{1}{4}$. The transform $\widehat{K}$ is simply given by $\widehat{K}(z,h,n)=e^{\mathrm{i}nh}H_{n}(z)$ with $H_{n}(z)$ as in Proposition~\ref{prop:spectral_density}. The variance of the Gaussian noise is chosen to reflect the estimated variance $\mathbb{V}[L']$ in Section~\ref{sec:kernel_fitting} at $\Sigma=64,677$. For the NIG Lévy basis, we choose the parameters as
\[
\delta_{NIG} = 710, \quad \alpha_{NIG} = 0.5, \quad \beta_{NIG} = 0.48, \quad \mu_{NIG} = -\frac{\delta_{NIG}\beta_{NIG}}{\sqrt{\alpha_{NIG} - \beta_{NIG}}},
\]
which are such that the noise has zero mean and with the same variance as the Gaussian noise. The parameters $\alpha_{NIG},\beta_{NIG}$ are then tuned such that the skewness and excess kurtosis of the NIG Lévy basis closer to their unconditional empirical counterparts for the observed spot price residuals.

In Figure \ref{fig:sim_fields}, we provide an example of a generic realization of each ambit field. For the simulation, we consider a time interval of $T=5$ years, and simply set $N=2$ which encompasses all of the spatial information in the kernel. We simulate the field $Y_{t}(h)$ at 24 spatial points over weekly time increments using a spatial discretization of $72$. In the notation of Example~\ref{ex:simulation_algorithm}, this means $J=260$ and $H=72$. 
\begin{figure}[htbp]
    \centering
    \begin{subfigure}[b]{0.48\linewidth}
        \centering
        \includegraphics[scale=0.3]{./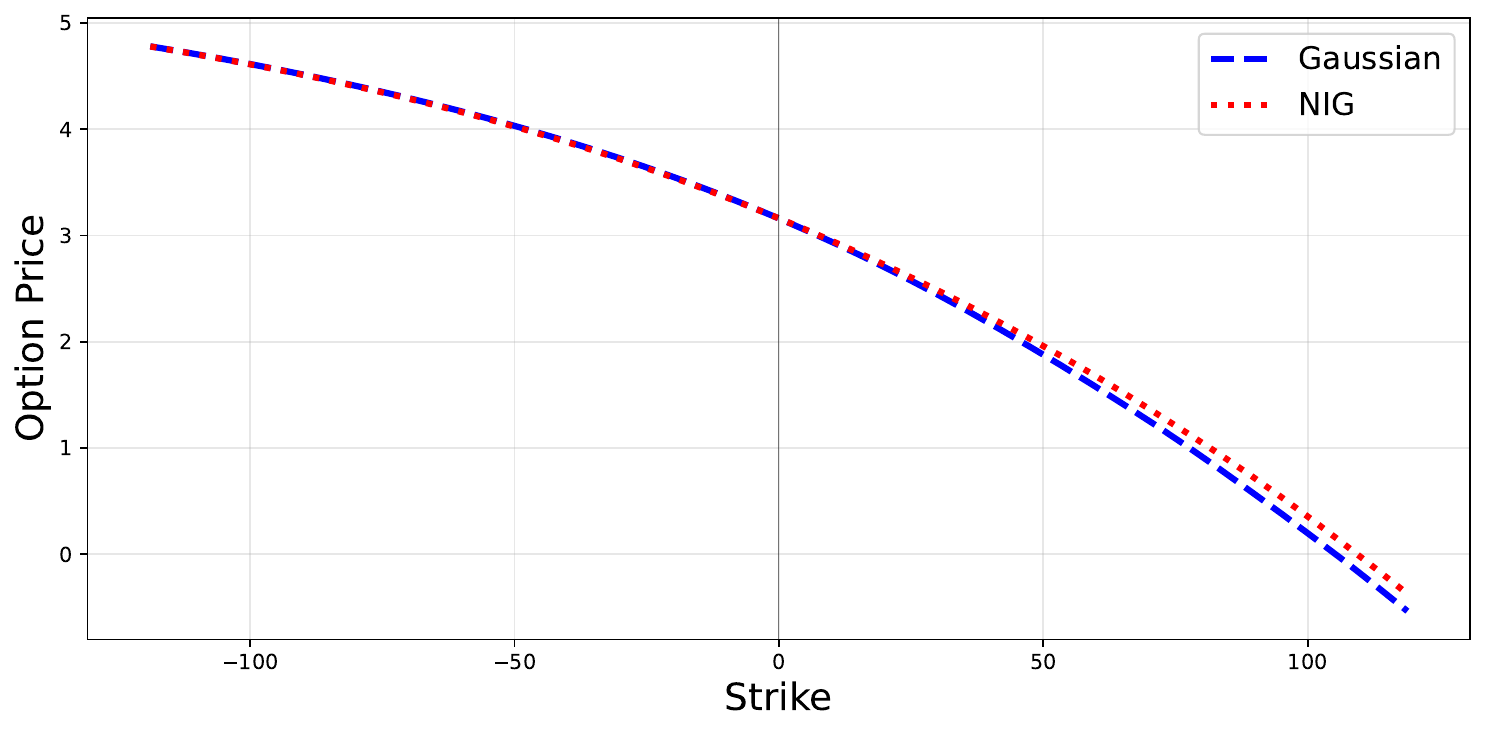}
    \end{subfigure}
    \hfill
    \begin{subfigure}[b]{0.48\linewidth}
        \centering
        \includegraphics[scale=0.3]{./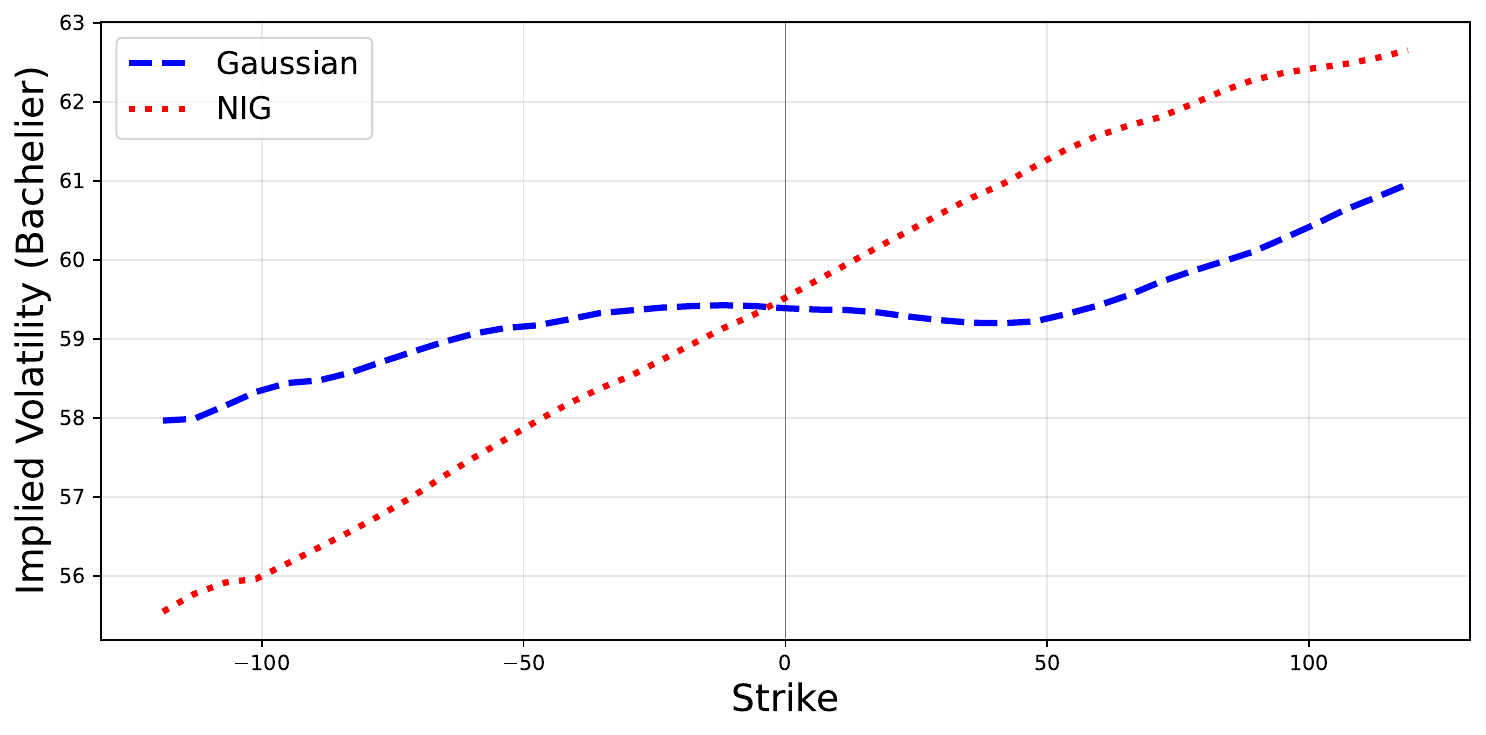}
    \end{subfigure}
    \caption{Spread option prices (left) and Bachelier implied volatilities (right) in the Gaussian (blue dashed line) and NIG (red dotted line) models across a range of strikes from -0.05 to 0.05.}
    \label{fig:spread_option_price}
\end{figure}

As an example of derivatives pricing, we consider the problem of pricing within-day spreads as well as options on these. Recall that the within-day spread has payoff $X_{\tau_1,\tau_{2}}$ given by \eqref{eq:spread_payoff}, and a call option then has payoff $\left( X_{\tau_1,\tau_2} - P \right)^{+}$, for some fixed strike $P$. For simplicity, we suppose that the model is already specified under a pricing measure $\mathbb{Q}$. We specify that the within-day spread contract has a year long settlement period from $t=1$ to $t=2$, and that the spread is between peak load and off-peak load prices during this period. In the notation of Section \ref{sec:derivatives}, we have $\tau_{0}=0,\tau_1 = 1,\tau_2 =2$ and $H_{1}=[\frac{2\pi}{3},\frac{5\pi}{3}]$, $H_{2}=(0,\frac{2\pi}{3})\cup (\frac{5\pi}{3},2\pi]$. From \eqref{eq:withi_day_spread_price}, we know that the fair price of the within-day spread is zero, and in Figure \ref{fig:spread_simulation} we illustrate the accuracy of the Monte Carlo scheme across 10,000 simulations, where we see the simulated price steadily converging towards the true fair price. An option on this spread introduces non-linearity in the payoff and we are thus unable to compute these prices analytically, but on Figure \ref{fig:spread_option_price}, we illustrate the Monte Carlo call option price across a range of strikes for 10,000 simulations. The two models produce visually indistinguishable option prices, so we also compare a normalized price in terms of the associated Bachelier implied volatilities, corresponding to the implied volatility of a call option in the Bachelier model. This does not correspond directly to the forward starting nature of the within-day spread option under consideration, but still serves as a magnifying lens for the option prices. It is apparent that although the two models produce seemingly similar prices, they generate distinct shapes of the implied volatility curves, clearly showing the non-zero skewness of the NIG model. 

\section{Conclusion}\label{sec:conclusion}
We have introduced a continuous time tempo-spatial modelling framework for the full panel of electricity spot prices, which allows for risk management on the most granular scale possible in a consistent manner. The framework is based on a tempo-spatial ambit field indexed by a manifold, which allows for a parsimonious extension of the null-spatial framework developed in \textcite{BARNDORFF-NIELSENOLEE.2013Mesp}. By specifying the ambit sets as cylinder surfaces in $\mathbb{R}^3$, we have shown how to obtain a time stationary model with several key dependence structures intrinsically embedded, such as the circular nature of the delivery periods. To set up the framework, we developed some new results regarding ambit fields on manifolds, which presents a novel application of ambit stochastics. The developed framework is very general and encompasses many previously proposed models for the average spot price, and we have discussed how to model the key stylized facts of electricity spot prices, much of which hinges on the chosen kernel function. We have also treated the pricing of derivatives, where several results from the null-spatial setting of \textcite{BARNDORFF-NIELSENOLEE.2013Mesp} have been extended to the tempo-spatial one, thereby permitting the treatment of derivatives down to the individual delivery period. In particular, we have derived a structure preserving change of measure and futures pricing formulas, which are essentially as tractable as in the null-spatial case and briefly introduced a potentially interesting derivative product, based on within-day spreads between delivery periods. The price of such a spread is mainly determined by the dependence structure in the model, which highlights the importance of modelling this appropriately via the kernel function. As a first application of the developed framework, we have therefore introduced a semi-parametric kernel that can be fitted to data via pseudo-likelihood methods. Applying this to German data reveals that a kernel function with flexible structure in both time and cross section is required and we find that a temporal singularity is strongly suggested by the data, which is in line with previous findings in \textcite{Bennedsen2017}. This framework could be used to guide the development of consistent and robust parametric model specifications in the future.

To further facilitate pricing and estimation in the model, we have also developed a simulation scheme and illustrated how to implement this. In future work, this may be helpful to assess the quality of statistical estimators in finite samples and the pricing of complicated products. A deeper study of the pricing and hedging of derivative products is particularly interesting as the possibility of treating products down to the individual delivery period allows for much more flexible risk management. In this regard, the within-day spreads present a unique instrument for hedging cross sectional risks, which, to the best of our knowledge, have not previously been treated in the literature.

\section*{Acknowledgments} 
The author gratefully acknowledges Fred Espen Benth, Mikkel Bennedsen, and Almut Veraart for their careful proofreading and insightful comments, which have no doubt improved the paper's presentation. Additionally, the author wishes to thank Peter K. Christensen, Elisa Nicolato and Jan Pedersen for helpful comments, as well as the participants at the XXVI Quantitative Finance Workshop in Palermo, the 12\textsuperscript{th} general AMaMeF conference in Verona, and the 2025 Vienna Congress on Mathematical Finance. Financial support has been granted by Center of Research in Energy: Economics and Markets and The Danish Council of Independent Research under DFF grant 10.46540/5247-00005B.

\section*{Conflicts of interest}
The author declares that there are no relevant financial or non-financial competing interests to report.
  
\printbibliography

\appendix
\section{Proofs}\label{app:proofs}
\paragraph{Proof of Proposition \ref{prop:levy_basis_stationary}}
\begin{proof}
Suppose first that $c$ is invariant with respect to $G$. We need to show that 
\[
(L(A_{1}),L(A_{2}),\ldots ,L(A_{n})) \overset{d}{=} (L(g(A_1)),L(g(A_2)),\ldots ,L(g(A_n))),
\] 
for any finite collection $A_1,A_2,\ldots ,A_n$ of elements of $\mathcal{M}_{b}$, such that $g(A_1),g(A_2),\ldots ,g(A_n)$ are in $\mathcal{M}_b$. Note that whenever $g(A)\in\mathcal{M}_b$, the random measure $L(g(A))$ is still the same Lévy basis $L$ on the same space, $(M,\mathcal{M})$, with the same CQ such that 
\begin{equation}\label{eq:g(A)_cumulant}
C(u ; L(g(A))) = C(u;L')c(g(A)).
\end{equation}
Following the proof of Proposition 32 in \textcite{Ambit}, there exists a partition of say $N_{n}\in\mathbb{N}$ mutually disjoint sets $g(B_{l})$ such that
\[
\bigcup_{j=1}^{n}g(A_{j}) = \bigcup_{l=1}^{N_n}g(B_{j}).
\]
Define the indices $I_j\subseteq \lbrace 1,\ldots ,N_n \rbrace$ for $j=1,\ldots , n$ such that 
\[
g(A_j) = \bigcup_{k\in I_{j}}g(B_l).
\]
As the $g(B_l)$'s are disjoint, we have that
\[
L(g(A_j)) = \sum_{l\in I_j}L(g(B_l)).
\]
The characteristic function of $(L(g(A_1)), \ldots ,L(g(A_n)))$ for $u\in \mathbb{R}^n$ is thus 
\[
\begin{aligned}
\mathbb{E}\left[ \exp\left( \mathrm{i}\sum_{j=1}^{n}u_{j}L(g(A_j)) \right)\right] &= \prod_{l=1}^{N_n}\mathbb{E}\left[ \exp\left( \mathrm{i}\cdot L(g(B_l))\sum_{j=1}^{k}u_{j}\mathbf{1}_{I_{j}}(l)\right) \right] \\
&= \prod_{l=1}^{N_n}\exp\left( C\left( \sum_{j=1}^{k}u_{j}\mathbf{1}_{I_j}(l)\; ; \; L(g(B_l)) \right) \right),
\end{aligned}
\]
where $C(x;X)$ denotes the cumulant function of the random variable $X$ evaluated in the point $x$. From \eqref{eq:g(A)_cumulant} we find that $C(u;L(g(B_l))) = c(g(B_l))C(u;L')$ and it therefore holds that
\[
\begin{aligned}
\mathbb{E}\left[ \exp\left( \mathrm{i}\sum_{j=1}^{n}u_{j}L(g(A_j)) \right)\right] &= \prod_{l=1}^{N_n}\exp\left( c(g(B_l))C\left( \sum_{j=1}^{k}u_{j}\mathbf{1}_{I_j}(l) ; L' \right) \right) \\
&= \prod_{l=1}^{N_n}\exp\left( c(B_l)C\left( \sum_{j=1}^{k}u_{j}\mathbf{1}_{I_j}(l) ; L' \right) \right) \\
&= \mathbb{E}\left[ \exp\left( \mathrm{i}\sum_{j=1}^{k}u_{j}L(A_{j}) \right) \right],
\end{aligned}
\]
which shows that $L$ is stationary with respect to $G$. On the other hand, if $L$ is stationary with respect to $G$, it holds that
\[
C(u;L(g(A))) = C(u;L(A)), \quad A\in\mathcal{M}_b,
\]
which in turn implies
\[
C(u;L(g(A)) = c(g(A))C(u;L') = C(u;L(A)) = c(A)C(u;L').
\]
As this holds for arbitrary $g\in G$ and $A\in\mathcal{M}_b$, it follows that $c(A) = c(g(A))$.
\end{proof}
\paragraph{Proof of Proposition \ref{prop:pushforward}}
\begin{proof}
By Proposition 2.6 in \textcite{Rosinski1989}, the cumulant function of $L\circ \varphi^{-1}$ is for $A\in \mathcal{M}_{b}$ given by
\[
\begin{aligned}
C\left(u;L\left(\varphi^{-1}(A)\right)\right) &= \log\left(\mathbb{E}\left[ e^{uL(\varphi^{-1}(A))}\right]\right) \\
&=C(u;L')\lambda^{2}(\varphi^{-1} (A)) \\
&=C(u;L')\lambda_{M}(A).
\end{aligned}
\]
The last equality arises from the change of variables formula, as $\lambda_{M}=\lambda_{\mathcal{C}}$ is the pushforward of $\lambda^{2}$ onto $\mathcal{C}$ under $\varphi$, and hence it follows that the cumulant function of $L$ pushed forward onto $M$ coincides with that of the Lévy basis $L_{M}$ defined on $(M,\mathcal{M})$ with CQ $(\gamma,\Sigma,\nu,\lambda_{M})$. It thus holds that the pushforward of $L$ onto $M$ under $\varphi$ is equal in law to $L_{M}$.
\end{proof}
\paragraph{Proof of Proposition \ref{prop:levy_ito}}
\begin{proof}
Let $L$ be a Lévy basis on $(\mathbb{R}^2,\mathcal{B}(\mathbb{R}^2))$ with CQ $(\gamma,\Sigma,\nu,\lambda^2)$. By Theorem 4.5 in \textcite{Pedersen2003}, there exists a modification $L^{*}$ of $L$ which admits the following Lévy-Itô decomposition for $B\in\mathcal{B}_{b}(\mathbb{R}^2)$
\begin{equation}\label{eq:levy_ito2}
L^{*}(B) = \gamma \lambda^{2}(B) + W (B) + \int_{\mathbb{R}}y\mathbf{1}_{(-1,1)}(y)(N-\ell)(d y,B) + \int_{\mathbb{R}}y\mathbf{1}_{(-1,1)^\complement}(y)N(d y,B),
\end{equation}
where $W$ is a Lévy basis with CQ $(0,\Sigma,0,\lambda^2)$ and $N$ a Lévy basis with CQ $(0,0,\nu,\lambda^2)$ and $\ell$ its compensator. Applying Proposition \ref{prop:pushforward} to the right hand side of \eqref{eq:levy_ito2}, it follows that we can define $L_{M}=L^{*}\circ \varphi^{-1}$, which then admits the Lévy-Itô decomposition \eqref{eq:levy_ito1} and has CQ $(\gamma,\Sigma,\nu,\lambda_M)$.
\end{proof}
\paragraph{Proof of Proposition \ref{prop:ambit_stationary}}
\begin{proof}
Take $g\in G$. As $g$ is an isometry, the Riemannian metric on $\mathcal{C}$ is preserved under $g$, and it thus follows from \eqref{eq:volume_integral} that $\lambda_{\mathcal{C}}(g(A)) = \lambda_{\mathcal{C}}(A)$ for any $A\in \mathcal{M}$. It follows from Proposition~\ref{prop:levy_basis_stationary} that $L$ is stationary with respect to $G$ cf. Definition~\ref{def:stationary}. To show stationarity of the ambit field $Y_{t}(h)$ in the stated sense, note that 
\[
\begin{aligned}
Y_{t+\tau}(R_{\theta}h) &= \int_{A_{t+\tau}(R_{\theta}h)}\kappa (t+\tau,s,R_{\theta}h,\xi) a_{s}(\xi)\lambda_{\mathcal{C}}(ds,d\xi) \\&\quad + \int_{A_{t+\tau}(R_{\theta}h)}K (t+\tau,s,R_{\theta}h,\xi) \sigma_{s}(\xi)L(ds,d\xi) \\
&= \int_{A_{t}(h)}\kappa (t-\zeta,\theta_h-\theta_\eta) a_{\tau+\zeta}(R_{\theta}\eta)\lambda_{\mathcal{C}}(d(\zeta+\tau),d(R_{\theta}\eta))\\&\quad + \int_{A_{t}(h)}K (t-\zeta,\theta_h-\theta_\eta) \sigma_{\tau+\zeta}(R_{\theta}\eta)L(d(\zeta+\tau),d(R_{\theta}\eta)) \\
&\overset{d}{=} \int_{A_{t}(h)}\kappa (t-\zeta,\theta_h-\theta_\eta) a_{\zeta}(\eta)\lambda_{\mathcal{C}}(d\zeta,d\eta) \\&\quad + \int_{A_{t}(h)}K (t-\zeta,\theta_h-\theta_\eta) \sigma_{\zeta}(\eta)L(d\zeta,d\eta) \\
&= Y_{t}(h),
\end{aligned}
\]
where we have used stationarity of $L$ with respect to $G$ combined with invariance of $\lambda_\mathcal{C}$ under isometries in the third equality. It follows that the imposed assumptions in the Proposition are sufficient for stationarity of the field $Y_{t}(h)$. To assess necessity, we note that the points $h,\xi$ are fully determined by their angular coordinates $\theta_h,\theta_\xi$. The kernels $\kappa,K$ must depend only on rotationally invariant quantities, which is thus equivalent to depending only on the quantity $\theta_h-\theta_\xi$.
\end{proof}

\paragraph{Proof of Proposition \ref{prop:ambit_field_cumulant}}
The first equality in \eqref{eq:cumulant_function} follows from the general result of \textcite{Ambit}, Proposition~40 or \textcite{RecentAdvances}, Proposition~1. The second equality follows from \eqref{eq:volume_integral}.

\paragraph{Proof of Proposition \ref{prop:second_order_structure}}
\begin{proof}
Proposition 41 in \textcite{Ambit} yields that
\[
\begin{aligned}
\mathbb{E}\left[ D_{t}(h) \right] &= \mathbb{E}\left[ L' \right]\int_{A_{t}(h)}K(t,s,h,\xi)\mathbb{E}\left[ \sigma_{s}(\xi)\right] \lambda_{\mathcal{C}}(ds,d\xi), \\[1.25ex]
\mathrm{Cov}\left( D_{t}(h), D_{t'}(h') \right) &= \mathbb{V}\left[ L'\right]\int_{A_{t}(h)\cap A_{t'}(h')}K(t,s,h,\xi)K(t',s,h',\xi')\mathbb{E}\left[ \sigma_{s}(r(\theta ))^{2} \right]\lambda_{\mathcal{C}}(ds,d\xi) \\
+\mathbb{E}\left[ (L')^2 \right]\int_{A_{t}(h)}&\int_{A_{t'}(h')}K(t,s,h,\xi)K(t',s',h',\xi')\varrho (s,s',\xi, \xi')\lambda_{\mathcal{C}}(ds,d\xi)\lambda_\mathcal{C}(ds',d\xi').
\end{aligned}
\]
Now note that $A_{t}(h)\cap A_{t'}(h') = A_{\min \lbrace t, t'\rbrace}(h)$ and apply \eqref{eq:volume_integral} to obtain the stated result.
\end{proof}

\paragraph{Proof of Proposition \ref{prop:esscher_transform}}
\begin{proof}
Note that $\mathbb{E}^{\mathbb{P}}[Z_t]=1$, and that it is an $\mathcal{F}_t$ martingale, since
\[
\begin{aligned}
\mathbb{E}^{\mathbb{P}}\left[ Z_t \mid \mathcal{F}_\tau\right] &= \mathbb{E}^{\mathbb{P}}\left[ \exp\left( \int_{\tau}^{t}\int_{\Pi}q(s,\xi)L(ds,d\xi)\right)\right]\exp\left(-\int_{\tau}^{t}\int_{\Pi}C(q(s,\xi);L')\lambda_{\mathcal{C}}(ds,d\xi) \right) \\
&\quad \times \exp\left( \int_{0}^{\tau}\int_{\Pi}q(s,\xi)L(ds,d\xi) - \int_{0}^{\tau}\int_{\Pi}C(q(s,\xi);L') \lambda_{\mathcal{C}}(ds,d\xi) \right) \\
&= Z_{\tau}.
\end{aligned}
\]
It follows that $Z_t$ defines a valid density process. By the independently scattered property of $L$ under $\mathbb{P}$, we have for $A\in\mathcal{B}_{b}(\Pi)$ that
\[
\begin{aligned}
\mathbb{E}^{\mathbb{P}^{q}}\left[ e^{uL([0,t]\times A)} \right] &= \mathbb{E}^{\mathbb{P}}\left[ \exp\left( \int_{0}^{t}\int_{A}u+q(s,\xi)L(ds,d\xi) - \int_{0}^{t}\int_{A}C(q(s,\xi);L')\lambda_{\mathcal{C}}(ds,d\xi) \right) \right] \\
&\quad \times \mathbb{E}^{\mathbb{P}}\left[ \int_{0}^{t}\int_{\Pi\setminus A}q(s,\xi)L(ds,d\xi) \right] \exp\left( \int_{0}^{t}\int_{\Pi\setminus A}C(q(s,\xi);L')\lambda_{\mathcal{C}}(ds,d\xi) \right) \\
&= \exp\left( \int_{0}^{t}\int_{A}C(u+q(s,\xi);L')\lambda_{\mathcal{C}}(ds,d\xi) - \int_{0}^{t}\int_{A}C(q(s,\xi);L')\lambda_{\mathcal{C}}(ds,d\xi) \right).
\end{aligned}
\] 
It follows that the distribution of $L([0,t]\times A)$ under $\mathbb{P}^q$ is infinitely divisible with triplet $(\tilde{\gamma},\Sigma,\tilde{\nu})$ as in \eqref{eq:esscher_triplet}. The intensity measure of $L$ is unaffected and remains $\lambda_{\mathcal{C}}$. By similar reasoning, we may show that for $u_{1},u_{2},\ldots ,u_{N}$ and disjoint $A_{1},A_{2},\ldots ,A_{N}$ in $\mathcal{B}_{b}([0,T^\ast]\times \Pi)$ such that each $\mathbb{E}[e^{u_{j}L(A_j)}]<\infty$, it holds that
\[
\mathbb{E}^{\mathbb{P}^{q}}\left[ \exp\left( \sum_{j=1}^{N}u_{j}L(A_j) \right) \right] = \prod_{j=1}^{N}\mathbb{E}^{\mathbb{P}^q}\left[ e^{u_{j}L(A_j)} \right].
\]
Hence $L$ is independently scattered under $\mathbb{P}^{q}$ and thus a Lévy basis under $\mathbb{P}^q$ with CQ $(\tilde{\gamma},\Sigma,\tilde{\nu},\mathcal{\lambda}_{\mathcal{C}})$.
\end{proof}

\paragraph{Proof of Proposition \ref{prop:D_spectral_rep}}
\begin{proof}
Expanding the kernel $K$ as in \eqref{eq:repr_K_L2} and assuming that we may freely interchange sums and integrals, it holds that 
\[
\begin{aligned}
X_{n}(t) &= \int_{0}^{2\pi}D_{t}(r(\theta))e^{-\mathrm{i}n\theta}d\theta \\
&\overset{L^2}{=} \frac{1}{2\pi}\sum_{j_1,j_2,j_3}\int_{\Pi}e^{-\mathrm{i}n r(\theta)}\int_{-\infty}^{t}\int_{\Pi}c_{j_1,j_2,j_3}\Psi_{j}(t-s)e^{\mathrm{i}j_2 h}e^{\mathrm{i}j_3\xi}\sigma_{s}(\xi) L(ds,d\xi) d\theta \\
&= \frac{1}{2\pi}\sum_{j_1,j_2,j_3}\left( \int_{\Pi}e^{\mathrm{i}j_2 h}e^{-\mathrm{i}n h} \lambda_{\Pi}(dh) \right)c_{j_1,j_2,j_3}\int_{-\infty}^{t}\int_{\Pi}\Psi_{j_1}(t-s)e^{\mathrm{i}j_3 \xi}\sigma_{s}(\xi)L(ds,d\xi) \\
&= \sum_{j_1,j_3}c_{j_1,n,j_3}\int_{-\infty}^{t}\int_{\Pi}\Psi_{j_1}(t-s)e^{\mathrm{i}j_3 \xi}\sigma_{s}(\xi) L(ds,d\xi).
\end{aligned}
\]
It follows immediately from the expansion \eqref{eq:repr_K_L2} that $\sum_{n\in\mathbb{Z}}X_{n}(t)e^{\mathrm{i}nh}\overset{L^2}{=}2\pi D_{t}(h)$. We now argue that we may in fact swap the infinite sum and integrals. Consider two finite-order approximations $K^{(\mathbf{N_1})},K^{(\mathbf{N_2})}$ of $K$. The Itô isometry of \textcite{Walsh1986} yields that (dropping the integration arguments for brevity)
\begin{equation}\label{eq:Xn_representation}
\mathbb{E}\left[ \left( \int K^{(\mathbf{N_1})} - K^{(\mathbf{N_2})} \right)^2 \right] = \int \left( K^{(\mathbf{N_1})} - K^{(\mathbf{N_2})} \right)^2 \underset{\lvert N_1\rvert,\lvert N_2\rvert\to 0}{\longrightarrow} 0,
\end{equation}
where the convergence follows from completeness of $L^{2}$. By linearity we may swap the sum and integrals for the finite order approximations $K^{(\mathbf{N_1})},K^{(\mathbf{N_2})}$ and letting $\lvert \mathbf{N_1}\vert,\lvert \mathbf{N_2}\rvert\to \infty$ completes the argument. 
\end{proof}

\paragraph{Proof of Proposition \ref{prop:spectral_density}}
\begin{proof}
By the assumption of isotropy, it holds that $c_{j_1,j_2,j_3}=0$ in the expansion \eqref{eq:repr_K_L2} whenever $j_3\neq -j_2$. The representation \eqref{eq:repr_K_L2} thus becomes
\[
K(t-s,\theta-\phi) = \frac{1}{2\pi}\sum_{j_1,\ell}c_{j_1,-\ell,\ell}\Psi_{j_1}(t-s) e^{-\mathrm{i}\ell(\theta- \phi)}.
\]
Since $\mathbb{E}[L']=0$, it follows from Proposition~\ref{prop:second_order_structure} that
\[
\begin{aligned}
\mathrm{Cov}(X_{n}(t+\tau),\overline{X_{n}(t)}) &= \mathbb{E}[X_{n}(t+\tau)\overline{X_{n}(t)}] \\
&= \mathbb{V}[L']\mathbb{E}[\sigma^2]\int_{-\infty}^{t}\int_{\Pi}\left(\sum_{j}c_{j,-n,n}\Psi_{j}(t+\tau-s)\right)\overline{\left( \sum_{j}c_{j,-n,n}\Psi_{j}(t-s) \right)}\lambda_{\mathcal{C}}(ds,d\xi).
\end{aligned}
\]
Let $a_{j,n}=c_{j,-n-,n}$ and define the transfer function $H_{n}(v)$ as 
\[
H_{n}(v) = \int_{0}^{\infty}X_{n}(\tau)e^{-\mathrm{i}v\tau }d\tau = \sum_{j}a_{j,n}\widehat{\Psi}_{j}(v),
\]
where $\widehat{\Psi}_j(v)= \int_{0}^\infty \Psi_j(t)e^{-\mathrm{i}vt}dt$ and the last equality follows from \eqref{eq:Xn_representation}. Then it follows immediately by the Fubini theorem that
\[
\begin{aligned}
f_{n}(u) &= \frac{1}{2\pi}\int_{\mathbb{R}}e^{-\mathrm{i}u\tau}\mathbb{E}[X_{n}(t+\tau)\overline{X_n(t)}]d\tau \\ &= 
\frac{\mathbb{V}[L']\mathbb{E}[\sigma^2 ]}{2\pi}\int_{0}^{\infty}\overline{\sum_{j}c_{j,-n,n}\Psi_j(r)}\int_{\mathbb{R}}\sum_{j}c_{j,-n,n}\Psi_{j}(r+\tau)d\tau dr \\
&= \frac{\mathbb{V}[L']\mathbb{E}[\sigma^2 ]}{2\pi}\int_{0}^{\infty}\overline{\sum_{j}c_{j,-n,n}\Psi_j(r)} e^{\mathrm{i}ur}H_{n}(u) dr \\
&= \frac{\mathbb{V}[L']\mathbb{E}[\sigma^2 ]}{2\pi}\overline{H_{n}(u)}H_{n}(u) \\
&= \frac{\mathbb{V}[L']\mathbb{E}[\sigma^2 ]}{2\pi}\lvert H_{n}(u)\rvert^2.
\end{aligned}
\]
It remains to show that $\widehat{\Psi}_j(u)$ is given by \eqref{eq:Psi_fourier_transform}. Using that the generalized Laguerre polynomials have a representation in terms of the confluent hypergeometric function as $\mathcal{L}_{j}^{(\alpha)}(t)=\frac{(\alpha+1)_j}{j!}{}_{1}F_{1}(-j;\alpha+1;t)$ where $(x)_j$ denotes the Pochhammer symbol and combining this with the standard Laplace transform identity
\[
\int_{0}^\infty t^\gamma e^{-\zeta t}{}_{1}F_{1}(-j;\alpha+1;t)dt = \Gamma(\gamma+1)\zeta^{-(\gamma +1)}	{}_{2}F_{1}(-j,\alpha+1;\alpha+1;\tfrac{1}{\zeta}), 
\]
gives \eqref{eq:Psi_fourier_transform} upon setting $\gamma=\frac{\alpha}{2}$ and $\zeta = (\frac{1}{2}+\mathrm{i}v)$.
\end{proof}

\paragraph{Proof of Proposition \ref{prop:identifiability}}
\begin{proof}
We first want to show that if two admissible kernels $K_1,K_2$ yield the same spectrum $(f_{n}(u))_{n\in\mathbb{Z}}$, then $K_1=K_2$ in $L^2$. By isotropy, the decomposition \eqref{eq:Xn_representation} of Proposition~\ref{prop:D_spectral_rep} diagonalizes since each $c_{n_1,n_2,n_3}=0$ whenever $n_2\neq -n_3$. As a consequence, the $X_{n}(t)$ processes are orthogonal as processes in the sense that $\mathrm{Cov}(X_{n}(t),\overline{X}_{m}(t)) = 0$ for $n\neq m$. The second order structure of each $X_{n}(t)$ uniquely determines the spectral density $f_{n}$, and hence the second order structure of the full field $D_{t}(h)$ uniquely determines $(f_{n})_{n\in\mathbb{Z}}$. This also holds for the kernel functions $K_{1},K_{2}$, since their orthonormal basis representations are unique. I.e., a kernel $K$ uniquely determines a spectrum $(f_{n})_{n\in \mathbb{Z}}$ via the associated system of transfer functions $(H_{n})_{n\in\mathbb{Z}}$. However, the reverse map $(f_{n})_{n\in\mathbb{Z}}\to K$ is not injective, since $f_{n}$ is only identified up to the modulus $\lvert H_{n}\rvert$ and a scaling constant. The assumption that $c_{0,0}=1$ normalizes the kernel, such that the spectral density is in fact identified up to $\lvert H_{n}(u)\rvert$, with a fixed scaling constant. Overall, the spectral density thus determines the kernel $K$ up to the equivalence class of transfer functions with identical modulus. Now suppose that $K_1$ and $K_2$ determines the systems of transfer functions $(H_{n}^{(1)})_{n\in\mathbb{Z}}$ and $(H_{n}^{(2)})_{n\in\mathbb{Z}}$, respectively, and that $\lvert H_{n}^{(1)}(v)\rvert = \lvert H_{n}^{(2)}(v)\rvert$ for all $n$ and $v$. Then we can define
\[
A_{n}(v) = \frac{H_{n}^{(2)}(v)}{H_{n}^{(1)}(v)},
\]
which satisfies $\lvert A_{n}(v) \rvert = 1$. By the minimum-phase assumption, we can extend each $H_{n}^{(i)}(v)$ analytically to the half-plane $\Im(v) < 0$ where they have no zeros and thus $A_{n}(v)$ admits an analytical extension to that half-plane. Thus $A_n$ is an inner function in the Hardy space $\mathcal{H}^2$. The minimum-phase assumption also entails that the transfer functions $H_{n}^{(i)}$ are outer functions in $\mathcal{H}^2$. Since every function in $\mathcal{H}^2$ can be uniquely represented as a product of an inner and an outer function, the only way that the ratio $A_{n}$ can be inner, is if it is trivially inner, i.e., $A_{n}(v)=e^{\mathrm{i}x_{n}}$ for some $x_n\in\mathbb{R}$ and all $v$ in the half-plane where $\Im(v)<0$. It must therefore hold that
\[
H_{n}^{(1)}(v)=e^{\mathrm{i}x_n}H_{n}^{(2)}.
\]
By taking the inverse Fourier transforms, it follows that the coefficients $a_{j,n}^{(1)}$ and $a_{j,n}^{(2)}$ are related via $
a_{j,n}^{(1)}=e^{\mathrm{i}x_n}a_{j,n}^{(2)}$.
This holds in particular for the coefficients 
$
a_{j^{\ast}(n),n}^{(1)} = e^{\mathrm{i}x_n}a_{j^{\ast}(n),n}^{(2)}
$ and since the $a_{j^\ast (n),n}^{(i)}$'s are real-valued and strictly positive, we must have that $e^{\mathrm{i}x_n}=1$, in which case it holds that $H_{n}^{(1)}(v) = H_{n}^{(2)}(v)$ for all $n$ and $v$ and that $a_{j,n}^{(1)}=a_{j,n}^{(2)}$ for all $j,n$. Since the kernels $K_{1},K_{2}$ are uniquely identified by the coefficients $a_{j,n}$, we find that $K_{1}=K_{2}$ in $L^2$, which was to be shown. The same statement for the truncated kernels $K^{J,N}$ follows by truncating the representation \eqref{eq:repr_K_L2}.
\end{proof}

\paragraph{Proof of Proposition \ref{prop:simulation}}
\begin{proof}
The function $\xi\mapsto K(t-s,h,\xi)$ has Fourier coefficients 
\[
\widetilde{K}(t-s,h,n) = \frac{1}{2\pi}\int_{0}^{2\pi}e^{\mathrm{i}\phi n}K(t-s,h,r(\phi))d\phi, \quad n \in \mathbb{Z}.
\]
By the Fourier inversion theorem, it follows that 
\[
K(t-s,h,\xi) = \sum_{n\in \mathbb{Z}}\widetilde{K}(t-s,h,n)e^{\mathrm{i}\xi n}.
\]
Due to Assumption \ref{ass:laplace_kernel}, we may represent $\tilde{K}$ by Laplace inversion in $(t-s)$. Utilizing the Bromwich integral representation of the Laplace inversion theorem, we have
\[
\tilde{K}(t-s,h,n) = \frac{1}{2\mathrm{i}\pi}\int_{\mathbb{R}}\widehat{K}(z_r+\mathrm{i}z_i,h,n) e^{(z_r + \mathrm{i}z_{i})(t-s)}dz_i, 
\]
with $\widehat{K}(z,h,n)$ defined as in \eqref{eq:kernel_transforms}, and $z_{r}\in (-\infty,\gamma)$. By the stochastic Fubini theorem we have
\[
\begin{aligned}
Y_{t}(h) &= \int_{0}^{t}\int_{\Pi}\sum_{n\in \mathbb{Z}}\int_{\mathbb{R}}\widehat{K}(z_r+\mathrm{i}z_i,h,n) e^{\mathrm{i}\xi n}e^{(z_r+\mathrm{i}z_i)(t-s)}dz L(ds,d\xi) \\[1.25ex]
&= \sum_{n\in \mathbb{Z}}\int_{\mathbb{R}}\widehat{K}(z_r + \mathrm{i}z_i, h, n)V_{n}(t,z) dz,
\end{aligned}
\]
with $V_n (t,z)$ as in \eqref{eq:Vn_OU}. That $t\mapsto V_{n}(t,z)$ defines a complex-valued Ornstein-Uhlenbeck type process follows from the Fubini theorem of Proposition 45 in \textcite{Ambit}, which we may use to obtain
\[
\begin{aligned}
z\int_{0}^{t}V_{n}(s,z)ds &= z\int_{0}^{t}\int_{0}^{s}\int_{\Pi}e^{z(s-v)}e^{\mathrm{i}n\xi}\sigma_{v}(\xi)L(dv,d\xi)ds \\
&= z\int_{0}^{t}\int_{\Pi}\int_{0}^{t}\mathbf{1}_{[0,s]}(v)e^{z(s-v)}ds e^{\mathrm{i}n\xi}\sigma_{v}(\xi) L(dv,d\xi) \\
&= \int_{0}^{t}\int_{\Pi}\left( e^{z(t-v)}-1 \right)e^{\mathrm{i}n\xi}\sigma_{v}(\xi)L(dv,d\xi) \\
&= V_{n}(t,z) - L_{n}(t),
\end{aligned}
\]
where $L_{n}(t)$ is the noise driving the Ornstein-Uhlenbeck process given by
\[
L_{n}(t) = \int_{0}^{t}\int_{\Pi}e^{\mathrm{i}n\xi}\sigma_{s}(\xi)L(ds,d\xi).
\]
\end{proof}
\paragraph{Proof of Proposition \ref{prop:error_bound}}
\begin{proof}
Define the kernel $K^{(N)}$ as
\[
K^{(N)}(t,h,n) = \frac{1}{2\mathrm{i}\pi}\sum_{n\in \lbrace -N,\ldots ,N \rbrace} \int_{\mathbb{R}}\hat{K}(z_{r}+\mathrm{i}z_{i}h,n)e^{(z_r+\mathrm{i}z_i)t}e^{\mathrm{i}n\xi}dz_{i}.
\]
Define the covariance measure $Q$ of $L$ as
\[
Q([0,t]\times A) = \left(\Sigma + \int_{\mathbb{R}}x^2\nu (d x)\right)\lambda_{\mathcal{C}}([0,t]\times A).
\]
Applying the Itô isometry of \textcite{Walsh1986}, we then obtain
\[
\begin{aligned}
\mathbb{E}\left[ \left( Y_{t}(h)-Y_{t}^{(N)}(h)\right)^2 \right] &= \int_{0}^{t}\int_{\Pi}\left( K(t-s,h,\xi) - K^{(N)}(t-s,h,\xi) \right)^2 \mathbb{E}\left[ \sigma_{s}(\xi)^2\right] Q(ds,d\xi)  \\
&\leq k_{\sigma}\int_{0}^{t}\int_{\Pi}\left( K(t-s,h,\xi) - K^{(N)}(t-s,h,\xi) \right)^2 Q(ds,d\xi).
\end{aligned}
\]
Note then that
\[
K(t-s,h,\xi) - K^{(N)}(t-s,h,\xi) = \sum_{n\in \mathbb{Z}\setminus \lbrace -N,\ldots ,N\rbrace}\int_{\mathbb{R}}\frac{1}{2\mathrm{i}\pi}\hat{K}(z_{r}+\mathrm{i}z_{i}h,n)e^{(z_r+\mathrm{i}z_i)(t-s)}e^{\mathrm{i}n\xi}dz_{i},
\]
where we have moved the factor $\frac{1}{2\mathrm{i}\pi}$ into the integral to signify that it is part of the integral operator, and that the output $K(t-s,h,\xi)$ is real-valued. Using Tonelli's theorem, we have
\[
\begin{aligned}
\int_{0}^{t}\int_{\Pi}&\left( K(t-s,h,\xi) - K^{(N)}(t-s,h,\xi) \right)^2 Q(ds,d\xi) \\
&\leq\frac{1}{4\pi^2} \int_{0}^{t}\int_{\Pi}\left(\sum_{n\in \mathbb{Z}\setminus \lbrace -N,\ldots ,N\rbrace}\left\lvert\int_{\mathbb{R}}\hat{K}(z_{r}+\mathrm{i}z_{i}h,n)e^{(z_r+\mathrm{i}z_i)(t-s)}e^{\mathrm{i}n\xi}dz_{i}\right\rvert\right)^2 Q(ds,d\xi) \\
&= \frac{\mathbb{V}[L']}{4\pi^2}\sum_{n\in \mathbb{Z}\setminus \lbrace -N,\ldots,N\rbrace}\int_{0}^{t}\left(\left\lvert\int_{\mathbb{R}}\hat{K}(z_{r}+\mathrm{i}z_{i}h,n)e^{(z_r+\mathrm{i}z_i)(t-s)}dz_{i}\right\rvert\right)^2 ds \\
&\leq \frac{\mathbb{V}[L']}{4\pi^2}\sum_{n\in \mathbb{Z}\setminus \lbrace -N,\ldots,N\rbrace}\left(\int_{\mathbb{R}}\lvert \hat{K}(z_{r}+\mathrm{i}z_{i},h,n)\rvert dz_{i}\right)^2 \int_{0}^{t}e^{2z_{r}s}ds.
\end{aligned}
\]
If we select the integration contour of $\hat{K}$ such that $z_{r}<0$ (which we may do, so long as $\gamma < 0$), then we can bound the integral of $e^{2z_{r}s}$ by $\frac{1}{2\lvert z_{r}\rvert}$, and obtain \eqref{eq:simulation_bound}.
\end{proof}

\section{De-seasonalization of German spot price data}\label{app:data}
We use the notation $P_t(d)$ as in Section~\ref{sec:model_setup} for the observed price on time $t$ and delivery period $d$. We suppose that $P_{t}(d)$ decomposes additively as
\[
P_{t}(d) = \mu_{t}(d) + \zeta_{t}(d) + \varepsilon_{t}(d),
\] 
where $\mu_{t}(d)$ represents a smoothly varying trend, $\zeta_{t}(d)$ a periodic component representing seasonality, and $\varepsilon_t(d)$ the unexplained residuals, representing unanticipated price shocks. The trend $\mu_t(d)$ is extracted as a 90-day backward-looking moving average. This leaves 24 time series $P_{t}(d)-\widehat{\mu}_{t}(d)$. The seasonal component $\zeta_{t}(d)$ is extracted via the multi-period STL (MSTL) algorithm of \textcite{MSTL}, which is an extension of the well-known seasonal-trend decomposition algorithm of \textcite{STL}. The algorithm requires period lengths as an input, where we select periods of length $1/4$ and $1$, corresponding to quarterly and yearly seasonality. Intuitively, we expect to extract predictable patterns that vary with the four seasons and over the course of a year in a periodic fashion. It is common to extract such seasonal components by regressing the de-trended price series onto a sum of trigonometric functions as in \textcites{Meyer-BrandisTankov2008}{Bennedsen2017}, but the MSTL algorithm permits variation in the amplitudes of the periodic component across periods, which can be thought of as allowing for ``stochastic seasonality''. We depict the extracted surfaces of trend and seasonal components on Figure~\ref{fig:season_trend_decomp}, where it is evident that the seasonal component varies substantially over time. The residual series $\varepsilon_{t}(d)$ are all tested for unit roots using the augmented Dickey-Fuller test with automatic lag selection and for stationarity via the KPSS test. In all cases, the results strongly favour the assumption of stationary. Furthermore, for each residual time series we perform a simple intercept-only linear regression with heteroskedastic and autocorrelation adjusted standard errors; all tests fail to reject the null hypothesis that the stationary mean is zero. We therefore treat the residual series as realizations of the zero-mean and time stationary field $D_{t}(h)$.
\begin{figure}[htbp]
	\centering
	\begin{subfigure}[b]{\textwidth}
		\centering
		\includegraphics[width=0.9\textwidth]{./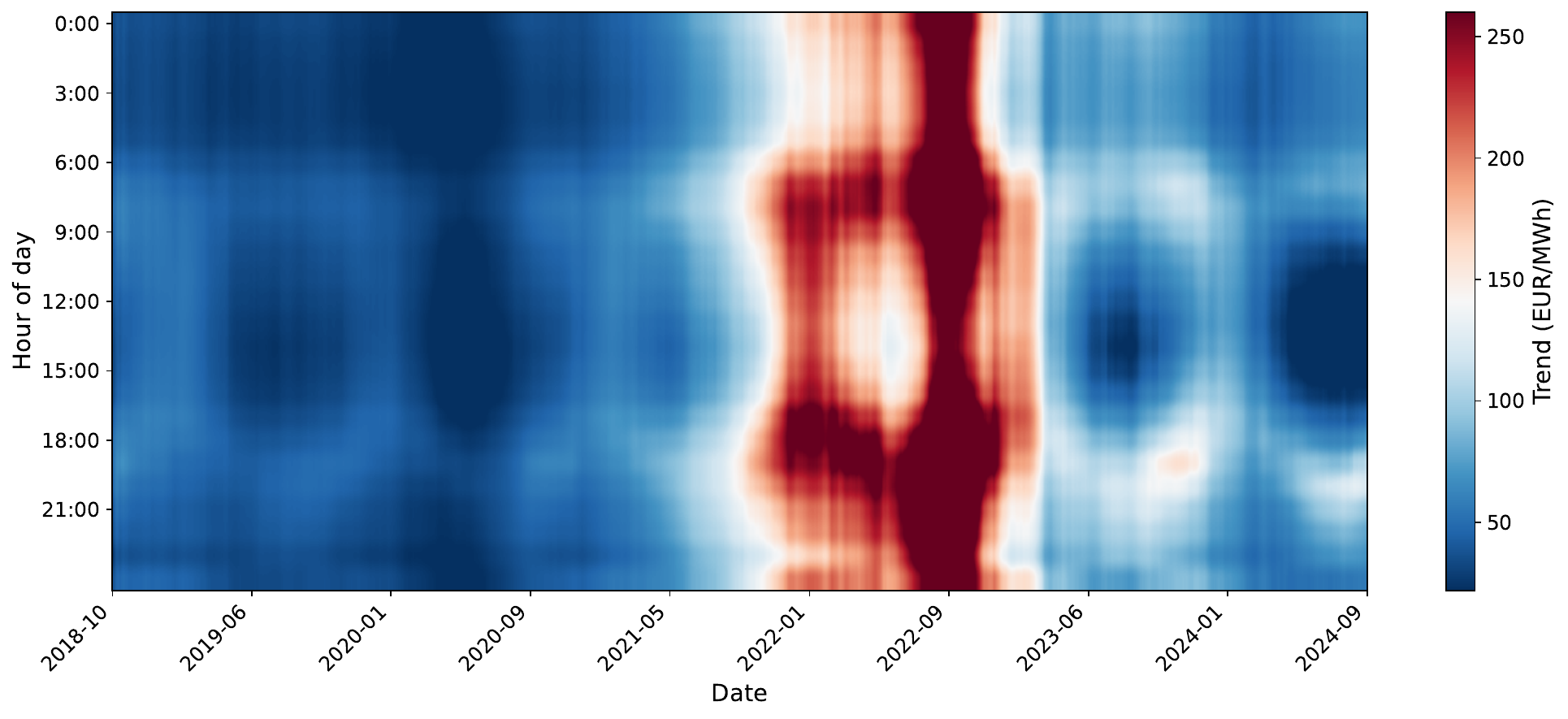}
	\end{subfigure}
	
	\vspace{0.3cm}
	
	\begin{subfigure}[b]{\textwidth}
		\centering
		\includegraphics[width=0.9\textwidth]{./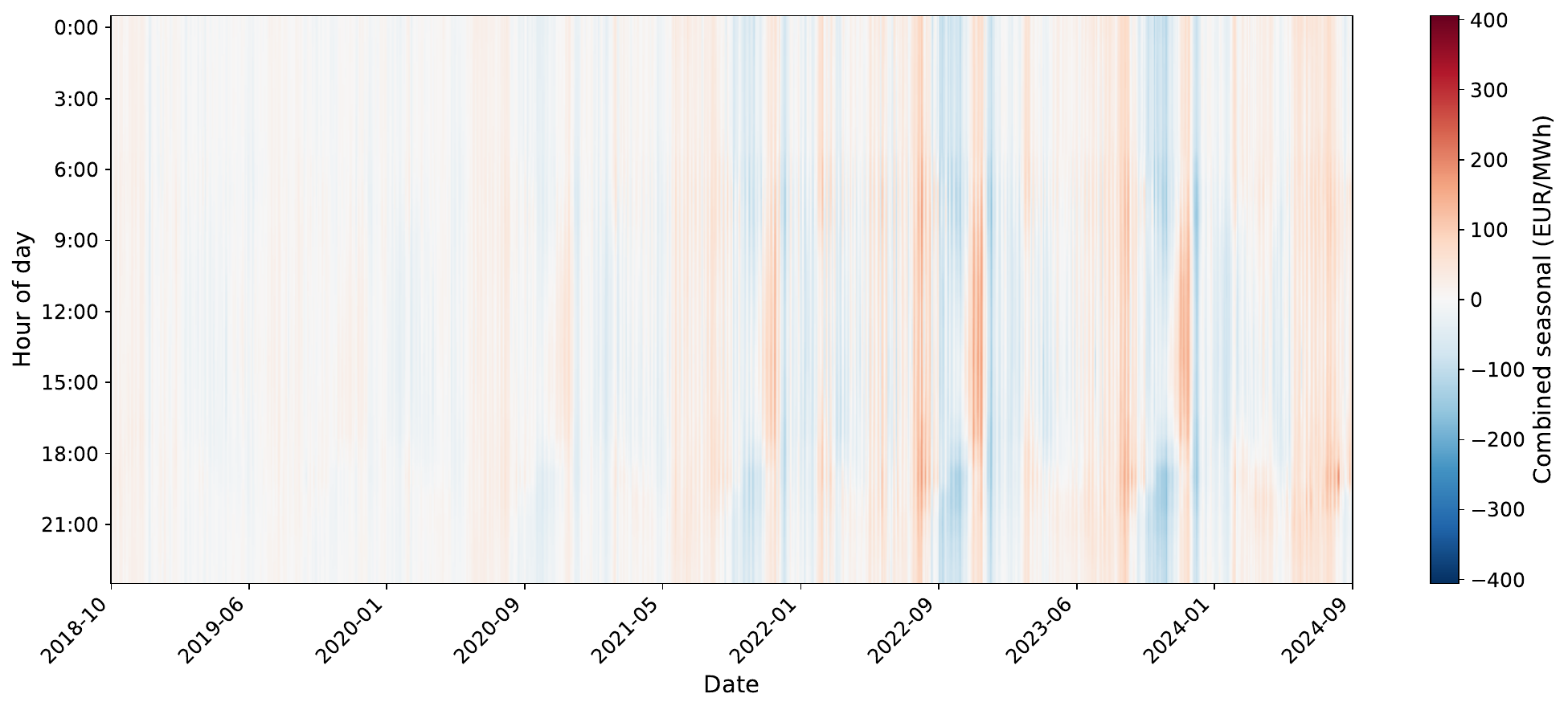}
	\end{subfigure}
	
	\vspace{0.3cm}
	\begin{subfigure}[b]{\textwidth}
		\centering
		\includegraphics[width=0.9\textwidth]{./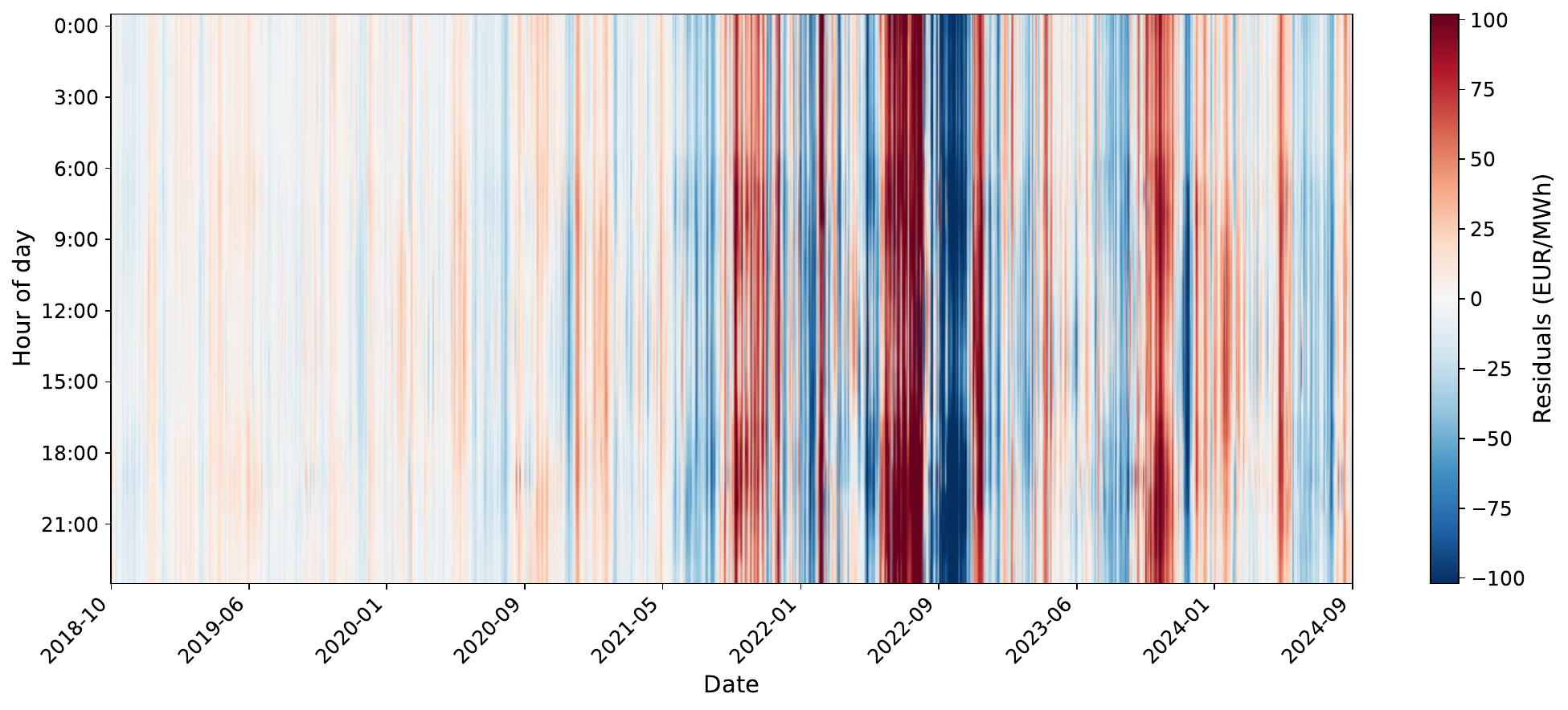}
	\end{subfigure}
	\caption{Trend (top), seasonal (middle), and residual (bottom) components from the MSTL decomposition of German day-ahead electricity prices.}\label{fig:season_trend_decomp}
\end{figure}

\pagebreak
\section{Spectral density fits}\label{app:plot}
\begin{figure}[!h]
	\begin{center}
	\begin{subfigure}[b]{0.48\textwidth}
		\includegraphics[scale=0.375]{./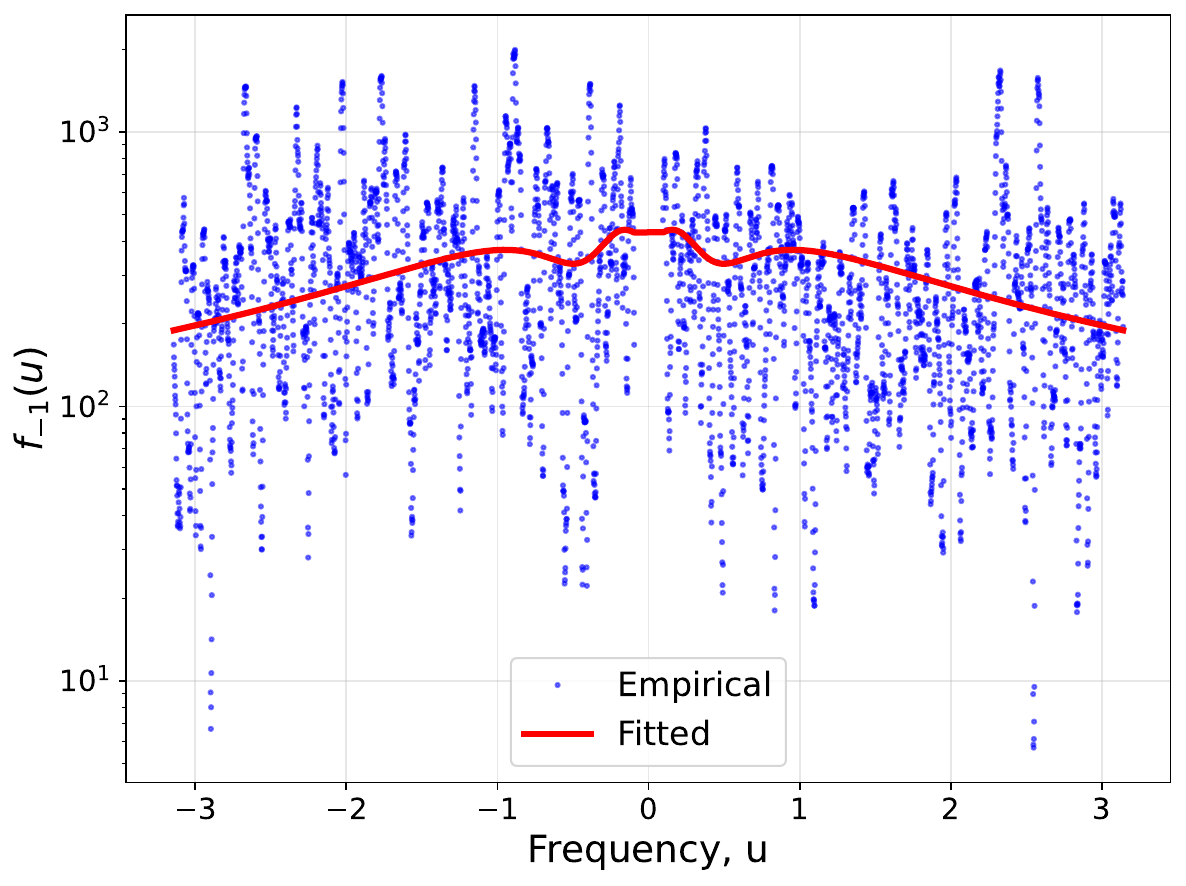}
	\end{subfigure}
	\begin{subfigure}[b]{0.48\textwidth}
		\includegraphics[scale=0.375]{./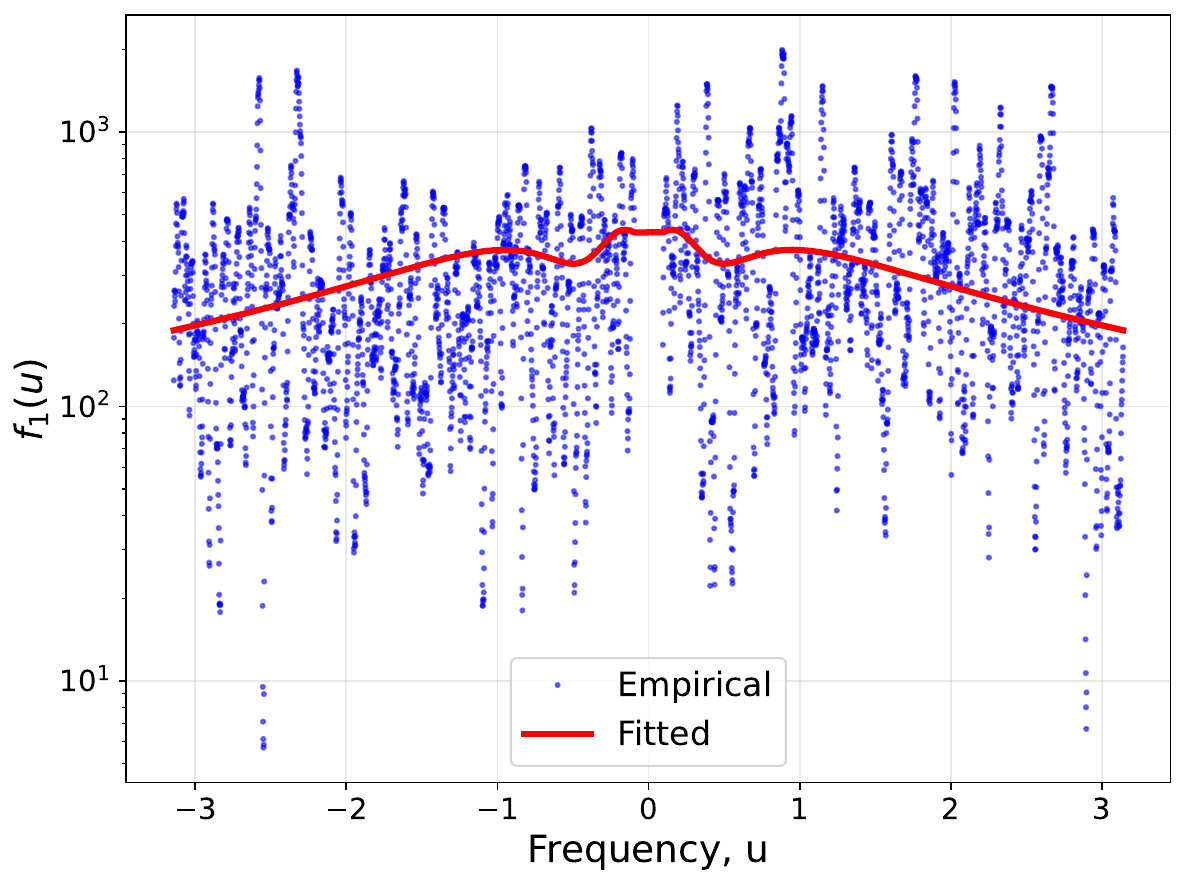}
	\end{subfigure}
	
	\vspace{0.4cm}
	
	\begin{subfigure}[b]{0.48\textwidth}
		\includegraphics[scale=0.375]{./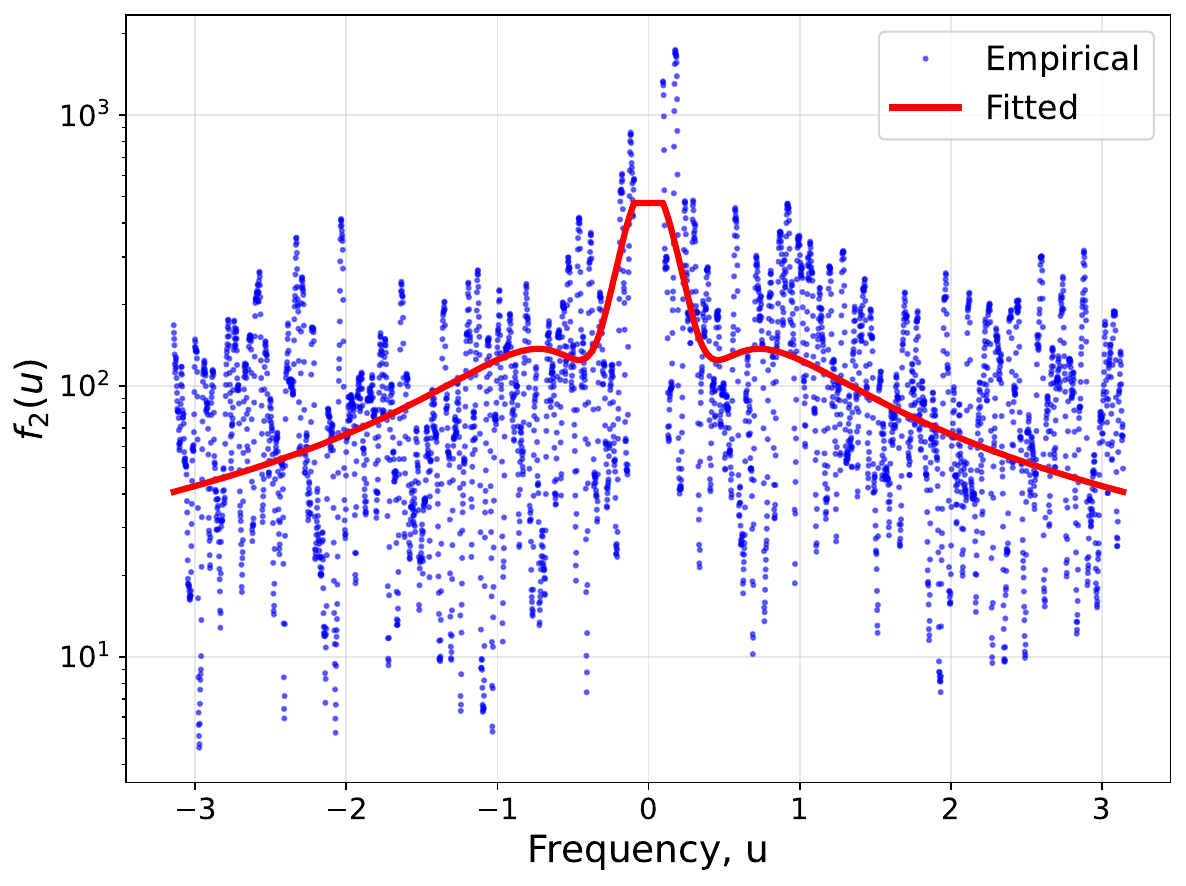}
	\end{subfigure}
	\begin{subfigure}[b]{0.48\textwidth}
		\includegraphics[scale=0.375]{./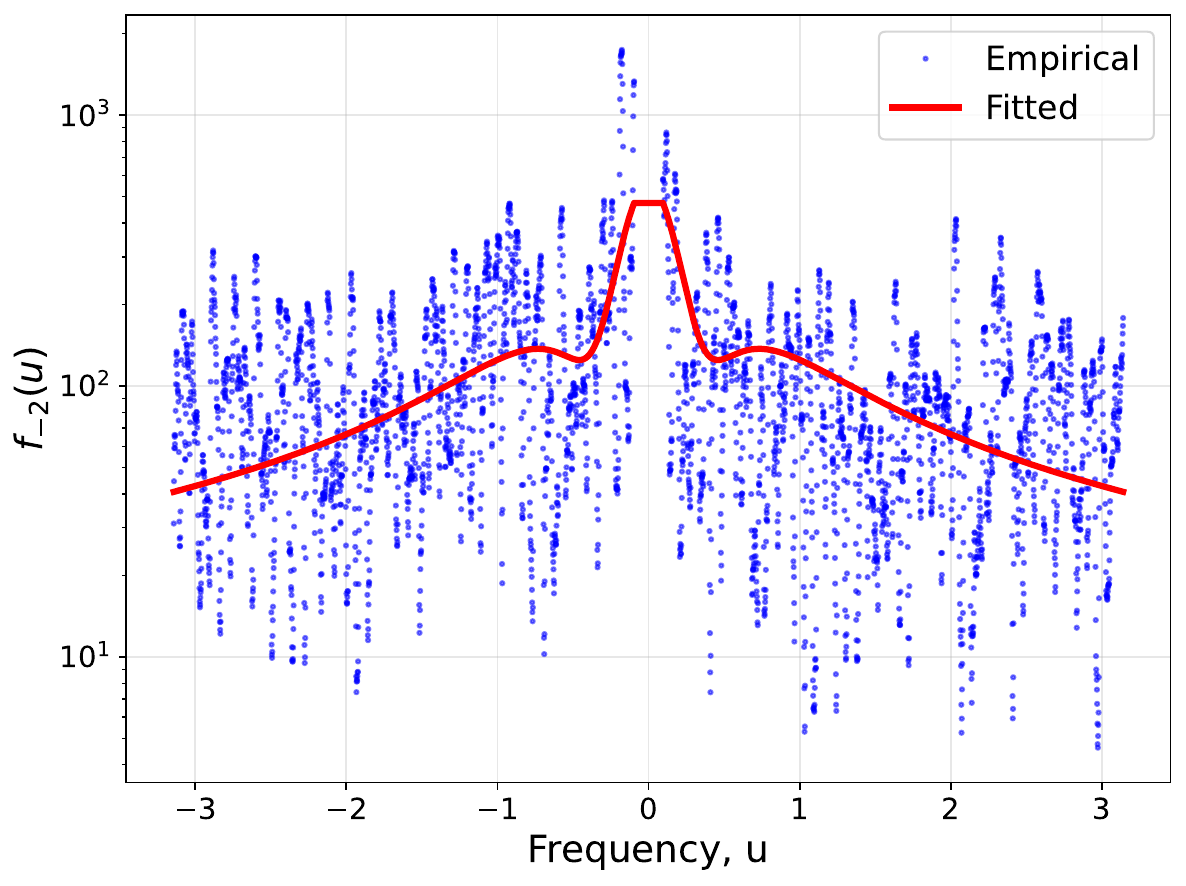}
	\end{subfigure}
	\end{center}
	\caption{Fitted spectral density at modes $n=1$ (upper left), $n=-1$ (upper right), $n=2$ (lower left), and $n=-2$ (lower right) against the empirical counterpart (right).}\label{fig:kernel_fit_appendix}
\end{figure}

\section{Simulation algorithm and implementation details}\label{app:simulation}
\begin{example}[Simulation algorithm]\label{ex:simulation_algorithm}
Let the setting be as in Proposition \ref{prop:simulation}, and fix an equispaced time-discretization $0=t_{0}<t_{1}<\cdots <t_{J}=T$, a truncation order $n\in \lbrace -N,\ldots,N\rbrace$, and a space-discretization $0<h_{1}<\cdots < h_{H}=2\pi$. Let also $Z$ be a grid of numbers in $\mathbb{R}$, representing a discretization of the integration range. 
\begin{enumerate}
\item For each $n$, simulate the complex-valued OU process $V_{n}(t,z)$ iteratively as follows.
\begin{itemize}
\item Initialize $V_{n}(0,z) = 0$ for all values of $z\in Z$ and $L_{n}(t_{0})=0$.
\item Compute the contribution of $L_{n}(t)$ as
\[
(L_{n}(t_{j})-L_{n}(t_{j-1})) \approx \sum_{l=1}^{H}e^{\mathrm{i}nh_{l}}\lambda_{\mathcal{C}}\left([t_{j}-t_{j-1}]\times [h_{l}-h_{l-1}]\right)L'_{l},
\]
where $L_{l}'$ are i.i.d. copies of the Lévy seed of $L$.
\item Obtain the next step $V_{n}(t_{j},z)$ via \eqref{eq:sim_complex_OU}, for $j=1,\ldots, J$ and $z\in Z$.
\end{itemize}
\item For each $n$, compute the Fourier coefficients $\hat{K}(z,h,n)$ for all $z\in Z$.
\item For all $j$ and $l$, numerically approximate $I_{j,l}=\int_{\mathbb{R}}\hat{K}(z_{r}+\mathrm{i}z_{i},h_{l},n)V_{n}(t_{j},z_{r}+\mathrm{i}z_{i})dz_{i}$ over the grid $Z$, for example by a trapezoidal rule.
\item For all $j$ and $l$, compute the value of the ambit field $Y_{t_{j}}(h_{l})$ as 
\[
Y_{t_{k}}(h_{l}) = \sum_{n\in \lbrace -N,\ldots ,N\rbrace}I_{j,l},
\]
where $Y_{0}(h_{l})=0$ for each $l=1,\ldots ,H$.
\end{enumerate}
\end{example}

\end{document}